
\documentclass[journal,doublecolumn]{IEEEtran}
\usepackage{amsfonts}
\usepackage{amsmath}
\usepackage{amssymb}
\usepackage{subfig}
\usepackage{caption}
\usepackage{mathrsfs}
\usepackage{graphicx}
\usepackage{graphics}
\usepackage[compress]{cite}
\usepackage{times}
\usepackage{amsthm}
\usepackage{float}
\usepackage{footmisc}
\usepackage{xcolor}
\usepackage{float}
\usepackage{algorithm}
\usepackage{algpseudocode}

\newcommand{\R}{{\rm I\!R}}

\setcounter{MaxMatrixCols}{10}

\newtheorem{lemma}{Lemma}
\newtheorem{theorem}{Theorem}
\newtheorem{remark}{Remark}
\newtheorem{definition}{Definition}
\newtheorem{corollary}{Corollary}
\newtheorem{example}{Example}

\typeout{TCILATEX Macros for Scientific Word and Scientific WorkPlace 5.5 <06 Oct 2005>.}
\typeout{NOTICE:  This macro file is NOT proprietary and may be 
freely copied and distributed.}
\makeatletter

\ifx\pdfoutput\relax\let\pdfoutput=\undefined\fi
\newcount\msipdfoutput
\ifx\pdfoutput\undefined
\else
 \ifcase\pdfoutput
 \else 
    \msipdfoutput=1
    \ifx\paperwidth\undefined
    \else
      \ifdim\paperheight=0pt\relax
      \else
        \pdfpageheight\paperheight
      \fi
      \ifdim\paperwidth=0pt\relax
      \else
        \pdfpagewidth\paperwidth
      \fi
    \fi
  \fi  
\fi

%

%
\newcount\@hour\newcount\@minute\chardef\@x10\chardef\@xv60
\def\tcitime{
\def\@time{%
  \@minute\time\@hour\@minute\divide\@hour\@xv
  \ifnum\@hour<\@x 0\fi\the\@hour:%
  \multiply\@hour\@xv\advance\@minute-\@hour
  \ifnum\@minute<\@x 0\fi\the\@minute
  }}%


\def\x@hyperref#1#2#3{%
   \catcode`\~ = 12
   \catcode`\$ = 12
   \catcode`\_ = 12
   \catcode`\# = 12
   \catcode`\& = 12
   \catcode`\% = 12
   \y@hyperref{#1}{#2}{#3}%
}

\def\y@hyperref#1#2#3#4{%
   #2\ref{#4}#3
   \catcode`\~ = 13
   \catcode`\$ = 3
   \catcode`\_ = 8
   \catcode`\# = 6
   \catcode`\& = 4
   \catcode`\% = 14
}

\@ifundefined{hyperref}{\let\hyperref\x@hyperref}{}
\@ifundefined{msihyperref}{\let\msihyperref\x@hyperref}{}

\@ifundefined{qExtProgCall}{\def\qExtProgCall#1#2#3#4#5#6{\relax}}{}
%
%
%
%
\def\QCTOpt[#1]#2{%
  \def\QCTOptB{#1}
  \def\QCTOptA{#2}
}
\def\QCTNOpt#1{%
  \def\QCTOptA{#1}
  \let\QCTOptB\empty
}
\def\Qct{%
  \@ifnextchar[{%
    \QCTOpt}{\QCTNOpt}
}
\def\QCBOpt[#1]#2{%
  \def\QCBOptB{#1}%
  \def\QCBOptA{#2}%
}
\def\QCBNOpt#1{%
  \def\QCBOptA{#1}%
  \let\QCBOptB\empty
}
\def\Qcb{%
  \@ifnextchar[{%
    \QCBOpt}{\QCBNOpt}%
}
\def\PrepCapArgs{%
  \ifx\QCBOptA\empty
    \ifx\QCTOptA\empty
      {}%
    \else
      \ifx\QCTOptB\empty
        {\QCTOptA}%
      \else
        [\QCTOptB]{\QCTOptA}%
      \fi
    \fi
  \else
    \ifx\QCBOptA\empty
      {}%
    \else
      \ifx\QCBOptB\empty
        {\QCBOptA}%
      \else
        [\QCBOptB]{\QCBOptA}%
      \fi
    \fi
  \fi
}
\newcount\GRAPHICSTYPE
\GRAPHICSTYPE=\z@
\def\GRAPHICSPS#1{%
 \ifcase\GRAPHICSTYPE
   \special{ps: #1}%
 \or
   \special{language "PS", include "#1"}%
 \fi
}%
%
%
%

\def\graffile#1#2#3#4{%
    \bgroup
	   \@inlabelfalse
       \leavevmode
       \@ifundefined{bbl@deactivate}{\def~{\string~}}{\activesoff}%
        \raise -#4 \BOXTHEFRAME{%
           \hbox to #2{\raise #3\hbox to #2{\null #1\hfil}}}%
    \egroup
}%
%
\def\draftbox#1#2#3#4{%
 \leavevmode\raise -#4 \hbox{%
  \frame{\rlap{\protect\tiny #1}\hbox to #2%
   {\vrule height#3 width\z@ depth\z@\hfil}%
  }%
 }%
}%
\newcount\@msidraft
\@msidraft=\z@
\let\nographics=\@msidraft
\newif\ifwasdraft
\wasdraftfalse

\def\GRAPHIC#1#2#3#4#5{%
   \ifnum\@msidraft=\@ne\draftbox{#2}{#3}{#4}{#5}%
   \else\graffile{#1}{#3}{#4}{#5}%
   \fi
}
\def\addtoLaTeXparams#1{%
    \edef\LaTeXparams{\LaTeXparams #1}}%
%

\newif\ifBoxFrame \BoxFramefalse
\newif\ifOverFrame \OverFramefalse
\newif\ifUnderFrame \UnderFramefalse

\def\BOXTHEFRAME#1{%
   \hbox{%
      \ifBoxFrame
         \frame{#1}%
      \else
         {#1}%
      \fi
   }%
}

\def\doFRAMEparams#1{\BoxFramefalse\OverFramefalse\UnderFramefalse\readFRAMEparams#1\end}%
\def\readFRAMEparams#1{%
 \ifx#1\end%
  \let\next=\relax
  \else
  \ifx#1i\dispkind=\z@\fi
  \ifx#1d\dispkind=\@ne\fi
  \ifx#1f\dispkind=\tw@\fi
  \ifx#1t\addtoLaTeXparams{t}\fi
  \ifx#1b\addtoLaTeXparams{b}\fi
  \ifx#1p\addtoLaTeXparams{p}\fi
  \ifx#1h\addtoLaTeXparams{h}\fi
  \ifx#1X\BoxFrametrue\fi
  \ifx#1O\OverFrametrue\fi
  \ifx#1U\UnderFrametrue\fi
  \ifx#1w
    \ifnum\@msidraft=1\wasdrafttrue\else\wasdraftfalse\fi
    \@msidraft=\@ne
  \fi
  \let\next=\readFRAMEparams
  \fi
 \next
 }%
%

\def\IFRAME#1#2#3#4#5#6{%
      \bgroup
      \let\QCTOptA\empty
      \let\QCTOptB\empty
      \let\QCBOptA\empty
      \let\QCBOptB\empty
      #6%
      \parindent=0pt
      \leftskip=0pt
      \rightskip=0pt
      \setbox0=\hbox{\QCBOptA}%
      \@tempdima=#1\relax
      \ifOverFrame
          \typeout{This is not implemented yet}%
          \show\HELP
      \else
         \ifdim\wd0>\@tempdima
            \advance\@tempdima by \@tempdima
            \ifdim\wd0 >\@tempdima
               \setbox1 =\vbox{%
                  \unskip\hbox to \@tempdima{\hfill\GRAPHIC{#5}{#4}{#1}{#2}{#3}\hfill}%
                  \unskip\hbox to \@tempdima{\parbox[b]{\@tempdima}{\QCBOptA}}%
               }%
               \wd1=\@tempdima
            \else
               \textwidth=\wd0
               \setbox1 =\vbox{%
                 \noindent\hbox to \wd0{\hfill\GRAPHIC{#5}{#4}{#1}{#2}{#3}\hfill}\\%
                 \noindent\hbox{\QCBOptA}%
               }%
               \wd1=\wd0
            \fi
         \else
            \ifdim\wd0>0pt
              \hsize=\@tempdima
              \setbox1=\vbox{%
                \unskip\GRAPHIC{#5}{#4}{#1}{#2}{0pt}%
                \break
                \unskip\hbox to \@tempdima{\hfill \QCBOptA\hfill}%
              }%
              \wd1=\@tempdima
           \else
              \hsize=\@tempdima
              \setbox1=\vbox{%
                \unskip\GRAPHIC{#5}{#4}{#1}{#2}{0pt}%
              }%
              \wd1=\@tempdima
           \fi
         \fi
         \@tempdimb=\ht1
         \advance\@tempdimb by -#2
         \advance\@tempdimb by #3
         \leavevmode
         \raise -\@tempdimb \hbox{\box1}%
      \fi
      \egroup%
}%
%
\def\DFRAME#1#2#3#4#5{%
  \vspace\topsep
  \hfil\break
  \bgroup
     \leftskip\@flushglue
	 \rightskip\@flushglue
	 \parindent\z@
	 \parfillskip\z@skip
     \let\QCTOptA\empty
     \let\QCTOptB\empty
     \let\QCBOptA\empty
     \let\QCBOptB\empty
	 \vbox\bgroup
        \ifOverFrame 
           #5\QCTOptA\par
        \fi
        \GRAPHIC{#4}{#3}{#1}{#2}{\z@}%
        \ifUnderFrame 
           \break#5\QCBOptA
        \fi
	 \egroup
  \egroup
  \vspace\topsep
  \break
}%
%
\def\FFRAME#1#2#3#4#5#6#7{%
  \@ifundefined{floatstyle}
    {
     \begin{figure}[#1]%
    }
    {
	 \ifx#1h
      \begin{figure}[H]%
	 \else
      \begin{figure}[#1]%
	 \fi
	}
  \let\QCTOptA\empty
  \let\QCTOptB\empty
  \let\QCBOptA\empty
  \let\QCBOptB\empty
  \ifOverFrame
    #4
    \ifx\QCTOptA\empty
    \else
      \ifx\QCTOptB\empty
        \caption{\QCTOptA}%
      \else
        \caption[\QCTOptB]{\QCTOptA}%
      \fi
    \fi
    \ifUnderFrame\else
      \label{#5}%
    \fi
  \else
    \UnderFrametrue%
  \fi
  \begin{center}\GRAPHIC{#7}{#6}{#2}{#3}{\z@}\end{center}%
  \ifUnderFrame
    #4
    \ifx\QCBOptA\empty
      \caption{}%
    \else
      \ifx\QCBOptB\empty
        \caption{\QCBOptA}%
      \else
        \caption[\QCBOptB]{\QCBOptA}%
      \fi
    \fi
    \label{#5}%
  \fi
  \end{figure}%
 }%
%
%
%
%
%
\newcount\dispkind%

\def\makeactives{
  \catcode`\"=\active
  \catcode`\;=\active
  \catcode`\:=\active
  \catcode`\'=\active
  \catcode`\~=\active
}
\bgroup
   \makeactives
   \gdef\activesoff{%
      \def"{\string"}%
      \def;{\string;}%
      \def:{\string:}%
      \def'{\string'}%
      \def~{\string~}%
    }
\egroup

\def\FRAME#1#2#3#4#5#6#7#8{%
 \bgroup
 \ifnum\@msidraft=\@ne
   \wasdrafttrue
 \else
   \wasdraftfalse%
 \fi
 \def\LaTeXparams{}%
 \dispkind=\z@
 \def\LaTeXparams{}%
 \doFRAMEparams{#1}%
 \ifnum\dispkind=\z@\IFRAME{#2}{#3}{#4}{#7}{#8}{#5}\else
  \ifnum\dispkind=\@ne\DFRAME{#2}{#3}{#7}{#8}{#5}\else
   \ifnum\dispkind=\tw@
    \edef\@tempa{\noexpand\FFRAME{\LaTeXparams}}%
    \@tempa{#2}{#3}{#5}{#6}{#7}{#8}%
    \fi
   \fi
  \fi
  \ifwasdraft\@msidraft=1\else\@msidraft=0\fi{}%
  \egroup
 }%
%

\def\TEXUX#1{"texux"}

%
%
%
%
%
%
%
%
%
%

%
\long\def\QQQ#1#2{%
     \long\expandafter\def\csname#1\endcsname{#2}}%
\@ifundefined{QTP}{\def\QTP#1{}}{}
\@ifundefined{QEXCLUDE}{\def\QEXCLUDE#1{}}{}
\@ifundefined{Qlb}{}{}
\@ifundefined{Qlt}{}{}
\long\def\QQA#1#2{}%
\def\QTR#1#2{{\csname#1\endcsname {#2}}}%

%
%
\def\EXPAND#1[#2]#3{}%
\def\NOEXPAND#1[#2]#3{}%
\def\LaTeXparent#1{}%
\def\ChildStyles#1{}%
\def\ChildDefaults#1{}%
\def\QTagDef#1#2#3{}%

\@ifundefined{correctchoice}{}{}
\@ifundefined{HTML}{\def\HTML#1{\relax}}{}
\@ifundefined{TCIIcon}{\def\TCIIcon#1#2#3#4{\relax}}{}
\if@compatibility
  \typeout{Not defining UNICODE  U or CustomNote commands for LaTeX 2.09.}
\else
  \providecommand{\UNICODE}[2][]{\protect\rule{.1in}{.1in}}
  \providecommand{\U}[1]{\protect\rule{.1in}{.1in}}
  
\fi

\@ifundefined{lambdabar}{
      
   }{}

%
\@ifundefined{StyleEditBeginDoc}{}{}
%
\def\QQfnmark#1{\footnotemark}

%
%
\@ifundefined{TCIMAKEINDEX}{}{\makeindex}%
%
\@ifundefined{abstract}{%
 \def\abstract{%
  \if@twocolumn
   \section*{Abstract (Not appropriate in this style!)}%
   \else \small 
   \begin{center}{\bf Abstract\vspace{-.5em}\vspace{\z@}}\end{center}%
   \quotation 
   \fi
  }%
 }{%
 }%
\@ifundefined{endabstract}{\def\endabstract
  {\if@twocolumn\else\endquotation\fi}}{}%
\@ifundefined{maketitle}{\def\maketitle#1{}}{}%
\@ifundefined{affiliation}{\def\affiliation#1{}}{}%
\@ifundefined{proof}{}{}%
\@ifundefined{endproof}{}{}%
\@ifundefined{newfield}{\def\newfield#1#2{}}{}%
\@ifundefined{chapter}{\def\chapter#1{\par(Chapter head:)#1\par }%
 \newcount\c@chapter}{}%
\@ifundefined{part}{\def\part#1{\par(Part head:)#1\par }}{}%
\@ifundefined{section}{\def\section#1{\par(Section head:)#1\par }}{}%
\@ifundefined{subsection}{\def\subsection#1%
 {\par(Subsection head:)#1\par }}{}%
\@ifundefined{subsubsection}{\def\subsubsection#1%
 {\par(Subsubsection head:)#1\par }}{}%
\@ifundefined{paragraph}{\def\paragraph#1%
 {\par(Subsubsubsection head:)#1\par }}{}%
\@ifundefined{subparagraph}{\def\subparagraph#1%
 {\par(Subsubsubsubsection head:)#1\par }}{}%
\@ifundefined{therefore}{}{}%
\@ifundefined{backepsilon}{}{}%
\@ifundefined{yen}{}{}%
\@ifundefined{registered}{%
   \def\registered{\relax\ifmmode{}\r@gistered
                    \else$\m@th\r@gistered$\fi}%
 \def\r@gistered{^{\ooalign
  {\hfil\raise.07ex\hbox{$\scriptstyle\rm\text{R}$}\hfil\crcr
  \mathhexbox20D}}}}{}%
\@ifundefined{Eth}{}{}%
\@ifundefined{eth}{}{}%
\@ifundefined{Thorn}{}{}%
\@ifundefined{thorn}{}{}%
%
\@ifundefined{degree}{}{}%
%
\newdimen\theight
\@ifundefined{Column}{\def\Column{%
 \vadjust{\setbox\z@=\hbox{\scriptsize\quad\quad tcol}%
  \theight=\ht\z@\advance\theight by \dp\z@\advance\theight by \lineskip
  \kern -\theight \vbox to \theight{%
   \rightline{\rlap{\box\z@}}%
   \vss
   }%
  }%
 }}{}%
\@ifundefined{qed}{\def\qed{%
 \ifhmode\unskip\nobreak\fi\ifmmode\ifinner\else\hskip5\p@\fi\fi
 \hbox{\hskip5\p@\vrule width4\p@ height6\p@ depth1.5\p@\hskip\p@}%
 }}{}%
\@ifundefined{cents}{}{}%
\@ifundefined{tciLaplace}{}{}%
\@ifundefined{tciFourier}{}{}%
\@ifundefined{textcurrency}{}{}%
\@ifundefined{texteuro}{}{}%
\@ifundefined{euro}{}{}%
\@ifundefined{textfranc}{}{}%
\@ifundefined{textlira}{}{}%
\@ifundefined{textpeseta}{}{}%
\@ifundefined{miss}{\def\miss{\hbox{\vrule height2\p@ width 2\p@ depth\z@}}}{}%
\@ifundefined{vvert}{}{}
\@ifundefined{tcol}{\def\tcol#1{{\baselineskip=6\p@ \vcenter{#1}} \Column}}{}%
\@ifundefined{dB}{}{}
\@ifundefined{mB}{}{}
\@ifundefined{nB}{}{}
\@ifundefined{note}{}{}%
\def\newfmtname{LaTeX2e}
%
\ifx\fmtname\newfmtname
  \DeclareOldFontCommand{\rm}{\normalfont\rmfamily}{\mathrm}
  \DeclareOldFontCommand{\sf}{\normalfont\sffamily}{\mathsf}
  \DeclareOldFontCommand{\tt}{\normalfont\ttfamily}{\mathtt}
  \DeclareOldFontCommand{\bf}{\normalfont\bfseries}{\mathbf}
  \DeclareOldFontCommand{\it}{\normalfont\itshape}{\mathit}
  \DeclareOldFontCommand{\sl}{\normalfont\slshape}{\@nomath\sl}
  \DeclareOldFontCommand{\sc}{\normalfont\scshape}{\@nomath\sc}
\fi

%

\def\alpha{{\Greekmath 010B}}%
\def\beta{{\Greekmath 010C}}%
\def\gamma{{\Greekmath 010D}}%
\def\delta{{\Greekmath 010E}}%
\def\epsilon{{\Greekmath 010F}}%
\def\zeta{{\Greekmath 0110}}%
\def\eta{{\Greekmath 0111}}%
\def\theta{{\Greekmath 0112}}%
\def\iota{{\Greekmath 0113}}%
\def\kappa{{\Greekmath 0114}}%
\def\lambda{{\Greekmath 0115}}%
\def\mu{{\Greekmath 0116}}%
\def\nu{{\Greekmath 0117}}%
\def\xi{{\Greekmath 0118}}%
\def\pi{{\Greekmath 0119}}%
\def\rho{{\Greekmath 011A}}%
\def\sigma{{\Greekmath 011B}}%
\def\tau{{\Greekmath 011C}}%
\def\upsilon{{\Greekmath 011D}}%
\def\phi{{\Greekmath 011E}}%
\def\chi{{\Greekmath 011F}}%
\def\psi{{\Greekmath 0120}}%
\def\omega{{\Greekmath 0121}}%
\def\varepsilon{{\Greekmath 0122}}%
\def\vartheta{{\Greekmath 0123}}%
\def\varpi{{\Greekmath 0124}}%
\def\varrho{{\Greekmath 0125}}%
\def\varsigma{{\Greekmath 0126}}%
\def\varphi{{\Greekmath 0127}}%

\def\nabla{{\Greekmath 0272}}
\def\FindBoldGroup{%
   {\setbox0=\hbox{$\mathbf{x\global\edef\theboldgroup{\the\mathgroup}}$}}%
}

\def\Greekmath#1#2#3#4{%
    \if@compatibility
        \ifnum\mathgroup=\symbold
           \mathchoice{\mbox{\boldmath$\displaystyle\mathchar"#1#2#3#4$}}%
                      {\mbox{\boldmath$\textstyle\mathchar"#1#2#3#4$}}%
                      {\mbox{\boldmath$\scriptstyle\mathchar"#1#2#3#4$}}%
                      {\mbox{\boldmath$\scriptscriptstyle\mathchar"#1#2#3#4$}}%
        \else
           \mathchar"#1#2#3#4%
        \fi 
    \else 
        \FindBoldGroup
        \ifnum\mathgroup=\theboldgroup 
           \mathchoice{\mbox{\boldmath$\displaystyle\mathchar"#1#2#3#4$}}%
                      {\mbox{\boldmath$\textstyle\mathchar"#1#2#3#4$}}%
                      {\mbox{\boldmath$\scriptstyle\mathchar"#1#2#3#4$}}%
                      {\mbox{\boldmath$\scriptscriptstyle\mathchar"#1#2#3#4$}}%
        \else
           \mathchar"#1#2#3#4%
        \fi     	    
	  \fi}

\newif\ifGreekBold  \GreekBoldfalse
\let\SAVEPBF=\pbf
\def\pbf{\GreekBoldtrue\SAVEPBF}%

\@ifundefined{theorem}{\newtheorem{theorem}{Theorem}}{}
\@ifundefined{lemma}{\newtheorem{lemma}[theorem]{Lemma}}{}
\@ifundefined{corollary}{}{}
\@ifundefined{conjecture}{}{}
\@ifundefined{proposition}{\newtheorem{proposition}[theorem]{Proposition}}{}
\@ifundefined{axiom}{}{}
\@ifundefined{remark}{\newtheorem{remark}{Remark}}{}
\@ifundefined{example}{}{}
\@ifundefined{exercise}{}{}
\@ifundefined{definition}{\newtheorem{definition}{Definition}}{}

\@ifundefined{mathletters}{%
  \newcounter{equationnumber}  
  \def\mathletters{%
     \addtocounter{equation}{1}
     \edef\@currentlabel{\theequation}%
     \setcounter{equationnumber}{\c@equation}
     \setcounter{equation}{0}%
     \edef\theequation{\@currentlabel\noexpand\alph{equation}}%
  }
  
}{}

\@ifundefined{BibTeX}{%
    \def\BibTeX{{\rm B\kern-.05em{\sc i\kern-.025em b}\kern-.08em
                 T\kern-.1667em\lower.7ex\hbox{E}\kern-.125emX}}}{}%
\@ifundefined{AmS}%
    {\def\AmS{{\protect\usefont{OMS}{cmsy}{m}{n}%
                A\kern-.1667em\lower.5ex\hbox{M}\kern-.125emS}}}{}%
\@ifundefined{AmSTeX}{}{}%
%

\def\@@eqncr{\let\@tempa\relax
    \ifcase\@eqcnt \def\@tempa{& & &}\or \def\@tempa{& &}%
      \else \def\@tempa{&}\fi
     \@tempa
     \if@eqnsw
        \iftag@
           \@taggnum
        \else
           \@eqnnum\stepcounter{equation}%
        \fi
     \fi
     \global\tag@false
     \global\@eqnswtrue
     \global\@eqcnt\z@\cr}

\def\TCItag{\@ifnextchar*{\@TCItagstar}{\@TCItag}}
\def\@TCItag#1{%
    \global\tag@true
    \global\def\@taggnum{(#1)}%
    \global\def\@currentlabel{#1}}
\def\@TCItagstar*#1{%
    \global\tag@true
    \global\def\@taggnum{#1}%
    \global\def\@currentlabel{#1}}
%
%
%
%
%
%
%
%
%
%
%
%
%
%
%
%
%
%
%

\def\tint{\msi@int\textstyle\int}%
\def\tiint{\msi@int\textstyle\iint}%
\def\tiiint{\msi@int\textstyle\iiint}%
\def\tiiiint{\msi@int\textstyle\iiiint}%
\def\tidotsint{\msi@int\textstyle\idotsint}%
\def\toint{\msi@int\textstyle\oint}%

%
%
%
%
%
%
%
%
%
%
%
%
%
%
%

\newtoks\temptoksa
\newtoks\temptoksb
\newtoks\temptoksc

\def\msi@int#1#2{%
 \def\@temp{{#1#2\the\temptoksc_{\the\temptoksa}^{\the\temptoksb}}}%
 \futurelet\@nextcs
 \@int
}

\def\@int{%
   \ifx\@nextcs\limits
      \typeout{Found limits}%
      \temptoksc={\limits}%
	  \let\@next\@intgobble%
   \else\ifx\@nextcs\nolimits
      \typeout{Found nolimits}%
      \temptoksc={\nolimits}%
	  \let\@next\@intgobble%
   \else
      \typeout{Did not find limits or no limits}%
      \temptoksc={}%
      \let\@next\msi@limits%
   \fi\fi
   \@next   
}%

\def\@intgobble#1{%
   \typeout{arg is #1}%
   \msi@limits
}

\def\msi@limits{%
   \temptoksa={}%
   \temptoksb={}%
   \@ifnextchar_{\@limitsa}{\@limitsb}%
}

\def\@limitsa_#1{%
   \temptoksa={#1}%
   \@ifnextchar^{\@limitsc}{\@temp}%
}

\def\@limitsb{%
   \@ifnextchar^{\@limitsc}{\@temp}%
}

\def\@limitsc^#1{%
   \temptoksb={#1}%
   \@ifnextchar_{\@limitsd}{\@temp}%
}

\def\@limitsd_#1{%
   \temptoksa={#1}%
   \@temp
}

\def\dint{\msi@int\displaystyle\int}%
\def\diint{\msi@int\displaystyle\iint}%
\def\diiint{\msi@int\displaystyle\iiint}%
\def\diiiint{\msi@int\displaystyle\iiiint}%
\def\didotsint{\msi@int\displaystyle\idotsint}%
\def\doint{\msi@int\displaystyle\oint}%

\if@compatibility\else
  \RequirePackage{amsmath}
\fi

\def\ExitTCILatex{\makeatother }

\bgroup
\ifx\ds@amstex\relax
   \message{amstex already loaded}\aftergroup\ExitTCILatex
\else
   \@ifpackageloaded{amsmath}%
      {\if@compatibility\message{amsmath already loaded}\fi\aftergroup\ExitTCILatex}
      {}
   \@ifpackageloaded{amstex}%
      {\if@compatibility\message{amstex already loaded}\fi\aftergroup\ExitTCILatex}
      {}
   \@ifpackageloaded{amsgen}%
      {\if@compatibility\message{amsgen already loaded}\fi\aftergroup\ExitTCILatex}
      {}
\fi
\egroup


\typeout{TCILATEX defining AMS-like constructs in LaTeX 2.09 COMPATIBILITY MODE}
%
%
\let\DOTSI\relax
\def\RIfM@{\relax\ifmmode}%
\def\FN@{\futurelet\next}%
\newcount\intno@
\def\iint{\DOTSI\intno@\tw@\FN@\ints@}%
\def\iiint{\DOTSI\intno@\thr@@\FN@\ints@}%
\def\iiiint{\DOTSI\intno@4 \FN@\ints@}%
\def\idotsint{\DOTSI\intno@\z@\FN@\ints@}%
\def\ints@{\findlimits@\ints@@}%
\newif\iflimtoken@
\newif\iflimits@
\def\findlimits@{\limtoken@true\ifx\next\limits\limits@true
 \else\ifx\next\nolimits\limits@false\else
 \limtoken@false\ifx\ilimits@\nolimits\limits@false\else
 \ifinner\limits@false\else\limits@true\fi\fi\fi\fi}%
\def\multint@{\int\ifnum\intno@=\z@\intdots@                          
 \else\intkern@\fi                                                    
 \ifnum\intno@>\tw@\int\intkern@\fi                                   
 \ifnum\intno@>\thr@@\int\intkern@\fi                                 
 \int}
\def\multintlimits@{\intop\ifnum\intno@=\z@\intdots@\else\intkern@\fi
 \ifnum\intno@>\tw@\intop\intkern@\fi
 \ifnum\intno@>\thr@@\intop\intkern@\fi\intop}%
\def\intic@{%
    \mathchoice{\hskip.5em}{\hskip.4em}{\hskip.4em}{\hskip.4em}}%
\def\negintic@{\mathchoice
 {\hskip-.5em}{\hskip-.4em}{\hskip-.4em}{\hskip-.4em}}%
\def\ints@@{\iflimtoken@                                              
 \def\ints@@@{\iflimits@\negintic@
   \mathop{\intic@\multintlimits@}\limits                             
  \else\multint@\nolimits\fi                                          
  \eat@}
 \else                                                                
 \def\ints@@@{\iflimits@\negintic@
  \mathop{\intic@\multintlimits@}\limits\else
  \multint@\nolimits\fi}\fi\ints@@@}%
\def\intkern@{\mathchoice{\!\!\!}{\!\!}{\!\!}{\!\!}}%
\def\plaincdots@{\mathinner{\cdotp\cdotp\cdotp}}%
\def\intdots@{\mathchoice{\plaincdots@}%
 {{\cdotp}\mkern1.5mu{\cdotp}\mkern1.5mu{\cdotp}}%
 {{\cdotp}\mkern1mu{\cdotp}\mkern1mu{\cdotp}}%
 {{\cdotp}\mkern1mu{\cdotp}\mkern1mu{\cdotp}}}%
%
%
%
\def\RIfM@{\relax\protect\ifmmode}
\def\text{\RIfM@\expandafter\text@\else\expandafter\mbox\fi}
\let\nfss@text\text
\def\text@#1{\mathchoice
   {\textdef@\displaystyle\f@size{#1}}%
   {\textdef@\textstyle\tf@size{\firstchoice@false #1}}%
   {\textdef@\textstyle\sf@size{\firstchoice@false #1}}%
   {\textdef@\textstyle \ssf@size{\firstchoice@false #1}}%
   \glb@settings}

\def\textdef@#1#2#3{\hbox{{%
                    \everymath{#1}%
                    \let\f@size#2\selectfont
                    #3}}}
\newif\iffirstchoice@
\firstchoice@true
%
%
\def\Let@{\relax\iffalse{\fi\let\\=\cr\iffalse}\fi}%
\def\vspace@{\def\vspace##1{\crcr\noalign{\vskip##1\relax}}}%
\def\multilimits@{\bgroup\vspace@\Let@
 \baselineskip\fontdimen10 \scriptfont\tw@
 \advance\baselineskip\fontdimen12 \scriptfont\tw@
 \lineskip\thr@@\fontdimen8 \scriptfont\thr@@
 \lineskiplimit\lineskip
 \vbox\bgroup\ialign\bgroup\hfil$\m@th\scriptstyle{##}$\hfil\crcr}%
\def\Sb{_\multilimits@}%
\def\endSb{\crcr\egroup\egroup\egroup}%
\def\Sp{^\multilimits@}%

%
%
%
\newdimen\ex@
\ex@.2326ex
\def\rightarrowfill@#1{$#1\m@th\mathord-\mkern-6mu\cleaders
 \hbox{$#1\mkern-2mu\mathord-\mkern-2mu$}\hfill
 \mkern-6mu\mathord\rightarrow$}%
\def\leftarrowfill@#1{$#1\m@th\mathord\leftarrow\mkern-6mu\cleaders
 \hbox{$#1\mkern-2mu\mathord-\mkern-2mu$}\hfill\mkern-6mu\mathord-$}%
\def\leftrightarrowfill@#1{$#1\m@th\mathord\leftarrow
\mkern-6mu\cleaders
 \hbox{$#1\mkern-2mu\mathord-\mkern-2mu$}\hfill
 \mkern-6mu\mathord\rightarrow$}%
\def\overrightarrow{\mathpalette\overrightarrow@}%
\def\overrightarrow@#1#2{\vbox{\ialign{##\crcr\rightarrowfill@#1\crcr
 \noalign{\kern-\ex@\nointerlineskip}$\m@th\hfil#1#2\hfil$\crcr}}}%

\def\overleftarrow{\mathpalette\overleftarrow@}%
\def\overleftarrow@#1#2{\vbox{\ialign{##\crcr\leftarrowfill@#1\crcr
 \noalign{\kern-\ex@\nointerlineskip}$\m@th\hfil#1#2\hfil$\crcr}}}%
\def\overleftrightarrow{\mathpalette\overleftrightarrow@}%
\def\overleftrightarrow@#1#2{\vbox{\ialign{##\crcr
   \leftrightarrowfill@#1\crcr
 \noalign{\kern-\ex@\nointerlineskip}$\m@th\hfil#1#2\hfil$\crcr}}}%
\def\underrightarrow{\mathpalette\underrightarrow@}%
\def\underrightarrow@#1#2{\vtop{\ialign{##\crcr$\m@th\hfil#1#2\hfil
  $\crcr\noalign{\nointerlineskip}\rightarrowfill@#1\crcr}}}%

\def\underleftarrow{\mathpalette\underleftarrow@}%
\def\underleftarrow@#1#2{\vtop{\ialign{##\crcr$\m@th\hfil#1#2\hfil
  $\crcr\noalign{\nointerlineskip}\leftarrowfill@#1\crcr}}}%
\def\underleftrightarrow{\mathpalette\underleftrightarrow@}%
\def\underleftrightarrow@#1#2{\vtop{\ialign{##\crcr$\m@th
  \hfil#1#2\hfil$\crcr
 \noalign{\nointerlineskip}\leftrightarrowfill@#1\crcr}}}%

\def\qopnamewl@#1{\mathop{\operator@font#1}\nlimits@}
\let\nlimits@\displaylimits
\def\setboxz@h{\setbox\z@\hbox}

\def\varlim@#1#2{\mathop{\vtop{\ialign{##\crcr
 \hfil$#1\m@th\operator@font lim$\hfil\crcr
 \noalign{\nointerlineskip}#2#1\crcr
 \noalign{\nointerlineskip\kern-\ex@}\crcr}}}}

 \def\rightarrowfill@#1{\m@th\setboxz@h{$#1-$}\ht\z@\z@
  $#1\copy\z@\mkern-6mu\cleaders
  \hbox{$#1\mkern-2mu\box\z@\mkern-2mu$}\hfill
  \mkern-6mu\mathord\rightarrow$}
\def\leftarrowfill@#1{\m@th\setboxz@h{$#1-$}\ht\z@\z@
  $#1\mathord\leftarrow\mkern-6mu\cleaders
  \hbox{$#1\mkern-2mu\copy\z@\mkern-2mu$}\hfill
  \mkern-6mu\box\z@$}

\def\projlim{\qopnamewl@{proj\,lim}}
\def\injlim{\qopnamewl@{inj\,lim}}
\def\varinjlim{\mathpalette\varlim@\rightarrowfill@}
\def\varprojlim{\mathpalette\varlim@\leftarrowfill@}
\def\varliminf{\mathpalette\varliminf@{}}
\def\varliminf@#1{\mathop{\underline{\vrule\@depth.2\ex@\@width\z@
   \hbox{$#1\m@th\operator@font lim$}}}}
\def\varlimsup{\mathpalette\varlimsup@{}}
\def\varlimsup@#1{\mathop{\overline
  {\hbox{$#1\m@th\operator@font lim$}}}}

%
%
%
%
%
%
\begingroup \catcode `|=0 \catcode `[= 1
\catcode`]=2 \catcode `\{=12 \catcode `\}=12
\catcode`\\=12 
|gdef|@alignverbatim#1\end{align}[#1|end[align]]
|gdef|@salignverbatim#1\end{align*}[#1|end[align*]]

|gdef|@alignatverbatim#1\end{alignat}[#1|end[alignat]]
|gdef|@salignatverbatim#1\end{alignat*}[#1|end[alignat*]]

|gdef|@xalignatverbatim#1\end{xalignat}[#1|end[xalignat]]
|gdef|@sxalignatverbatim#1\end{xalignat*}[#1|end[xalignat*]]

|gdef|@gatherverbatim#1\end{gather}[#1|end[gather]]
|gdef|@sgatherverbatim#1\end{gather*}[#1|end[gather*]]

|gdef|@gatherverbatim#1\end{gather}[#1|end[gather]]
|gdef|@sgatherverbatim#1\end{gather*}[#1|end[gather*]]

|gdef|@multilineverbatim#1\end{multiline}[#1|end[multiline]]
|gdef|@smultilineverbatim#1\end{multiline*}[#1|end[multiline*]]

|gdef|@arraxverbatim#1\end{arrax}[#1|end[arrax]]
|gdef|@sarraxverbatim#1\end{arrax*}[#1|end[arrax*]]

|gdef|@tabulaxverbatim#1\end{tabulax}[#1|end[tabulax]]
|gdef|@stabulaxverbatim#1\end{tabulax*}[#1|end[tabulax*]]

|endgroup

\def\align{\@verbatim \frenchspacing\@vobeyspaces \@alignverbatim
You are using the "align" environment in a style in which it is not defined.}

\@namedef{align*}{\@verbatim\@salignverbatim
You are using the "align*" environment in a style in which it is not defined.}
\expandafter\let\csname endalign*\endcsname =\endtrivlist

\def\alignat{\@verbatim \frenchspacing\@vobeyspaces \@alignatverbatim
You are using the "alignat" environment in a style in which it is not defined.}

\@namedef{alignat*}{\@verbatim\@salignatverbatim
You are using the "alignat*" environment in a style in which it is not defined.}
\expandafter\let\csname endalignat*\endcsname =\endtrivlist

\def\xalignat{\@verbatim \frenchspacing\@vobeyspaces \@xalignatverbatim
You are using the "xalignat" environment in a style in which it is not defined.}

\@namedef{xalignat*}{\@verbatim\@sxalignatverbatim
You are using the "xalignat*" environment in a style in which it is not defined.}
\expandafter\let\csname endxalignat*\endcsname =\endtrivlist

\def\gather{\@verbatim \frenchspacing\@vobeyspaces \@gatherverbatim
You are using the "gather" environment in a style in which it is not defined.}

\@namedef{gather*}{\@verbatim\@sgatherverbatim
You are using the "gather*" environment in a style in which it is not defined.}
\expandafter\let\csname endgather*\endcsname =\endtrivlist

\def\multiline{\@verbatim \frenchspacing\@vobeyspaces \@multilineverbatim
You are using the "multiline" environment in a style in which it is not defined.}

\@namedef{multiline*}{\@verbatim\@smultilineverbatim
You are using the "multiline*" environment in a style in which it is not defined.}
\expandafter\let\csname endmultiline*\endcsname =\endtrivlist

\def\arrax{\@verbatim \frenchspacing\@vobeyspaces \@arraxverbatim
You are using a type of "array" construct that is only allowed in AmS-LaTeX.}

\def\tabulax{\@verbatim \frenchspacing\@vobeyspaces \@tabulaxverbatim
You are using a type of "tabular" construct that is only allowed in AmS-LaTeX.}

\@namedef{arrax*}{\@verbatim\@sarraxverbatim
You are using a type of "array*" construct that is only allowed in AmS-LaTeX.}
\expandafter\let\csname endarrax*\endcsname =\endtrivlist

\@namedef{tabulax*}{\@verbatim\@stabulaxverbatim
You are using a type of "tabular*" construct that is only allowed in AmS-LaTeX.}
\expandafter\let\csname endtabulax*\endcsname =\endtrivlist


 \def\endequation{%
     \ifmmode\ifinner 
      \iftag@
        \addtocounter{equation}{-1} 
        $\hfil
           \displaywidth\linewidth\@taggnum\egroup \endtrivlist
        \global\tag@false
        \global\@ignoretrue   
      \else
        $\hfil
           \displaywidth\linewidth\@eqnnum\egroup \endtrivlist
        \global\tag@false
        \global\@ignoretrue 
      \fi
     \else   
      \iftag@
        \addtocounter{equation}{-1} 
        \eqno \hbox{\@taggnum}
        \global\tag@false%
        $$\global\@ignoretrue
      \else
        \eqno \hbox{\@eqnnum}
        $$\global\@ignoretrue
      \fi
     \fi\fi
 } 

 \newif\iftag@ \tag@false
 
 \def\TCItag{\@ifnextchar*{\@TCItagstar}{\@TCItag}}
 \def\@TCItag#1{%
     \global\tag@true
     \global\def\@taggnum{(#1)}%
     \global\def\@currentlabel{#1}}
 \def\@TCItagstar*#1{%
     \global\tag@true
     \global\def\@taggnum{#1}%
     \global\def\@currentlabel{#1}}

  \@ifundefined{tag}{
     \def\tag{\@ifnextchar*{\@tagstar}{\@tag}}
     \def\@tag#1{%
         \global\tag@true
         \global\def\@taggnum{(#1)}}
     \def\@tagstar*#1{%
         \global\tag@true
         \global\def\@taggnum{#1}}
  }{}

%
%
%
%
%

\makeatother

\begin{document}

\title{Angle rigidity and its usage to stabilize\\ planar formations}
\author{Liangming Chen,  Ming Cao,   and Chuanjiang Li
	\thanks{%
	L. Chen and M. Cao are with Faculty of Science and Engineering, University of Groningen, Groningen, 9747 AG, The Netherlands. L. Chen and C. Li are with Department of Control Science and Engineering, Harbin Institute of Technology, Harbin, 150001, China. Email addresses: l.m.chen@rug.nl, m.cao@rug.nl, lichuan@hit.edu.cn.	
	
	} }
\maketitle

\begin{abstract}
	
Motivated by the challenging formation stabilization problem for mobile robotic teams when no distance or relative displacement measurements are available and each robot can only measure some of those angles formed by rays towards its neighbors, we develop the notion of ``angle rigidity" for a multi-point framework, named ``angularity", consisting of a set of nodes embedded in a Euclidean space and a set of angle constraints among them. Different from bearings or angles defined with respect to a global axis, the angles we use do not rely on the knowledge of a global coordinate system and are signed according to the counter-clockwise direction. Here \emph{angle rigidity} refers to the property specifying that under proper angle constraints, the angularity can only translate, rotate or scale as a whole when one or more of its nodes are perturbed locally. We first demonstrate that this angle rigidity property, in sharp comparison to bearing rigidity or other reported rigidity related to angles of frameworks in the literature, is \emph{not} a global property since an angle rigid angularity may allow flex ambiguity. We then construct necessary and sufficient conditions for \emph{infinitesimal} angle rigidity by checking the rank of an angularity's rigidity matrix. We develop a combinatorial necessary condition for infinitesimal minimal angle rigidity. Using the developed theories, a formation stabilization algorithm is designed for a robotic team to achieve a globally angle rigid formation, in which only angle measurements are needed.

\end{abstract}


\markboth{}{Shell \MakeLowercase{\textit{et al.}}: Bare Demo of
IEEEtran.cls for Journals}

\begin{IEEEkeywords}
Angle rigidity, planar framework, formation control.
\end{IEEEkeywords}

%

\section{Introduction}

{ 
Over the past decades, \emph{distance rigidity} has been intensively investigated both as a mathematical topic in graph theory\cite{roth1981rigid,hendrickson1992conditions} and an engineering
problem in  applications including formations of multi-agent systems\cite{anderson2008rigid}, mechanical structures\cite{ildefonse1992mechanical}, and biological materials\cite{mayer2002rigid}. Distance rigidity\cite{asimow1979rigidity} is defined   using the property of distance preservation of translational and rotational motions of a multi-point framework. To determine whether a given framework is distance rigid, two methods have been reported. The first is to test the rank of the distance rigidity matrix which is derived from the infinitesimally distance rigid motions\cite{asimow1978rigidity}. The second is enabled by Laman's theorem, which is a  combinatorial test and works only for generic frameworks. More recently, \emph{bearing rigidity} has been investigated, in which the shape of a framework is prescribed by the inter-point bearings or directions\cite{eren2003sensor,zhao2016bearing}. By defining the bearing as an unit vector in a given global coordinate system, bearing rigidity can be defined accordingly \cite{eren2012formation,zhao2016bearing}. To check whether a framework is bearing rigid, the conditions similar to those for  distance rigidity have been discussed \cite{eren2003sensor,eren2012formation,bishop2015distributed,zhao2016bearing}.

Distance constraints in determining distance rigidity are in general  quadratic in the associated end points' positions. While a bearing constraint is always linear in the associated point's position, the description of bearings directly depends on the necessity of a global coordinate system or a coordinate system in $SE(2)$ or $SE(3)$\cite{zelazo2015bearing,michieletto2016bearing}. Different from distance and bearing rigidity, in this study we aim at presenting  \emph{angle rigidity} theory for multi-point frameworks accommodating angle constraints as either linear or quadratic constraints on the points' positions without the knowledge of a global coordinate system.  Different from the usual definition for a scalar angle, the angle defined in this paper is signed. By defining the counter-clockwise direction to be each angle's positive direction, {angle rigidity} is defined for an \emph{angularity} which consists of vertices and angle constraints. We show that  the planar angle rigidity is a local property because of the existence of flex ambiguity.  To check whether an angularity is angle rigid, angle rigidity matrix is derived based on the infinitesimally angle rigid motions. Then, the angle rigidity of an angularity can be determined by testing the rank of its angle rigidity matrix. Also, we develop a necessary combinatorial condition to test the angle rigidity of a generic angularity. We underline that the Laman's theorem and Henneberg's construction method do not apply directly to angle rigidity, which makes our results essential.

Besides its mathematical importance, angle rigidity is closely related to the  application in multi-agent formation control for robotic transportation\cite{li2008robust}, search and rescue of drones\cite{meng2014integrated}, and satellite formation flying in deep space\cite{kapila2000spacecraft}. Equipments used in formation stabilization mainly include Global Positioning System (GPS) receivers, radars, and cameras, which can acquire positions, inter-agent distances, or angles/bearings\cite{anderson2008rigid,oh2015survey}. In particular, angle measurements are becoming
cheaper, more reliable and accessible than relative position
or inter-agent distance measurements\cite{oh2015survey,zhao2019bearing}. Angle information can be easily
obtained by a vision-based camera in local coordinates\cite{das2002vision}. Using angle rigidity developed in this paper, we show how to stabilize a planar formation by using only angle measurements. Different from bearing-based control algorithms\cite{2zhao2017translational,zhao2016bearing} where all agents' local coordinate systems are required to be aligned, the proposed angle-based control algorithm does not require the alignment of agents'  coordinate systems since the description of an angle rigid angularity does not depends on the coordinate system. We acknowledge  that in \cite{jing2019angle}, planar angle rigidity is established by employing the cosine of an angle formed by two joint edges as the angle constraint. The formation stabilization algorithm constructed in \cite{jing2019angle} requires that each agent can sense the real-time relative displacements with respect to its neighbors. Different from \cite{jing2019angle}, in this paper the desired formation shape is realized using only angle measurements. In addition, weak rigidity with mixed distance and angle constraints has been investigated in \cite{park2017rigidity,kwon2018infinitesimal,kwon2018infinitesimal1}, under which the formation control algorithms  are also designed for agents by using the measurements of relative displacement.

The rest of this paper is organized as follows. Section
II gives the definition of an angularity and its rigidity. Section
III introduces generic and infinitesimal angle rigidity. In Section IV, the application in multi-agent planar formations is investigated.

}

%
%
%

\section{Angularity and its rigidity}

Graphs have been used dominantly in rigidity theory for multi-point frameworks under distance constraints since an edge of a graph can be used naturally to denote the existence of a distance constraint between the two points corresponding to the two vertices adjacent to this edge. However, when describing angles formed by rays connecting points, to use edges of a graph becomes clumsy and even illogical because  an angle constraint always involves three points. For this reason, instead of using graphs that relate pairs of vertices as the main tool to define rigidity, we define a new combinatorial structure ``angularity" that relates triples of vertices to develop the theory of angle rigidity. In all the following discussions we confine ourselves to the plane.

\subsection{Angularity}
We use the vertex set $\mathcal{V}=\{1,2,\cdots,N\}$  to denote the set of indices of the  $N\geq 3$ points of a framework in the plane. As shown in Fig. \ref{AR/fig-angle}, to describe the \emph{signed} angle from the ray $j$-$i$ to ray $j$-$k$, one needs to use the ordered triplet $(i,j,k)$, and obviously the two angles corresponding to $(i,j,k)$ and $(k,j,i)$ are different, and in fact are called explementary or conjugate angles. Here, following convention, the angle $\measuredangle ijk$ for each  triplet $(i,j,k)$ is measured counterclockwise in the range  $[0,2\pi)$. We use  $\mathcal{A}\subset \mathcal{V} \times  \mathcal{V} \times  \mathcal{V}=\{(i,j,k), \forall i,j,k \in \mathcal{V},i\neq j \neq k \}$ to denote the \emph{angle set}, each element of which is an ordered triplet. We denote the number of elements $|\mathcal{A}|$ of the  angle set $\mathcal{A}$ by $M$. {Throughout this paper, we assume that no pair of triplets in $\mathcal{A}$ are explementary to each other}. Now consider the embedding of the vertex set $\mathcal V$ in the plane $\R ^2$ through which each vertex $i$ is associated with a distinct position  $p_i \in \R^2$ and let $p=[p_1^T,\cdots,p_N^T]^T \in \R^{2N}$. We assume the positions do not coincide. Then the combination of the vertex set $\mathcal V$, the angle set $\mathcal A$ and the position vector $p$ is called an \emph{angularity}, which we denote by $\mathbb A (\mathcal V, \mathcal A, p)$.

\begin{figure}[ht]
	\centering
	\includegraphics[width=5cm]{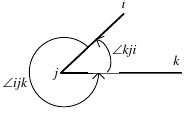}
	\caption{Angle used in defining angle rigidity.}
	\label{AR/fig-angle}
	\centering
\end{figure}

\subsection{Angle rigidity}
We first define what we mean by two equivalent or congruent angularities. 	
\begin{definition}\label{AR/def-equivalent}
We say two angularities  $\mathbb{A}(\mathcal V, \mathcal A,p)$ and $\mathbb{A}(\mathcal V, \mathcal A, p')$ with the same $\mathcal{V}$ and $\mathcal{A}$ are \emph{equivalent} if
\begin{equation}
\measuredangle ijk (p_i, p_j, p_k)= \measuredangle ijk(p_i', p_j', p_k') \textrm{\; for\ all\;} (i,j,k)\in \mathcal A.
\end{equation}
We say they are \emph{congruent} if
\begin{equation}
\measuredangle ijk (p_i, p_j, p_k)= \measuredangle ijk(p_i', p_j', p_k') \textrm{\; for\ all\;} i,j,k\in \mathcal V.
\end{equation}
\end{definition}

From the equivalent and congruent relationships, it is easy to define global angle rigidity.
\begin{definition}\label{AR/def-globalrigidity}
	An angularity $\mathbb{A}(\mathcal{V}, \mathcal A,p)$ is \emph{globally angle rigid} if every angularity that is equivalent to it is also  congruent to it.	
\end{definition}
When such a rigidity property holds only locally, one has angle rigidity.

\begin{definition}\label{AR/def-localrigidity}
An angularity $\mathbb{A}(\mathcal{V},\mathcal{A},p)$  is \textit{{{angle rigid}}} if there exists an $\epsilon>0$ such that every angularity $\mathbb{A}(\mathcal{V}, \mathcal{A}, p')$ that is  equivalent to it and satisfies $\left\| {p}'-p \right\|<\epsilon$, is congruent to it.
\end{definition}

Definition \ref{AR/def-localrigidity} implies that every configuration which is sufficiently close to $p$ and satisfies all the angle constraints formed by 
$\mathcal{A}$, has the same magnitudes of the angles formed by any three vertices in
$\mathcal{V}$ as the original configuration at $p$.

As is clear from Definitions \ref{AR/def-globalrigidity} and \ref{AR/def-localrigidity}, global angle rigidity always implies angle rigidity. A natural question to ask is whether angle rigidity also implies global angle rigidity. In fact, for bearing rigidity, it has been shown that indeed global bearing rigidity and bearing rigidity are equivalent \cite{eren2012formation,zhao2016bearing}. However, this is \emph{not} the case for angle rigidity. 

\begin{theorem}
An angle rigid angularity  $\mathbb{A}(\mathcal{V},\mathcal{A},p)$ is not necessarily globally angle rigid.
\end{theorem}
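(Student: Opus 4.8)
The statement is a negative, existential claim, so the plan is to disprove the implication ``angle rigid $\Rightarrow$ globally angle rigid'' by exhibiting a single counterexample: an angularity $\mathbb A(\mathcal V,\mathcal A,p)$ that satisfies Definition \ref{AR/def-localrigidity} but violates Definition \ref{AR/def-globalrigidity}. Concretely, I must produce a configuration $p$ together with a second configuration $p'$ on the same labelled vertex set such that (i) $p'$ is equivalent to $p$, meaning every constrained signed angle agrees in the sense of Definition \ref{AR/def-equivalent}; (ii) $p'$ is not congruent to $p$, i.e. some angle on a triple outside $\mathcal A$ differs; and (iii) $p$ is nonetheless locally rigid, so that no configuration in a punctured $\epsilon$-ball around $p$ is equivalent to but not congruent with $p$. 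Since a triangle ($N=3$) is determined up to similarity by two of its angles and is therefore globally angle rigid, the example must use $N\ge 4$ vertices.

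For the construction I would take a small angularity — four or five vertices suffice — and choose $\mathcal A$ with just enough constraints to fix the shape up to the trivial translation, rotation and scaling freedom, so that the realization at $p$ is isolated among non-congruent shapes. The heart of the example is a \emph{flex ambiguity}: a second shape $p'$ that reproduces exactly the same signed angles on the triples of $\mathcal A$ while differing on some unconstrained triple. I would obtain $p'$ by writing the angle constraints as a system of nonlinear trigonometric equations in the coordinates modulo similarity and exhibiting a second real solution branch, for instance a reconfiguration in which one vertex is relocated to the other intersection of the two loci cut out by its incident angle constraints. With explicit coordinates in hand, verifying (i) and (ii) reduces to a direct computation of the relevant quantities $\measuredangle ijk$ for $p$ and for $p'$.

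The local rigidity in (iii) is where the main obstacle lies, and it has two parts. First, the flex \emph{cannot} be a global reflection: under a reflection each $\measuredangle ijk$ becomes its explementary angle $2\pi-\measuredangle ijk=\measuredangle kji$, and since $\mathcal A$ contains no explementary pair by the standing assumption, a reflected configuration fails to be equivalent unless all constrained angles equal $0$ or $\pi$. Hence $p'$ must be a genuinely different shape rather than a mirror image, which is the delicate aspect of the construction. Second, I must certify that $p$ is locally rigid rather than flexible, because exhibiting one isolated alternative $p'$ far from $p$ does not by itself forbid a continuous one-parameter family of equivalent, non-congruent configurations emanating from $p$. I would rule this out by showing $p$ admits no nontrivial infinitesimal flex — via a constraint-versus-degrees-of-freedom count together with a nondegeneracy check at $p$, equivalently the rank test for the angle rigidity matrix developed in the next section — so that near $p$ the constrained angles determine all angles. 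Choosing $\epsilon$ smaller than both $\|p'-p\|$ and the radius of local determinacy then yields an $\epsilon$-ball in which every equivalent configuration is congruent, establishing angle rigidity, while $p'$ simultaneously witnesses the failure of global angle rigidity. The central difficulty is precisely this simultaneity: producing a non-reflective second realization matching all constrained signed angles while proving that no infinitesimal flex survives at $p$.
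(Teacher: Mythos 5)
Your proposal is correct and follows essentially the same route as the paper: the paper's own proof is exactly the four-vertex counterexample you sketch, with a triangle fixed by two angle constraints and a fourth vertex placed at one of the two intersection points of a ray (from its linear constraint) and a circle (from its inscribed-angle constraint), so that local uniqueness of the intersection gives angle rigidity while the second intersection point $4'$ defeats global angle rigidity. The additional safeguards you flag — that signed angles rule out reflections, and that local rigidity must be certified (which the paper does geometrically rather than via the rigidity-matrix rank) — are exactly the considerations the paper's construction handles.
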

We prove this theorem by providing the following example.\begin{figure}[H]
	\centering
	\includegraphics[width=6.0cm]{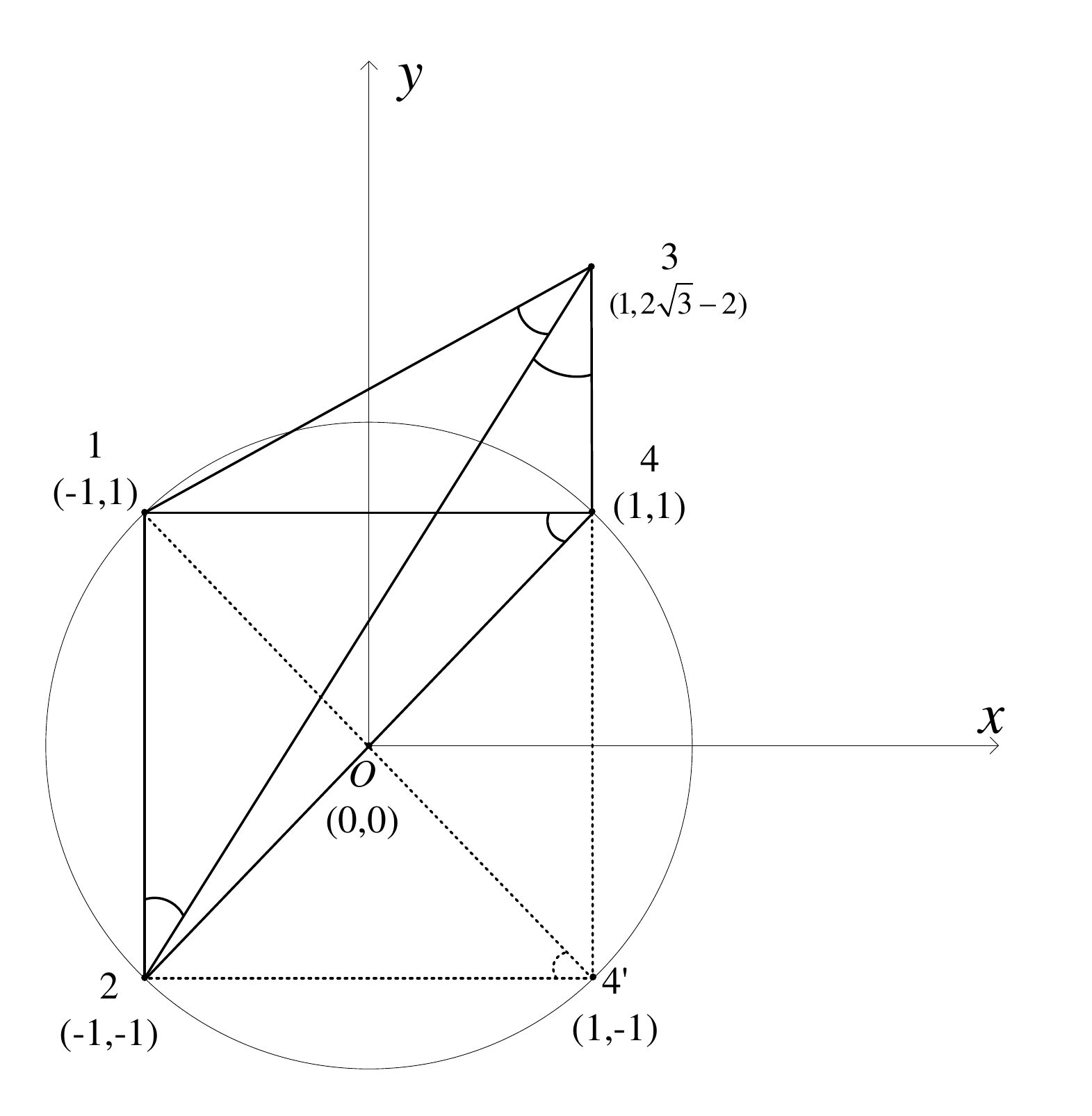}
	\caption{Flex ambiguity in angle rigid angularity}
	\label{AR/fig-ambiguity}
\end{figure}
\noindent Fig. \ref{AR/fig-ambiguity} shows an angularity with $\mathcal V = \{1,2,3,4\}$,  its elements in the set $\mathcal A = \{(3,2,1), (1,3,2), (2,3,4), (1,4,2)\}$ taking the values  
\begin{equation}
\measuredangle 321=\arccos(\frac{4\sqrt{3}-2}{2\sqrt{17-4\sqrt{3}}})\approx39.07^{\circ},
\end{equation}
\begin{equation}
 \measuredangle 132=\arccos(\frac{19-8\sqrt{3}}{\sqrt{25-12\sqrt{3}}\sqrt{17-4\sqrt{3}}})\approx37.88^{\circ},
\end{equation}
\begin{equation}
\measuredangle 234=30^{\circ},
\end{equation}
\begin{equation}
\measuredangle 142=45 ^{\circ},
\end{equation}
and its $p$ is shown as in the coordinates of the vertices. Now first look at the triangle formed by 1, 2 and 3. Since two of its angles $\measuredangle 321$ and $\measuredangle 132$ have been constrained, the remaining  $\measuredangle 213$ is uniquely determined to be $\pi - \measuredangle 321- \measuredangle 132$. The constraint on $\measuredangle 234$ requires 4 must lie in the ray starting from 3 and rotating from the ray 32 anticlockwise by 30 degree; at the same time, the constraint on $\measuredangle 142$ requires 4 must lie on the circle passing through 1 and 2  such that the inscribed angle $\measuredangle 142$ is 45 degree. If we fix the positions of 1, 2, and 3, then there is only one unique position for 4 in the neighborhood of its current given coordinates as the intersection point of the ray and the circle. This local uniqueness implies that this four-vertex angularity is angle rigid (when 4's position is uniquely determined, any angle associated with it is also uniquely determined); however, globally, there is the other intersection point $4'$ as shown in the figure, which implies that this angularity is not globally angle rigid.  \hfill $\square$

We provide the following further insight to explain this sharp difference between the angle rigidity that we have defined and the bearing rigidity that has been reported in the literature.  Bearing rigidity as defined in \cite{eren2012formation,zhao2016bearing} is a global property because the bearing constraints are always linear in $p$ when written as a linear  constraint (similar to the constraint in the form of the ray from 3 to 4 in the example) in some global coordinate system. In contrast, our angle constraints can be either linear in $p$ when it requires the corresponding vertex to be on a ray or quadratic in $p$ when it restricts the corresponding vertex to be on an arc passing through other vertices. The possible nonlinearity in the angle constraints gives rise to potential ambiguity of the vertices' positions.

Note that the embedding of $p$ in the plane may affect the rigidity of $\mathbb A$. Consider the 3-vertex angularity as embedded in the following three different situations when its angle set $\mathcal A$ contains only one element $(2,1,3)$. \begin{figure}[H]
	\centering
	\includegraphics[width=9.0cm]{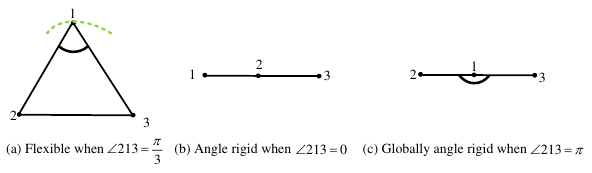}
	\caption{Non-generic $p$ changes rigidity}
	\label{AR/fig-specialgeneric}
\end{figure}
\noindent Sub-figure (a) shows that 1, 2, 3 are not collinear,  and then this angularity is in general flexible since if we fix the positions of 2 and 3, then the constraint on $\measuredangle 213$ still allows 1 to move along an arc and correspondingly the angles $\measuredangle 123$ and $\measuredangle 132$ change. In sub-figure (b), 1, 2, 3 are collinear and 1 is on one side, in this case if the angle constraint happens to be $\measuredangle 213 =0$, then one can check the angularity becomes angle rigid, although it is not globally rigid since the angle of $\measuredangle 132$ changes by 180 degree if we swap 1 and 3. In the last sub-figure (c), 1, 2, 3 are collinear and 1 is in the middle, when the constraint becomes $\measuredangle 213 = \pi$, one can check that the angularity is not only rigid, but also globally rigid (swapping of 2 and 3 in this case does not change the resulting angles being zero). So the angularity $\mathbb A (\{1,2,3\}, \{(2,1,3)\},p)$ is generically flexible, but rarely rigid depending on $p$. To clearly describe this relationship between angle rigidity and $p$, like in  standard rigidity theory, we define what we mean by generic positions. 

\begin{definition}
	The position vector $p$ is said to be \emph{generic} if its components are algebraically independent \cite{connelly2005generic}. Then we say an angularity is \emph{generically (globally) angle rigid} if its $p$ is generic and it is (globally) angle rigid. 
\end{definition}

For convenience, we also say an angularity is generic if its $p$ is generic. Now we provide some sufficient conditions for an angularity  to be globally angle rigid. Towards this end, we need to introduce some concepts and operations. For two angularities $\mathbb A (\mathcal V, \mathcal A, p )$ and $\mathbb A' (\mathcal V', \mathcal A', p')$, we say $\mathbb A$ is a \emph{sub-angularity} of $\mathbb A'$ if $\mathcal V \subset \mathcal V'$, $\mathcal A \subset \mathcal A'$ and $p$ is the corresponding sub-vector of $p'$. We first clarify that for the smallest angularities, namely those contains only three vertices, there is no gap between global and local generic angle rigidity.
\begin{lemma}	\label{AR/lem-3vertexangularity}
	For a 3-vertex angularity, if it is generically angle rigid, it is also generically globally angle rigid.
\end{lemma}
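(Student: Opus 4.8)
The plan is to exploit the fact that a three-vertex angularity has essentially only one free shape parameter once a direct (orientation-preserving) similarity is quotiented out, and to show that two angle constraints sitting at two distinct vertices pin this shape down not merely locally but globally.

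First I would make the notion of congruence concrete for $N=3$. Since the six signed angles $\measuredangle ijk$ of a non-degenerate triangle are determined by, and determine, the triangle up to translation, rotation and positive scaling (and are turned into their explements under a reflection), congruence of two three-vertex configurations is equivalent to their being related by a direct similarity. Hence, after normalizing by a direct similarity so that two prescribed vertices sit at $(0,0)$ and $(1,0)$, two configurations are congruent if and only if their third vertices coincide; the position of the third vertex is a complete congruence invariant. Next I would pin down the combinatorial form that $\mathcal A$ must take. Because $p$ is generic its coordinates are algebraically independent, so the three points are never collinear and form a genuine triangle whose interior angles lie in $(0,\pi)$. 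The standing assumption that no two triplets of $\mathcal A$ are explementary means that at most one of the two triplets available at each vertex can belong to $\mathcal A$, so $\mathcal A$ has at most three elements, at most one per vertex. I would then argue that angle rigidity forces at least two distinct vertices to be constrained: if $\mathcal A$ constrained the angle at a single vertex $a$ (or no vertex), then after fixing $a$ and one other vertex the remaining vertex would be free to slide along a fixed ray, producing a one-parameter family of equivalent but non-congruent configurations arbitrarily close to $p$, contradicting angle rigidity.

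The heart of the argument is the global uniqueness step. Let $a$ and $b$ be two distinct constrained vertices and $c$ the third. I would fix $a$ and $b$ at $(0,0)$ and $(1,0)$ by a direct similarity. The constraint living at $a$ is the signed angle at $a$ between the rays $a$-$b$ and $a$-$c$; since $a$-$b$ is now fixed, this prescribes the direction of the ray $a$-$c$ exactly. Symmetrically, the constraint at $b$ prescribes the direction of $b$-$c$. Thus $c$ must lie simultaneously on a fixed ray emanating from $a$ and a fixed ray emanating from $b$; for a non-degenerate triangle these two rays are not parallel, so they meet in a single point, which forces $c$ to be uniquely determined. Consequently every angularity equivalent to $\mathbb A(\mathcal V,\mathcal A,p)$ has, after the same normalization, the same third vertex, hence is congruent to $\mathbb A(\mathcal V,\mathcal A,p)$, which is exactly global angle rigidity.

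The step I expect to be the main obstacle — and the place where the three-vertex case genuinely differs from the four-vertex flex ambiguity exhibited earlier — is verifying that the two constraints really reduce to two rays with a single common point rather than to a ray-meets-circle intersection admitting two solutions. The point is that with only three vertices the two constrained angles both sit at vertices adjacent to the single free vertex, so after fixing the two constrained vertices each constraint is a ray aimed at that free vertex; there is no inscribed-angle circle left over to create a second intersection. I would also take care with the signed-angle bookkeeping, checking that the signed (rather than unsigned) nature of each constraint selects a definite ray direction, and hence a definite orientation, so that a reflected copy of $p$ is not equivalent to $p$ and cannot reappear as a spurious second congruence class.
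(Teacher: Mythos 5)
Your proof is correct, and its skeleton matches the paper's: both arguments first show that angle rigidity of a generic 3-vertex angularity forces angle constraints at (at least) two distinct vertices, and then show that two such constraints pin down the triangle globally. Where you differ is in the concluding step. The paper stays entirely in angle space: since a generic configuration is a genuine triangle, the interior angles sum to $\pi$, so the two constrained angles determine the third, and hence all angles formed by the three vertices --- that is the whole proof, in three sentences. You instead pass to position space: normalize the two constrained vertices to $(0,0)$ and $(1,0)$ by a direct similarity, observe that each signed constraint fixes the direction of a ray toward the free vertex, and conclude that this vertex is the unique intersection of two non-parallel rays. (Amusingly, this is exactly the argument the paper itself uses later, in the proof of Proposition~\ref{AR/prop-globalrigid} for case 1) of a Type-I vertex addition, just not in the proof of this lemma.) Your longer route buys two things the paper leaves implicit: first, the explicit explanation of why no inscribed-angle circle can arise with only three vertices, which is precisely why the flex ambiguity of the four-vertex example cannot recur here; and second, the orientation bookkeeping --- signed angles are preserved by direct similarities and turned into explements by reflections --- which justifies passing from ``the three interior angles agree'' to congruence in the sense of equality of \emph{all} signed angles, a step the paper asserts without comment.
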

\begin{proof}
	For this 3-vertex angularity $\mathbb A (\mathcal V, \mathcal A, p )$, since it is angle rigid and  $p$ is generic, $\mathcal A$ must contain at least two elements, or said differently, two of the interior angles of the triangle formed by the three vertices are constrained. Again since $p$ is generic, the sum of the three interior angles in this triangle has to be $\pi$, and thus the magnitude of this triangle's remaining interior angle is uniquely determined too. Therefore, $\mathbb A$ is generically globally angle rigid.
\end{proof}

Now, we define the vertex addition operations and the aim is to demonstrate how a bigger angularity might grow from a smaller one.
\begin{definition}
	For a given angularity $\mathbb A (\mathcal V, \mathcal A, p )$,   a new vertex $i$ positioned at $p_i$ is \emph{linearly constrained} with respect to $\mathbb A$ if there is $ j\in \mathcal{V}$ such that $p_i \neq p_j$ and $p_j$ is constrained to be on a ray starting from $p_j$; we also say $i$ is \emph{quadratically constrained} with respect to $\mathbb A$ if there are $j, k\in \mathcal V$ such that $\{p_i, p_j, p_k\}$ is generic and $p_i$ is constrained to be on an arc with $p_j$ and $p_k$ being the arc's two ending points. Correspondingly, we call $i$'s constraint in the former case a  \emph{linear constraint} and in the latter case a \emph{quadratic constraint}   with respect to $\mathbb{A}$.
\end{definition}

\begin{definition}[Type-I vertex addition]\label{Type-Ivertexaddition}
	For a given angularity $\mathbb A (\mathcal V, \mathcal A, p )$, we say the angularity $\mathbb A'$ with the augmented vertex set $\{\mathcal V \cup \{i\}\}$ is obtained from $\mathbb A$ through a \emph{Type-I vertex addition} if the new vertex $i$'s constraints with respect to $\mathbb A$ contain at least one of the following: 
		
	1) two linear constraints, not aligned, associated with two distinct vertices in $\mathcal V$ (one vertex for one constraint and the other vertex for the other constraint); 
	
	2) one linear constraint and one  quadratic constraint associated with two distinct vertices in $\mathcal V$ (one for the former and both for the latter); 
	
	3) two quadratic constraints associated with three vertices in $\mathcal V$ (two for each and one is shared by both).
\end{definition}

\begin{definition}[Type-II vertex addition]\label{Type-IIvertexaddition}
	For a given angularity $\mathbb A (\mathcal V, \mathcal A, p )$,  we say the angularity $\mathbb A'$ with the augmented vertex set $\{\mathcal V \cup \{i\}\}$ is obtained from $\mathbb A$ through a \emph{Type-II vertex addition} if the new vertex $i$'s constraints with respect to $\mathbb A$ contain at least one of the following: 
	
	1) one linear constraint and one quadratic constraint associated with three distinct vertices in $\mathcal V$ (one for the former and the other two for the latter); 
	
	2) two different quadratic constraints associated with four vertices in $\mathcal V$ (two for the former and the other two for the latter).
\end{definition}
	\begin{figure}[H]
	\centering
	\includegraphics[width=8.7cm]{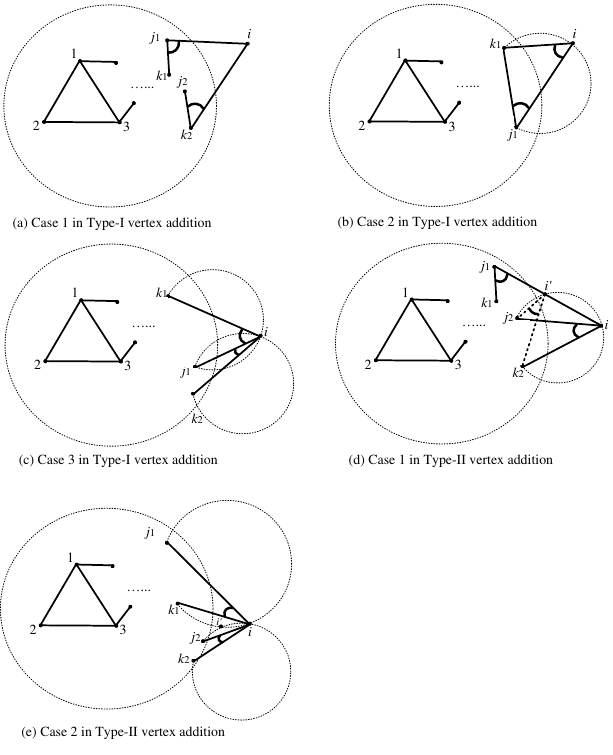}
	\caption{Type-I vertex addition and Type-II vertex addition}
	\label{AR/fig-vertexaddition}
\end{figure}

\begin{remark}
	The numbers of vertices involved in condition (2)  in Definition \ref{Type-Ivertexaddition} and condition (1) in Definition  \ref{Type-IIvertexaddition} differ in these two types of vertex addition operations. Similarly, those in   condition (3)  in Definition \ref{Type-Ivertexaddition} and condition (2) in Definition  \ref{Type-IIvertexaddition} are also different.
\end{remark}

\begin{remark}
	Note that in these two vertex addition operations, all the involved vertices are required to be in generic positions. However, the overall angle rigid angularity $\mathbb A'$ constructed through a sequence of vertex addition operations is \emph{not} necessarily generic, and an example is given in Fig. \ref{AR/fig-vertexadditiongeneric}.
	\begin{figure}[H]
		\centering
		\includegraphics[width=8.7cm]{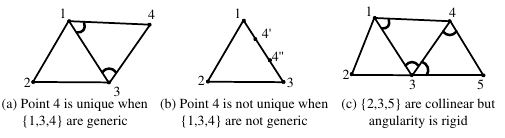}
		\caption{The overall angularity is not necessarily generic}
		\label{AR/fig-vertexadditiongeneric}
	\end{figure}
\end{remark}

Now we are ready to present a sufficient condition for global angle rigidity using type-I vertex addition.
\begin{proposition} \label{AR/prop-globalrigid}
	An angularity is {{{globally angle rigid}}} if it can be obtained through a sequence of Type-I vertex additions from a generically angle rigid 3-vertex angularity.
\end{proposition}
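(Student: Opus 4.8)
The plan is to argue by induction on the number of Type-I vertex additions, with the base case supplied by Lemma~\ref{AR/lem-3vertexangularity}. A generically angle rigid $3$-vertex angularity is, by that lemma, generically globally angle rigid, hence globally angle rigid, and this anchors the induction. For the inductive step I would assume that the angularity $\mathbb A(\mathcal V,\mathcal A,p)$ reached so far is globally angle rigid and show that the angularity $\mathbb A'(\mathcal V\cup\{i\},\mathcal A',p^*)$ produced by one Type-I addition of the vertex $i$ is again globally angle rigid, where $p^*$ restricts to $p$ on $\mathcal V$.

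So I would take any configuration $\tilde p$ on $\mathcal V\cup\{i\}$ for which $\mathbb A'(\mathcal V\cup\{i\},\mathcal A',\tilde p)$ is equivalent to $\mathbb A'$, and prove the two are congruent. Since $\mathcal A\subseteq\mathcal A'$, the restriction $\tilde p|_{\mathcal V}$ satisfies all constraints of $\mathcal A$, so $\mathbb A(\mathcal V,\mathcal A,\tilde p|_{\mathcal V})$ is equivalent to $\mathbb A(\mathcal V,\mathcal A,p)$; by the inductive hypothesis these are congruent, i.e. all signed angles among vertices of $\mathcal V$ coincide. I would then invoke the fact — the content of ``translate, rotate or scale as a whole'' — that two planar configurations with identical signed angles over all triples are related by an orientation-preserving similarity: there exist $c>0$, $R\in SO(2)$ and $t\in\R^2$ with $\tilde p_v = cRp_v + t$ for every $v\in\mathcal V$. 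It then remains only to show that the new vertex obeys the same map, $\tilde p_i = cRp_i + t$.

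This last point is the crux, and it is exactly where Type-I differs from Type-II. In each of the three Type-I cases the constraints pin $p_i$ down as the intersection of two loci that already share a known point, so the second intersection is unique \emph{globally}, not merely locally: (1) two non-aligned linear constraints are two rays whose supporting lines meet in a single point; (2) one linear and one quadratic constraint on the same two vertices give a ray emanating from a vertex $p_j$ that itself lies on the circle carrying the arc, so the ray meets that circle only at $p_j$ and at one further point; (3) two quadratic constraints sharing a vertex give two circles through the common point $p_j$, which meet at $p_j$ and at most one other point. Because each locus is fixed by angle data and hence carried to its image under the similarity $(c,R,t)$ when one passes from $p$ to $\tilde p$, the corresponding loci for $\tilde p_i$ are the similarity images of those for $p_i$ and share the known point $cRp_j + t$; their unique remaining intersection must then be $cRp_i + t$, forcing $\tilde p_i = cRp_i + t$. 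Here the genericity of the vertices involved in the addition is used to guarantee that the two loci are genuinely distinct (two distinct lines, or two distinct circles), so that ``one further intersection'' is well defined — precisely the hypothesis that fails for the Type-II configuration of Fig.~\ref{AR/fig-ambiguity}, where a ray from a vertex lying \emph{off} the circle meets it in two admissible points.

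With $\tilde p = cRp^* + t$ on all of $\mathcal V\cup\{i\}$, every signed angle of $\mathbb A'$ is preserved, so $\mathbb A'(\mathcal V\cup\{i\},\mathcal A',\tilde p)$ is congruent to $\mathbb A'$, completing the inductive step and the proof. I expect the main obstacle to be the rigorous justification of the ``all signed angles preserved $\Rightarrow$ orientation-preserving similarity'' statement, including the treatment of possibly collinear sub-configurations and the verification that the signed (counter-clockwise) convention indeed rules out reflections, together with confirming in case (3) that genericity excludes the two circles coinciding. By contrast, the ray/line and circle incidence counts in cases (1)--(3) are elementary once the shared-point observation is in hand.
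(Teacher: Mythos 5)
Your proposal is correct and takes essentially the same route as the paper's own proof: the base case is Lemma~\ref{AR/lem-3vertexangularity}, and each Type-I addition is handled by exactly the same three intersection-count arguments (two non-aligned rays from distinct points; a ray issued from an endpoint of the arc; two circles through a shared point), so the new vertex is globally uniquely placed relative to the old ones. The only difference is one of rigor, not of method: you make explicit the induction and the ``all signed angles equal $\Rightarrow$ orientation-preserving similarity'' step that the paper leaves implicit when it asserts $p_i$ is ``globally uniquely determined.''
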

\begin{proof}
	According to Lemma \ref{AR/lem-3vertexangularity}, the generically angle rigid 3-vertex angularity is  globally angle rigid. Consider the three conditions in the Type-I vertex addition. If 1) applies, then  the position $p_i$ of the newly added vertex $i$ is unique since two rays, not aligned, starting from two different points may intersect only at one point; if 2) applies, $p_i$ is again unique since a ray starting from the end point of an arc may intersect with the arc at most at one other point; and if 3) applies, $p_i$ is unique since two arc sharing one end point on different circles can only intersect at most at one other point. Therefore, $p_i$ is always globally uniquely determined. In addition, the positions of the vertices after a sequence of type-I vertex additions are not necessarily generic, so we conclude that the obtained angularity is globally angle rigid.
\end{proof}

In comparison, type-II vertex additions can only guarantee angle rigidity, but not global angle rigidity. 
\begin{proposition} \label{AR/prop-rigid}
	An angularity is angle rigid if it can be obtained through a sequence of Type-II vertex additions from a generically angle rigid 3-vertex angularity.
\end{proposition}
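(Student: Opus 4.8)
The plan is to follow the same inductive scheme as the proof of Proposition~\ref{AR/prop-globalrigid}, but to weaken its \emph{global} uniqueness of the added vertex to a \emph{local} one. By Lemma~\ref{AR/lem-3vertexangularity} the starting 3-vertex angularity is angle rigid, so I would induct on the number of vertex additions: assuming the intermediate angularity $\mathbb A(\mathcal V,\mathcal A,p)$ is angle rigid with some parameter $\epsilon_0$, I would prove that the angularity $\mathbb A'$ obtained from it by one Type-II addition of a vertex $i$ is again angle rigid. The essential difference from Type-I lies in how many vertices the two constraints on $i$ share: in a Type-I addition the ray/arc and the arc possess a common vertex of $\mathcal V$, so one of their at-most-two intersection points is that already-placed vertex and the remaining one is globally unique; in a Type-II addition the two loci share no vertex, so both candidate intersection points are genuine and only a neighbourhood argument can single out the correct one.

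Concretely, in case~1 of Definition~\ref{Type-IIvertexaddition} the two constraints confine $p_i$ to the intersection of a ray with an arc whose endpoints are two \emph{other} vertices, and in case~2 to the intersection of two arcs lying on distinct circles through four \emph{distinct} vertices. In either case the constraint locus is a set of at most two points; at a generic placement of the vertices involved these are transversal intersections, hence isolated and separated by a strictly positive distance $\delta$. This is the local-uniqueness fact that replaces the global-uniqueness fact used for Type-I.

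For the inductive step, take any $p'$ equivalent to $\mathbb A'$ with $\|p'-p\|$ small, and first look at its restriction to $\mathcal V$. This restriction satisfies all constraints of $\mathcal A$ and is close to $p|_{\mathcal V}$, so by the inductive hypothesis (for $\|p'-p\|<\epsilon_0$) the sub-angularity is congruent, i.e.\ an orientation-preserving similarity $T$ maps $p|_{\mathcal V}$ onto $p'|_{\mathcal V}$; since signed angles are similarity-invariant I may replace $p'$ by $T^{-1}p'$ and thereby assume $p'|_{\mathcal V}=p|_{\mathcal V}$ while keeping $p_i'$ near $p_i$. The new vertex then still obeys its two constraints relative to the now-fixed sub-angularity, so $p_i'$ lies in the same finite locus as $p_i$; choosing the rigidity parameter of $\mathbb A'$ as $\epsilon<\min\{\epsilon_0,\delta/2\}$ forces $p_i'$ to be the unique such point within distance $\delta/2$ of $p_i$, namely $p_i$ itself. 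Hence every angle involving $i$ is preserved and $\mathbb A'$ is angle rigid. Because the locus admits a second intersection point, $\mathbb A'$ need not be globally angle rigid, matching the flex ambiguity of Fig.~\ref{AR/fig-ambiguity}.

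The step I expect to be delicate is the reduction in the third paragraph: I must check that the matching similarity $T$ is uniquely determined near the identity and depends continuously on the perturbation, so that $T\to\mathrm{id}$ as $p'\to p$ and $T^{-1}p_i'$ stays inside the $\delta/2$-neighbourhood where local uniqueness applies. Controlling this neighbourhood size is precisely what is unnecessary for Type-I, and it is the reason the conclusion here is only local angle rigidity.
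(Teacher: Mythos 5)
Your proposal is correct and follows essentially the same route as the paper: the paper's proof simply says to repeat the Proposition~\ref{AR/prop-globalrigid} induction, with the only difference being that the added vertex's position now admits two candidate intersection points and is therefore only locally unique, which is exactly the $\delta$-separation argument you make. Your extra care in normalizing by the similarity $T$ and controlling its continuity is a rigorous filling-in of details the paper leaves implicit, not a different approach.
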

The proof can be easily constructed following similar arguments as those for Proposition \ref{AR/prop-globalrigid}. The only difference is that $p_i$ now may have two solutions and is only unique locally.

{

}

After having presented our results on angularity and generic angle rigidity, in the following sectoin, we discuss infinitesimal angle rigidity, which relates closely to infinitesimal motion.  

\section{Generic and infinitesimal angle rigidity}

Analogous to distance rigidity, infinitesimal angle rigidity can be characterized by the kernel of a properly defined rigidity matrix. Towards this end, we first introduce the following angle function. For each angularity $\mathbb A (\mathcal V, \mathcal A, p)$, we define the \textit{{{angle function}}} $f_{\mathcal{A}}(p): \R ^{2N}\to \R^M$ by
\begin{equation}
f_{\mathcal{A}}(p):=[f_1,\cdots,f_M]^T,
\end{equation}
where $f_m:\R ^6\to [0,2\pi)$,  $m=1,\cdots,M$, is the mapping from the position vector $[p_i^T, p_j^T, p_k^T]^T$ of the $m$th element $(i,j,k)$ in $\mathcal A$   to the signed angle $\measuredangle ijk$. Using this angle function, one can define $\mathbb A$'s angle rigidity matrix.

\subsection{Angle rigidity matrix}

Following \cite{eren2003sensor}, we consider an arbitrary  element $(i,j,k)$ in  $\mathbb A$ and denote the corresponding angle constraint by $\measuredangle ijk (p_i, p_j, p_k)=\beta$, or in shorthand $\measuredangle ijk=\beta$, where $\beta \in [0,2\pi)$ is a constant. 
From the definition of the dot product,  one has
\begin{equation}
\|p_i -p_j \|\; \|p_k -p_j \|\cos\beta=(p_i -p_j) ^T (p_k -p_j), \label{AR/eq-ancon1}
\end{equation}
where $\left\Vert \cdot
\right\Vert $ denotes the Euclidean vector norm and we have used the fact that $\cos\beta=\cos(2\pi-\beta)$.  Taking the square of both sides and then differentiating with respect to time lead to
\begin{align} \label{AR/eq-ancon}
&\big (l^{2}_{jk}(p_i-p_j)\cdot (\dot p_i-\dot p_j)+l^{2}_{ji}(p_k-p_j)\cdot(\dot p_k-\dot p_j)\big)\cos \beta  \notag \\
&=l_{jk}l_{ji}\{(p_k-p_j)\cdot (\dot p_i-\dot p_j)+(p_i-p_j)\cdot (\dot p_k-\dot p_j)\},
\end{align}
where $l_{jk} = \|p_j - p_k\|$ and $l_{ji}= \|p_j - p_i\|$. 
Dividing both sides by $l_{jk}l_{ji}$ and  rearranging terms, one obtains
\begin{equation} \label{eq/ABC}
A\cdot \dot p_i+B\cdot \dot p_j+C\cdot \dot p_k=0,
\end{equation} 
where 
\begin{align}
A&=\frac{(p_i-p_j)^{\bot}}{l_{ij}}l_{jk}\sin\beta,\\
B&=-\frac{(p_i-p_j)^{\bot}}{l_{ij}}l_{jk}\sin\beta+\frac{(p_k-p_j)^{\bot}}{l_{kj}}l_{ij}\sin\beta, \\
C&=\frac{(p_j-p_k)^{\bot}}{l_{kj}}l_{ij}\sin\beta,
\end{align}
and for a vector $p$, $p^{\bot}$ is the vector obtained by rotating $p$ counterclockwise by $\frac{\pi}{2}$.
For each $(i,j,k)$ in $\mathcal A$ we obtain an equation in the form of (\ref{eq/ABC}), and then one can write such $M$ equations into the matrix form $B(p)\dot p=0$ where 
$B(p)\in \R^{M\times 2N}$ is called the \emph{angle rigidity matrix}, whose rows are indexed by the  elements of $\mathcal A$ and columns the coordinates of the vertices:

\begin{equation}
B(p)=\notag \qquad \qquad \qquad
\end{equation}
\begin{small}
\begin{equation}
\begin{bmatrix}
 &\cdots  & \text{Vertex}\ i &\cdots &  \text{Vertex}\ j &\cdots & \text{Vertex}\ k & \cdots \cr
\text{Angle}\ 1  &     \cdots &  \cdots &  \cdots & \cdots  &\cdots &\cdots  &\cdots \cr
\cdots &  \cdots &  \cdots &  \cdots & \cdots &\cdots &\cdots  &\cdots \cr
\measuredangle ijk &  0& N_{ij} &0 & N_{ji}+N_{kj}&0  &  N_{jk} &0 \cr
\cdots &  \cdots &  \cdots &  \cdots & \cdots &\cdots &\cdots  &\cdots \cr
\text{Angle} \ M  & \cdots &  \cdots &  \cdots &\cdots &\cdots &\cdots  &\cdots \cr
\end{bmatrix}
\label{AR/eq-eq11}
\end{equation}\end{small}

and \begin{equation}
N_{ij}=\left( \frac{(p_i-p_j)^{\bot}}{l_{ij}^2}\right)^T. \label{AR/eq-eq8}
\end{equation}

Since for an angularity, its angle preservation motions include translation, rotation, and scaling, one may rightfully expect that such motions are captured by the null space of the angle rigidity matrix, which always contains the following four linearly independent vectors 
\begin{equation}
q_1=1_{N}\otimes
\begin{bmatrix}
1\\ 0
\end{bmatrix},
\end{equation}

\begin{equation}
q_2=1_{N}\otimes
\begin{bmatrix}
0\\ 1
\end{bmatrix},
\end{equation}

\begin{equation}
q_3=
\begin{bmatrix}
(Q_0p_1^{\bot})^{T} ,& (Q_0p_2^{\bot})^{T}, & \cdots, & (Q_0p_N^{\bot})^{T}
\end{bmatrix}^T,
\end{equation}

\begin{equation}
q_4=
\begin{bmatrix}
(\alpha p_1)^{T} ,& (\alpha p_2)^{T}, & \cdots, & (\alpha p_N)^{T}
\end{bmatrix}^T,
\end{equation}
where $Q_0=\begin{bmatrix} 0 & 1 \\-1 & 0 \end{bmatrix}$ is skew symmetric,  $\alpha\in \R$ is a constant scaling factor, and $\otimes$ represents Kronecker product. Note that $q_1$ and $q_2$ correspond to translation, $q_3$ rotation, and $q_4$ scaling.
We state this fact as a lemma.

\begin{lemma} \label{AR/lem-rankrigiditymatrix}
For an angle rigidity matrix $B(p)$, it always holds that	 $\text{Span}\{q_1,q_2,q_3,q_4\}\subseteq \text{Null}(B(p))$ and correspondingly	 $\text{Rank}(B(p))\leq 2N-4$.
\end{lemma}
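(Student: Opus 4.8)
The plan is to prove the statement in two independent halves: that each of the four generators lies in $\text{Null}(B(p))$, and that the four generators are linearly independent. Granting both, the rank bound is immediate, since $\dim\text{Null}(B(p))\ge 4$ forces $\text{Rank}(B(p))=2N-\dim\text{Null}(B(p))\le 2N-4$ by rank--nullity.

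For the inclusion $\text{Span}\{q_1,q_2,q_3,q_4\}\subseteq\text{Null}(B(p))$, the conceptual reason is that $B(p)\dot p$ records the first-order change of the constrained angles, while translation, rotation and uniform scaling are orientation-preserving similarities under which every signed angle $\measuredangle ijk$ --- depending only on the directions of $p_i-p_j$ and $p_k-p_j$ --- is exactly invariant; their velocity fields are precisely $q_1,q_2$ (translations along the two axes), the rotation field $q_3$ (with $i$-th block $p_i^{\bot}$) and the scaling field $q_4$ (with $i$-th block $\alpha p_i$). Rather than invoke differentiability of the angle function, which fails when $\beta\in\{0,\pi\}$, I would verify the inclusion directly on a single row, so that the argument holds for every $p$ regardless of the values $\beta$. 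Using $N_{ji}=-N_{ij}$ and $N_{jk}=-N_{kj}$, the row attached to $(i,j,k)$ applied to a velocity field $(\dot p_i,\dot p_j,\dot p_k)$ collapses to $N_{ij}(\dot p_i-\dot p_j)+N_{kj}(\dot p_j-\dot p_k)$. For the two translations this vanishes because $\dot p_i=\dot p_j=\dot p_k$; for scaling it equals $\alpha\big(N_{ij}(p_i-p_j)+N_{kj}(p_j-p_k)\big)=0$, since $N_{ij}$ is proportional to $(p_i-p_j)^{\bot}$ and hence orthogonal to $p_i-p_j$; and for rotation it equals $N_{ij}(p_i-p_j)^{\bot}+N_{kj}(p_j-p_k)^{\bot}=1-1=0$, using $N_{ij}(p_i-p_j)^{\bot}=\|(p_i-p_j)^{\bot}\|^2/l_{ij}^2=1$. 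As the row was arbitrary, all four generators are annihilated by $B(p)$.

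For linear independence, suppose $aq_1+bq_2+cq_3+dq_4=0$ for scalars $a,b,c,d$. Reading off the block attached to vertex $i$ yields the affine relation $[a,b]^{T}+c\,p_i^{\bot}+d\alpha\,p_i=0$, which must hold at every vertex position. Because the positions are pairwise distinct, I would evaluate this relation at two distinct vertices $p_i\neq p_{i'}$ and subtract to cancel $[a,b]^{T}$, leaving a homogeneous $2\times2$ system in $(c,d\alpha)$ whose coefficient matrix has determinant $-\|p_i-p_{i'}\|^2\neq 0$; hence $c=d\alpha=0$, then $a=b=0$, and since $\alpha\neq 0$ also $d=0$. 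Thus the four generators are independent and the rank bound follows. The translations are essentially free (zero row sum), so the only real work is the two scalar identities $N_{ij}(p_i-p_j)=0$ and $N_{ij}(p_i-p_j)^{\bot}=1$ underpinning the scaling and rotation cases, together with the observation that independence rests only on the distinctness of the positions and not on genericity --- precisely what lets the bound hold for \emph{every} angularity.
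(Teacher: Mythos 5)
Your proof is correct, and it supplies substantially more than the paper does: the paper states this lemma without proof, offering only the preceding geometric remark that translation, rotation and scaling preserve angles and are therefore "captured" by the null space. Your route makes this rigorous at the level of the matrix itself. Reducing the row of $B(p)$ indexed by $(i,j,k)$ to $N_{ij}(\dot p_i-\dot p_j)+N_{kj}(\dot p_j-\dot p_k)$ via $N_{ji}=-N_{ij}$, $N_{jk}=-N_{kj}$, and then invoking the two scalar identities $N_{ij}(p_i-p_j)=0$ and $N_{ij}(p_i-p_j)^{\bot}=1$, verifies the inclusion for every configuration with distinct positions; this is a genuine gain over arguing through the derivation of (\ref{eq/ABC}), which divides by $\sin\beta$ and so degenerates when a constrained angle equals $0$ or $\pi$. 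Likewise, the linear-independence step, which is what actually converts nullity at least $4$ into $\mathrm{Rank}(B(p))\le 2N-4$, is only asserted in the paper; your subtraction argument with determinant $-\|p_i-p_{i'}\|^2\neq 0$ proves it using nothing beyond distinctness of the positions and $\alpha\neq 0$. One point you should flag explicitly rather than fix silently: you read $q_3$ as the rotation field with $i$-th block $p_i^{\bot}$. That is clearly what is intended, but it is not what the paper literally writes: since $Q_0$ is the clockwise quarter-turn and $(\cdot)^{\bot}$ the counter-clockwise one, $Q_0p_i^{\bot}=p_i$, so the paper's $q_3$ as printed coincides with $q_4/\alpha$ and the four printed vectors would not be independent. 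Under your (corrected) reading, the computation $N_{ij}(p_i-p_j)^{\bot}+N_{kj}(p_j-p_k)^{\bot}=1-1=0$ is exactly the right verification, and the lemma holds as stated.
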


Obviously the row rank of the angle rigidity matrix, or equivalently its row linear dependency, is a critical property of an angularity. We capture this property by using the notion of ``independent" angles. 

\begin{definition}\label{AR/def-dependence}
	For an angularity $\mathbb{A}(\mathcal{V},\mathcal{A},p)$, we say its angles in $f_\mathcal{A}(p)$   are \emph{independent} if its angle rigidity matrix $B(p)$ has full row rank. 
\end{definition}

Since rank is a generic property of a matrix, one may wonder whether it is possible to disregard $p$ of $\mathbb A$ and define angle rigidity only using $\mathcal A$. This is indeed doable as what we will show in the following subsection. 
Note that $2N-4$ is the maximum rank that $B(p)$ can have. When $p$ is generic, the exact realization of $p$ is not important, and when checking the angle rigidity matrix's rank, one can replace $p$ by a random realization. 

Using the notion of infinitesimal motion, checking the rank of the rigidity matrix can also enable us to check  ``infinitesimal" angle rigidity.

\subsection{Infinitesimal angle rigidity}
To consider infinitesimal motion, suppose that each $ p_i, \forall i\in \mathcal{V}$ of $ \mathbb{A}(\mathcal{V},\mathcal{A},p)$ is on a differentiable smooth path. We say the whole path $p(t)$ is generated by an \textit{{{infinitesimally angle rigid motion}}} of $\mathbb A$ if on the path $f_\mathcal{A}(p)$ remains constant. We say such an infinitesimally angle rigid motion $p(t)$ is \textit{{{trivial}}} if it can be given by \cite{connelly2015frameworks}
\begin{equation}
p_i(t)=\alpha(t) Q(t)p_i(t_0)+W(t),\forall i\in \mathcal{V}, t\geq t_0, \label{AR/eq-eq13}
\end{equation}
where $\alpha(t)\neq 0$ is a scalar scaling factor, $Q(t)\in \R^{2\times 2}$ is a rotation matrix, $W(t)\in \R^{2}$ is a translation vector, and $\alpha(t),Q(t),W(t)$ are all differentiable smooth functions. Since all  $p_i(t),\forall i \in \mathcal{V}$, share the same $\alpha(t), Q(t), W(t)$, it follows
\begin{equation}
p(t)=\{I_{N}\otimes  [\alpha(t)Q(t)]\}p(t_0)+1_{N}\otimes W(t), t\geq t_0. \label{AR/eq-eqn12}
\end{equation}
where $I_N$ and $1_N$ denote the $N\times N$ identity matrix and $N\times 1$ column vector of all ones, respectively. Now we are ready to define infinitesimal angle rigidity.
\begin{definition}\label{AR/def-infinitesimal2}
	An angularity $\mathbb{A}(\mathcal{V},\mathcal{A},p)$ is \textit{{{infinitesimally angle rigid}}} if all its continuous infinitesimally angle rigid motion $p(t)$ are trivial.
\end{definition}
In fact, if the motion $p(t)$ always satisfy (\ref{eq/ABC}), it must be a combination of translation, rotation and scaling of $\mathbb A$, which must be a motion in  (\ref{AR/eq-eqn12}). The converse also holds, namely a trivial motion satisfying (\ref{AR/eq-eqn12}) is always a combination of translation, rotation and scaling and thus preserves angle constraints as indicated by (\ref{eq/ABC}). We formalize these remarks in the following theorem. 


\begin{theorem}  \label{AR/theo-infinitesimalrigidity}
	An angularity $ \mathbb{A}(\mathcal{V},\mathcal{A},p)$ is infinitesimally angle rigid if and only if the rank of its angle rigidity matrix $B(p)$ is $2N-4$. 
\end{theorem}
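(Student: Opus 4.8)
The plan is to establish the biconditional by connecting the algebraic rank condition on $B(p)$ to the geometric characterization of trivial motions already fixed in Definition \ref{AR/def-infinitesimal2}. The key bridge is the identity $B(p)\dot p=0$, which by construction of the angle rigidity matrix holds exactly when the velocity field $\dot p$ preserves all the angle constraints in $\mathcal A$ to first order; thus $\text{Null}(B(p))$ is precisely the space of infinitesimal angle-preserving velocities. Lemma \ref{AR/lem-rankrigiditymatrix} already supplies the inclusion $\text{Span}\{q_1,q_2,q_3,q_4\}\subseteq\text{Null}(B(p))$ and the bound $\text{Rank}(B(p))\leq 2N-4$, so the entire theorem reduces to showing that this inclusion is an \emph{equality} if and only if the rank attains its maximum $2N-4$.

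First I would argue that the four trivial generators $q_1,q_2,q_3,q_4$ are linearly independent, so that $\dim\text{Span}\{q_1,q_2,q_3,q_4\}=4$; here I would invoke the assumption that the positions are distinct and do not all coincide (and are not all collinear through the origin in a degenerate way) to rule out any nontrivial linear relation among translation, rotation, and scaling fields. Given this, the inclusion from Lemma \ref{AR/lem-rankrigiditymatrix} becomes an equality exactly when $\dim\text{Null}(B(p))=4$, which by the rank--nullity theorem applied to $B(p)\in\R^{M\times 2N}$ is equivalent to $\text{Rank}(B(p))=2N-4$. This is the purely linear-algebraic half of the argument.

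The second half is to identify $\text{Span}\{q_1,q_2,q_3,q_4\}$ with the tangent space of trivial motions at $p$. For the forward direction I would differentiate the trivial-motion parametrization \eqref{AR/eq-eqn12} at $t=t_0$: writing $\alpha(t)Q(t)$ and $W(t)$ as smooth curves through $\alpha(t_0)=1$, $Q(t_0)=I_2$, $W(t_0)=0$, the derivative $\dot p(t_0)$ decomposes into a translation part spanned by $q_1,q_2$, a rotation part (coming from $\dot Q(t_0)$, a scalar multiple of the skew-symmetric $Q_0$) giving $q_3$, and a scaling part (from $\dot\alpha(t_0)$) giving $q_4$. Hence every trivial motion has velocity in $\text{Span}\{q_1,q_2,q_3,q_4\}\subseteq\text{Null}(B(p))$, confirming that trivial motions are always angle-preserving. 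For the converse I would show that when $\text{Rank}(B(p))=2N-4$, every infinitesimal angle-preserving velocity lies in the trivial span and therefore integrates to a trivial motion of the form \eqref{AR/eq-eqn12}; conversely, if the rank drops below $2N-4$, the null space strictly contains the trivial span, producing a nontrivial angle-preserving velocity and hence a nontrivial infinitesimally angle rigid motion, contradicting infinitesimal angle rigidity.

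The main obstacle I anticipate is the converse integration step, namely verifying that a velocity lying in $\text{Null}(B(p))$ but outside the trivial span genuinely corresponds to a \emph{continuous} nontrivial angle-rigid motion $p(t)$ in the sense of Definition \ref{AR/def-infinitesimal2}, rather than merely an instantaneous velocity. The care here is that infinitesimal rigidity is a statement about all smooth paths along which $f_{\mathcal A}(p)$ stays constant, so I would need to argue that any such path has velocity in $\text{Null}(B(p))$ at each instant (immediate from differentiating $f_{\mathcal A}(p(t))=\text{const}$ and the derivation of \eqref{eq/ABC}), and then that the maximal-rank condition forces the velocity to stay in the finite-dimensional trivial span, whose integral curves are exactly the similarity transformations in \eqref{AR/eq-eqn12}. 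A secondary subtlety is the genericity or nondegeneracy hypothesis needed so that $\sin\beta\neq0$ and the $N_{ij}$ entries are well defined (no coincident positions), ensuring $B(p)$ is the correct linearization; I would flag the collinear and coincident cases as the places where the entries of \eqref{AR/eq-eq11} could degenerate.
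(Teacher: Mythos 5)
Your proposal is correct and follows essentially the same route as the paper's own proof: both identify $\text{Null}(B(p))$ with the first-order angle-preserving velocities, identify the trivial motions with $\text{Span}\{q_1,q_2,q_3,q_4\}$, and conclude via rank--nullity that the null space equals the trivial span exactly when $\text{Rank}(B(p))=2N-4$. If anything, you are more careful than the paper, which asserts without proof the linear independence of the $q_i$ and the equivalence between trivial motions and that span; the integration subtlety you flag --- that a null-space vector outside the trivial span must actually generate a continuous nontrivial angle-preserving motion for the ``only if'' direction --- is glossed over in the paper's one-paragraph argument as well.
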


\begin{proof} 	In view of the definition,  $\mathbb{A}$ is infinitesimally angle rigid if and only if all its infinitesimally angle rigid motions are trivial. That is to say, these trivial infinitesimally angle rigid motions $p(t),t\in[t_0,t_1]$ are exactly the combination of translation, rotation, and scaling with respect to the initial configuration $p(t_0)$, which are precisely captured by the four linearly independent vectors $q_1$, $q_2$, $q_3$, and $q_4$, which in turn is equivalent to the fact that the rigidity matrix's null space is precisely the span of $\{q_1, q_2, q_3, q_4\}$. The conclusion then follows from the fact that such a specification of the null space holds if and only if the rank of the rigidity matrix reaches its maximum $2N-4$.
\end{proof}

Note that this theorem implies that  $ \mathbb{A}(\mathcal{V},\mathcal{A},p)$ is infinitesimally angle rigid if and only if there are $2N-4$ independent angles in $f_\mathcal A(p)$. We want to further remark that when $p$ is generic, and if one of the following three combinatorial structures appears, then the angles are always dependent.  \\
(1) A cycle formed by the triplets in $\mathcal A$. For example, $\mathcal{A}=\{(i,j,k),(j,k,m),(k,m,n),(m,n,l),(n,l,i),(l,i,j)\}$, see Fig. \ref{AR/fig-dependence}.(a).\\
(2) Angles around a vertex. For example, $\mathcal{A}=\{(i,m,j),(j,m,k),(k,m,i)\}$, see Fig. \ref{AR/fig-dependence}.(b).\\
(3) A nonempty subset $\mathcal{A}'\subset \mathcal{A}$ such that the number  $N'$ of the involved vertices in $\mathcal{A}'$ satisfies $|\mathcal{A}'|> 2 N'-4$. For example, $\mathcal{A}=\{(i,m,j),(m,j,i),(i,k,j),(i,j,k),(k,m,j),(n,i,m),\\(n,m,i)\}$ and $\mathcal{A}'=\{(i,m,j),(m,j,i),(i,k,j),(i,j,k),\\(k,m,j)\}$, and thus $N'=4$, $|\mathcal{A}'|=5$ in Fig. \ref{AR/fig-dependence}. (c).

\begin{figure}[H]
	\centering
	\includegraphics[width=7cm]{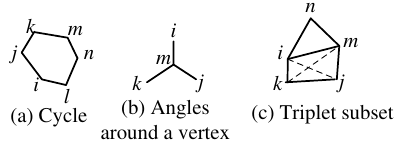}
	\caption{Types of dependent triplet elements}
		\label{AR/fig-dependence}
\end{figure}
If $\mathcal{A}$ contains one of the above three combinatorial structures, we say the triplet elements in $\mathcal{A}$ are dependent; otherwise, they are independent. One can further quantify the number of triplet elements such that the angularity is infinitesimally angle rigid. 
\begin{theorem} \label{AR/theo-infinitesimaltriplet}
	For an angularity $ \mathbb{A}(\mathcal{V},\mathcal{A},p)$, if it is infinitesimally angle rigid, then it has $2N-4$ independent triplet elements in $\mathcal{A}$.
\end{theorem}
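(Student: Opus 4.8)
The plan is to convert the analytic rank characterization of infinitesimal rigidity into the desired combinatorial count by extracting a maximal set of linearly independent rows of the angle rigidity matrix and showing that such a set cannot contain any of the three dependence structures described just before the theorem.

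First I would invoke Theorem~\ref{AR/theo-infinitesimalrigidity}: since $\mathbb{A}$ is infinitesimally angle rigid, $\text{Rank}(B(p))=2N-4$. By the definition of matrix rank I can then select exactly $2N-4$ rows of $B(p)$ that are linearly independent; let $\mathcal{S}\subseteq\mathcal{A}$ denote the corresponding set of $2N-4$ triplets. The whole argument then reduces to showing that $\mathcal{S}$ is combinatorially \emph{independent}, i.e. it contains none of the three structures (a cycle, angles around a vertex, or an over-constrained subset).

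I would prove this by contraposition, showing that each structure forces a linear dependence among its associated rows of $B(p)$, which would contradict the linear independence of the rows indexed by $\mathcal{S}$. For structures (1) and (2) the key observation is that the row of $B(p)$ indexed by $(i,j,k)$ is, up to the nonzero factor $\sin\beta$, the gradient $\nabla f_m$ of the corresponding angle function $f_m$; hence any identity of the form $\sum_m c_m f_m \equiv \text{const}$ with $c_m=\pm 1$ differentiates into a nontrivial vanishing linear combination of the corresponding rows. A cycle of triplets gives the polygon angle-sum identity and angles around a single vertex give the full-turn ($2\pi$) identity, so in both cases such a constant-sum relation holds at every admissible $p$, yielding the dependence. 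For structure (3), if a nonempty subset $\mathcal{A}'\subseteq\mathcal{S}$ involves only $N'$ vertices, then each row indexed by $\mathcal{A}'$ is supported on the $2N'$ columns of those vertices and, restricted there, lies in the row space of the rigidity matrix of the sub-angularity on $\mathcal{A}'$; by Lemma~\ref{AR/lem-rankrigiditymatrix} that row space has dimension at most $2N'-4$, so $|\mathcal{A}'|>2N'-4$ forces linear dependence. In every case the rows indexed by $\mathcal{S}$ would be dependent, a contradiction; therefore $\mathcal{S}$ is independent and supplies the required $2N-4$ independent triplets. As a byproduct, applying structure (3) with $\mathcal{A}'=\mathcal{S}$ and $N'=N$ shows no independent set can exceed $2N-4$, so the count is tight.

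The step I expect to be the main obstacle is establishing the constant-sum identities underlying structures (1) and (2) with full rigor for the \emph{signed} counterclockwise angles: in particular, pinning down the correct coefficients $c_m=\pm 1$ for the cyclic (polygon) case and confirming that the differentiated relation holds at the given $p$ even when $p$ is not generic, rather than only generically. The reduction via selecting independent rows and the counting argument for structure (3) are then routine consequences of Theorem~\ref{AR/theo-infinitesimalrigidity} and Lemma~\ref{AR/lem-rankrigiditymatrix}.
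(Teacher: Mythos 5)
Your proposal is correct and takes essentially the same approach as the paper: invoke Theorem~\ref{AR/theo-infinitesimalrigidity} to extract $2N-4$ linearly independent rows of $B(p)$, then conclude that the corresponding triplets are combinatorially independent via the contrapositive implication that dependent triplets force linearly dependent rows. The paper compresses that implication into the remark that it is ``easy to prove\ldots by using the structure of angle rigidity matrix $B(p)$,'' and your three-case analysis is precisely this omitted detail; moreover, the obstacle you flag about non-generic $p$ dissolves, since the row relations for a cycle and for angles around a vertex (the relevant rows summing to zero) are algebraic identities in the blocks $N_{ij}$ that hold at every $p$ with distinct points, not merely generically.
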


\begin{proof}
	From Theorem \ref{AR/theo-infinitesimalrigidity}, we know $\mathbb{A}$ has $2N-4$ independent angles in $f_{\mathcal{A}}(p)$. In addition, by using the structure of angle rigidity matrix $B(p)$, it is easy to prove that dependent triplet elements in $\mathcal{A}$ $\Rightarrow$ dependent angles in $f_\mathcal{A}(p)$, which implies that independent angles in $f_\mathcal{A}(p)$ $\Rightarrow$ independent triplet elements in $\mathcal{A}$. So its angle set $\mathcal{A}$ has $2N-4$ independent triplet elements. 
\end{proof}

Now we show that when $p$ is generic, angle rigidity and infinitesimal angle rigidity are equivalent. For an angularity $\mathbb{A}(\mathcal{V},\mathcal{A},p)$ with a given $p$, define $\mathcal{M} (\mathbb A):=\{q\in \R^{2N}| \mathbb{A}(\mathcal{V},\mathcal{A},q)\text{ is congruent to~} \mathbb{A}(\mathcal{V},\mathcal{A},p)\}$, which is the manifold where the angle functions $f_{\mathcal{A}^*}(q)$ remain the same as $f_{\mathcal{A}^*}(p)$ where $\mathcal{A}^*=\mathcal{V} \times  \mathcal{V} \times  \mathcal{V}=\{(i,j,k), \forall i,j,k \in \mathcal{V},i\neq j\neq k\}$.

	\begin{lemma}\label{AR/lem-genericcoincide}
	An angularity $\mathbb{A}(\mathcal{V},\mathcal{A},p)$ is angle rigid if and only if $\mathcal{M}$ and $f_{\mathcal{A}}^{-1}(f_{\mathcal{A}}(p))$ coincide near $p$.
\end{lemma}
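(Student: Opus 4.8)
The plan is to unwind the two set-theoretic objects appearing in the statement back into the notions of equivalence and congruence, and then to observe that the asserted coincidence is simply a reformulation of Definition~\ref{AR/def-localrigidity}. First I would record the two identifications. By Definition~\ref{AR/def-equivalent}, a configuration $q\in\R^{2N}$ lies in $f_{\mathcal{A}}^{-1}(f_{\mathcal{A}}(p))$ precisely when $\mathbb{A}(\mathcal{V},\mathcal{A},q)$ is \emph{equivalent} to $\mathbb{A}(\mathcal{V},\mathcal{A},p)$, since $f_{\mathcal{A}}(q)=f_{\mathcal{A}}(p)$ is exactly the requirement that $\measuredangle ijk$ agree at $q$ and $p$ for every $(i,j,k)\in\mathcal{A}$. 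Likewise $q\in\mathcal{M}$ precisely when $\mathbb{A}(\mathcal{V},\mathcal{A},q)$ is \emph{congruent} to $\mathbb{A}(\mathcal{V},\mathcal{A},p)$, so that $\mathcal{M}=f_{\mathcal{A}^{*}}^{-1}(f_{\mathcal{A}^{*}}(p))$ with $\mathcal{A}^{*}=\mathcal{V}\times\mathcal{V}\times\mathcal{V}$ restricted to distinct triples.

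Next I would dispose of the easy, globally valid inclusion $\mathcal{M}\subseteq f_{\mathcal{A}}^{-1}(f_{\mathcal{A}}(p))$. Since $\mathcal{A}\subseteq\mathcal{A}^{*}$, any configuration whose angles agree with those of $p$ on \emph{all} triples in $\mathcal{A}^{*}$ in particular agrees on the sub-collection $\mathcal{A}$; hence congruence implies equivalence, and this inclusion holds on all of $\R^{2N}$, and therefore in every neighbourhood of $p$. Consequently, the two sets coincide near $p$ if and only if the reverse inclusion holds locally, i.e.\ there is an $\epsilon>0$ such that every $q$ with $\|q-p\|<\epsilon$ and $q\in f_{\mathcal{A}}^{-1}(f_{\mathcal{A}}(p))$ also satisfies $q\in\mathcal{M}$.

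Finally I would match this reverse inclusion to angle rigidity. The condition just stated says exactly that every $q$ within distance $\epsilon$ of $p$ that is equivalent to $p$ is also congruent to $p$, which is verbatim the defining property in Definition~\ref{AR/def-localrigidity}. For the forward direction I would take the $\epsilon$ supplied by angle rigidity and read off the reverse inclusion; for the converse I would take the $\epsilon$ witnessing the coincidence and read off the rigidity property. Since every step is a direct translation of definitions, I expect no genuine obstacle here; the only point needing care is the matching of the phrase ``coincide near $p$'' with the existential $\epsilon$-neighbourhood of Definition~\ref{AR/def-localrigidity}, together with the observation that the easy inclusion is \emph{global}, so that coincidence near $p$ is governed entirely by the rigidity-type inclusion and not by the trivial one. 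The main residual risk is merely a bookkeeping slip over which set is contained in which.
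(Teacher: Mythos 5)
Your proof is correct and is precisely the argument the paper has in mind: the paper omits its own proof, citing similarity to Roth's Proposition~5.1 for distance rigidity, and that standard argument is exactly your definitional unwinding — identify $f_{\mathcal{A}}^{-1}(f_{\mathcal{A}}(p))$ with the equivalent configurations and $\mathcal{M}=f_{\mathcal{A}^{*}}^{-1}(f_{\mathcal{A}^{*}}(p))$ with the congruent ones, note the global inclusion $\mathcal{M}\subseteq f_{\mathcal{A}}^{-1}(f_{\mathcal{A}}(p))$ coming from $\mathcal{A}\subseteq\mathcal{A}^{*}$, and observe that local coincidence then reduces to the reverse local inclusion, which is verbatim Definition~\ref{AR/def-localrigidity}. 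No gaps; your care in noting that only the rigidity-type inclusion is at stake locally is exactly the right bookkeeping.
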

The proof is similar to that for distance rigidity in \cite[Proposition 5.1]{roth1981rigid} and thus omitted here.

\begin{theorem} \label{AR/theo-genericinfinitesimal}
When $p$ is generic, an angularity $ \mathbb{A}(\mathcal{V},\mathcal{A},p)$ is infinitesimally angle rigid if and only if it is angle rigid.
\end{theorem}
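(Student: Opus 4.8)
The plan is to run the classical Asimow--Roth regular-point argument, using Lemma~\ref{AR/lem-genericcoincide} and Theorem~\ref{AR/theo-infinitesimalrigidity} as the two bridges between geometry and linear algebra. First I would record three preliminaries. (i) The matrix $B(p)$ displayed in \eqref{AR/eq-eq11} is, at a generic $p$, a nonzero row-rescaling of the Jacobian $Df_{\mathcal A}(p)$ of the angle map: each row of the system \eqref{eq/ABC} differs from the corresponding row of \eqref{AR/eq-eq11} by the scalar $l_{ij}l_{jk}\sin\beta$, which is nonzero at generic $p$ (no constrained angle is $0$ or $\pi$). Hence $\text{Null}(B(p))$ is exactly the space of infinitesimal angle-preserving motions and $\text{Rank}(B(p))=\text{Rank}(Df_{\mathcal A}(p))$. (ii) The congruence set $\mathcal M(\mathbb A)$ is the orbit of $p$ under the group of orientation-preserving similarities of the plane (translations, rotations and positive scalings preserve every signed angle, while the signed-angle convention rules out reflections); since the points of $p$ are distinct this group acts with trivial stabilizer, so $\mathcal M$ is a smooth $4$-dimensional manifold near $p$ with tangent space $\text{Span}\{q_1,q_2,q_3,q_4\}$. (iii) One always has $\mathcal M\subseteq f_{\mathcal A}^{-1}(f_{\mathcal A}(p))$, because a configuration congruent to $p$ in particular preserves the constrained angles in $\mathcal A$.

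The second step is to convert genericity into a regular-point statement. The entries of $B(p)$ in \eqref{AR/eq-eq11} are the blocks $N_{ij}=\big((p_i-p_j)^{\bot}/l_{ij}^2\big)^T$, which are rational functions of the coordinates of $p$ with rational coefficients. Consequently every minor of $B(p)$ is such a rational function, and the locus where $\text{Rank}(B(p))$ falls below its maximal value $r_{\max}$ (the maximum of $\text{Rank}(B(q))$ over all $q\in\R^{2N}$) lies in the common zero set of the numerators of the top-order minors, i.e.\ of finitely many polynomials with rational coefficients. A generic $p$, having algebraically independent coordinates, cannot lie on such a proper subvariety, so $\text{Rank}(B(p))=r_{\max}$ and the rank is locally constant at $p$. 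By the constant-rank theorem, $f_{\mathcal A}^{-1}(f_{\mathcal A}(p))$ is then a smooth manifold near $p$ of dimension $2N-\text{Rank}(B(p))$.

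The final step is a dimension count. If $\mathbb A$ is infinitesimally angle rigid, then $\text{Rank}(B(p))=2N-4$ by Theorem~\ref{AR/theo-infinitesimalrigidity}, which is the maximal possible value by Lemma~\ref{AR/lem-rankrigiditymatrix}, so $f_{\mathcal A}^{-1}(f_{\mathcal A}(p))$ is a $4$-manifold through $p$; since the $4$-manifold $\mathcal M$ is contained in it and passes through the same point, the two coincide near $p$, and Lemma~\ref{AR/lem-genericcoincide} gives angle rigidity. Conversely, if $\mathbb A$ is angle rigid, Lemma~\ref{AR/lem-genericcoincide} makes $\mathcal M$ and $f_{\mathcal A}^{-1}(f_{\mathcal A}(p))$ coincide near $p$, so they share a dimension; combining $\dim\mathcal M=4$ from Step~1(ii) with $\dim f_{\mathcal A}^{-1}(f_{\mathcal A}(p))=2N-\text{Rank}(B(p))$ from Step~2 forces $\text{Rank}(B(p))=2N-4$, i.e.\ infinitesimal angle rigidity by Theorem~\ref{AR/theo-infinitesimalrigidity}.

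I expect the main obstacle to be the rigor of Step~2: one must verify that, after clearing the $\sin\beta$ rescaling (legitimate precisely because $\sin\beta\neq0$ at generic $p$), the rank-dropping locus is genuinely cut out by polynomials with rational coefficients, so that algebraic independence of the coordinates of $p$ guarantees a regular point. A secondary subtlety is justifying Step~1(ii)'s identification of signed-angle congruence with the orientation-preserving similarity orbit, since this is what pins $\dim\mathcal M=4$ and lets the dimension bookkeeping close.
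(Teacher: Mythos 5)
Your proposal is correct and takes essentially the same route as the paper: both are the Asimow--Roth argument comparing the congruence manifold $\mathcal{M}$ with the fiber $f_{\mathcal{A}}^{-1}(f_{\mathcal{A}}(p))$ near $p$, using Lemma~\ref{AR/lem-genericcoincide} to convert local coincidence into angle rigidity and Theorem~\ref{AR/theo-infinitesimalrigidity} to convert $\text{Rank}(B(p))=2N-4$ into infinitesimal angle rigidity. The only real difference is rigor: your Step~2 (genericity forces $B(p)$ to attain its maximal, locally constant rank, so the fiber is a smooth manifold of dimension $2N-\text{Rank}(B(p))$) spells out the regular-point justification that the paper's proof uses implicitly but never states.
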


\begin{proof} 	
(Sufficiency) Since  $\mathcal{M}$ is a subset of $f_{\mathcal{A}}^{-1}(f_{\mathcal{A}}(p))$ as $\mathcal{A}$ is a subset of $\mathcal A^*$, when $\mathbb A$ is infinitesimally angle rigid,   $\mathcal{M}$  becomes the 4-dimensional manifold of configurations corresponds to the trivial infinitesimally angle rigid motions. From Lemma \ref{AR/lem-genericcoincide} we know that $\mathcal{M}$ and $f_{\mathcal{A}}^{-1}(f_{\mathcal{A}}(p))$ coincide near $p$, so the motions from $p$ to $f_{\mathcal{A}}^{-1}(f_{\mathcal{A}}(p))$ are always trivial when $p$ is generic. Then $\mathbb A$ is infinitesimally angle rigid according to Definition \ref{AR/def-infinitesimal2}.
	
(Necessity)	From Definition \ref{AR/def-infinitesimal2}, we know that all the continuous infinitesimally angle rigid motion $p(t)$ are trivial, which are the combination of  translation, rotation, and scaling of $\mathbb{A}(\mathcal{V},\mathcal{A},p)$. Consider another angularity $\mathbb{A}(\mathcal{V}, \mathcal{A}, p')$ with $\varepsilon>0$ and $\|p'-p\|<\varepsilon$, which is equivalent to $\mathbb{A}(\mathcal{V}, \mathcal{A}, p)$. Then, the continuous motion from $p$ to $p'$ are the combination of translation, rotation and scaling of $\mathbb{A}(\mathcal{V},\mathcal{A},p)$, which are angle-preserving motion, i.e., $f_{\mathcal{A}^*}(p)$ remain constant. Therefore,  $\mathbb{A}(\mathcal{V}, \mathcal{A}, p')$ is congruent to  $\mathbb{A}(\mathcal{V}, \mathcal{A}, p)$, which implies that $\mathbb{A}(\mathcal{V}, \mathcal{A}, p)$ is angle rigid. 
\end{proof}

We use the following example to illustrate the difference between angle rigidity and infinitesimal angle rigidity. The angularity in the left of Fig. \ref{AR/fig-generic}  is angle rigid but not infinitesimally angle rigid, while the angularity on the right is both angle rigid and infinitesimally angle rigid. 

\begin{figure}[H]
	\centering
	\includegraphics[width=7.8cm]{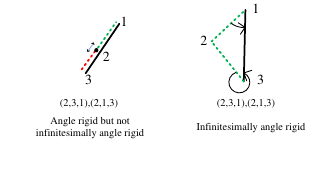}
	\caption{Difference between angle rigid angularity and infinitesimally angle rigid angularity}
	\label{AR/fig-generic}
\end{figure}

We further use the following examples to illustrate the difference among independent triplet elements, generic configuration, and infinitesimal angle rigidity, where the angularities in (a) and (b) share the same shape and the angularities in (b) and (c) share the same angle set $\mathcal A$. The angularity in Fig. \ref{AR/fig-relation}(a) is angle rigid although $p_2$, $p_3$, and $p_4$ are collinear; the one in (b) is angle flexible as it admits another positioning of 2 and 3 at $p_2'$ and $p_3'$ respectively. This is because the three collinear points exactly distributed in the two triplet elements $(3,2,4)$ and $(2,4,3)$.  The one in  Fig. \ref{AR/fig-relation} (c) is infinitesimally angle rigid, and thus equivalently generically angle rigid.
	
\begin{figure}[H]
	\centering
	\includegraphics[width=9.0cm]{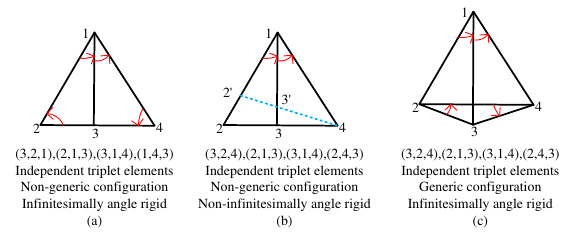}
	\caption{Relationship among generic configuration, independent triplet elements, and infinitesimal angle rigidity.}
		\label{AR/fig-relation}
\end{figure}

For infinitesimally angle rigid angularities, we now discuss when its number of angles in $\mathcal A$ becomes the minimum. Towards this end, we need to clarify what we mean by minimal angle rigidity.

\begin{definition}\label{AR/def-minimal}
An angularity $\mathbb{A}(\mathcal{V},\mathcal{A},p)$ is \emph{minimally angle rigid} if it is angle rigid and fails to remain so after removing any element in $\mathcal{A}$.
\end{definition}

\begin{definition}\label{AR/def-minimalrigidity}
	An angularity $\mathbb{A}(\mathcal{V},\mathcal{A},p)$ is \emph{infinitesimally minimally angle rigid} if it is infinitesimally angle rigid and minimally angle rigid.
\end{definition}

Since  $\text{Rank}[B(p)]\leq 2N-4$, the minimum number of angle constraints in $f_{\mathcal{A}}(p)$ to maintain infinitesimal angle rigidity is exactly $2N-4$. So we immediately  have the following lemma.

\begin{lemma}\label{AR/lem-minimal}
	An angularity $\mathbb{A}(\mathcal{V},\mathcal{A},p)$ is infinitesimally minimally angle rigid if and only if it is infinitesimally angle rigid and $|\mathcal{A}|=2N-4$.
\end{lemma}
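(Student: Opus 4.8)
The plan is to translate the entire statement into a count of rows against the rank of the angle rigidity matrix $B(p)$. The two ingredients already in hand are Theorem~\ref{AR/theo-infinitesimalrigidity}, which equates infinitesimal angle rigidity with $\text{Rank}(B(p))=2N-4$, and Lemma~\ref{AR/lem-rankrigiditymatrix}, which caps the rank at $2N-4$. To these I only need to add the elementary observation that $B(p)$ has exactly $|\mathcal{A}|=M$ rows, so $\text{Rank}(B(p))\le |\mathcal{A}|$. Both implications of the equivalence then come out of comparing $|\mathcal{A}|$ with the attained rank.

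For the ``only if'' direction I would start from an infinitesimally minimally angle rigid $\mathbb{A}$, which is in particular infinitesimally angle rigid, so $\text{Rank}(B(p))=2N-4$ and hence $|\mathcal{A}|\ge 2N-4$. To exclude $|\mathcal{A}|>2N-4$ I would argue by contradiction: a strict inequality forces the $|\mathcal{A}|$ rows of $B(p)$ to be linearly dependent, so one row lies in the span of the others; deleting the triplet that indexes this row leaves the row space, and therefore the rank, equal to $2N-4$. By Theorem~\ref{AR/theo-infinitesimalrigidity} the smaller angularity is still infinitesimally angle rigid, hence still angle rigid, which contradicts the minimality clause of Definition~\ref{AR/def-minimal}. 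Thus $|\mathcal{A}|=2N-4$.

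For the ``if'' direction I would assume $\mathbb{A}$ is infinitesimally angle rigid with $|\mathcal{A}|=2N-4$. The first clause of Definition~\ref{AR/def-minimalrigidity} is then immediate, and angle rigidity follows from the standard implication that infinitesimal angle rigidity implies angle rigidity. For minimality, note that $\text{Rank}(B(p))=2N-4=|\mathcal{A}|$ means $B(p)$ has full row rank, so deleting any single row strictly lowers the rank to $2N-5$. Every angularity obtained by removing one triplet therefore has rank below $2N-4$, fails to be infinitesimally angle rigid, and, through the generic equivalence of Theorem~\ref{AR/theo-genericinfinitesimal}, fails to be angle rigid. Hence $\mathbb{A}$ is minimally angle rigid, and combined with the first clause, infinitesimally minimally angle rigid.

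The hard part is not any computation but the mismatch between the two notions of rigidity: minimality in Definition~\ref{AR/def-minimal} is phrased through ordinary angle rigidity, while the rank test only certifies infinitesimal angle rigidity. Closing this gap requires the one-directional implication ``infinitesimal angle rigidity $\Rightarrow$ angle rigidity'' in the necessity half and its converse, available only for generic $p$ via Theorem~\ref{AR/theo-genericinfinitesimal}, in the sufficiency half. I would therefore be careful to invoke these two bridges explicitly, since the row-counting argument by itself controls only the infinitesimal rank and not the non-infinitesimal rigidity appearing in the definition of minimal angle rigidity.
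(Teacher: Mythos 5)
Your proposal is correct and takes essentially the same route as the paper: the paper's entire proof is the one-sentence counting remark immediately preceding the lemma, namely that $\text{Rank}(B(p))\le\min\{|\mathcal{A}|,\,2N-4\}$ forces at least $2N-4$ constraints for infinitesimal angle rigidity, with equality characterizing minimality. Your unpacked version is in fact more careful than the paper's own: the bridge you identify between infinitesimal angle rigidity and the non-infinitesimal angle rigidity appearing in Definition~\ref{AR/def-minimal} is silently skipped by the paper, and, as you correctly note, it is only supplied by Theorem~\ref{AR/theo-genericinfinitesimal} for generic $p$ (indeed, ``angle rigid $\Rightarrow$ infinitesimally angle rigid'' is false for non-generic $p$ by the paper's own example in Fig.~\ref{AR/fig-generic}), so strictly speaking both your argument and the paper's establish the lemma only under a genericity assumption that the statement itself omits.
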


For an infinitesimally minimally distance rigid framework, there must exist a vertex associated with fewer than 4 distance constraints \cite{tay1985generating,whiteley1996some}; otherwise, the total number of distance constraints will be at least $2N$ and thus greater than the minimum number $2N-3$. This property is critical for the success of the Henneberg construction method in order to generate an arbitrary infinitesimally minimally distance rigid framework\cite{laman1970graphs,tay1985generating}. However, for an infinitesimally minimaly angle rigid angularity, the situation is more challenging, which in fact prevents drawing similar conclusions as the Henneberg construction does for distance rigidity. To be more precise, we have the following lemma.

\begin{lemma}\label{AR/lem-minimalvertex}
For an infinitesimally minimally angle rigid angularity $\mathbb{A}(\mathcal{V},\mathcal{A},p)$  with $|\mathcal{A}|= 2N-4$, it must have a vertex involved in more than one but fewer than 6 angle constraints.
\end{lemma}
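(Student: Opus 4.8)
The plan is to combine a double-counting (averaging) argument for the upper bound with a degrees-of-freedom argument for the lower bound. For each vertex $v\in\mathcal V$, define its \emph{angle degree} $d(v)$ to be the number of triplets of $\mathcal A$ that contain $v$ in one of their three slots. Since the three entries of each triplet are distinct, summing over all triplets double-counts vertex incidences and gives $\sum_{v\in\mathcal V}d(v)=3|\mathcal A|$. By Lemma \ref{AR/lem-minimal} we have $|\mathcal A|=2N-4$, so $\sum_{v\in\mathcal V}d(v)=3(2N-4)=6N-12$, and the average angle degree is $6-12/N<6$. As the $d(v)$ are integers, at least one vertex must satisfy $d(v)\le 5$; this yields the ``fewer than $6$'' half of the statement, and it remains only to show $d(v)\ge 2$ for \emph{every} vertex, so that this same vertex works.

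To get $d(v)\ge 2$ I would argue the contrapositive: if some vertex $v$ had $d(v)\le 1$, then $\mathbb A$ could not be infinitesimally angle rigid. Consider the two columns of $B(p)$ indexed by $v$. From the structure of $B(p)$ displayed in (\ref{AR/eq-eq11}), these columns are nonzero only in the at most one row corresponding to the unique triplet containing $v$, so the $M\times 2$ block of $v$'s columns has rank at most $1$ and admits a nonzero $\delta\in\R^2$ in its right null space. Setting $\dot p_v=\delta$ and $\dot p_w=0$ for all $w\ne v$ produces a nonzero vector in $\text{Null}(B(p))$. Here one should record the generic non-degeneracy that makes $\delta$ exist even when $d(v)=1$: the relevant entries $N_{ij}$ in (\ref{AR/eq-eq8}) are nonzero because the positions are distinct, so the column block is a genuine nonzero $1\times 2$ row and its null space is at least one-dimensional.

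The crux, and the step I expect to be the main obstacle, is verifying that this localized motion is \emph{not} trivial, so that it genuinely contradicts infinitesimal angle rigidity through Lemma \ref{AR/lem-rankrigiditymatrix} and Theorem \ref{AR/theo-infinitesimalrigidity} (when the rank is the maximal $2N-4$, $\text{Null}(B(p))$ is exactly the four-dimensional $\text{Span}\{q_1,q_2,q_3,q_4\}$, so every null vector must be trivial). Every trivial motion corresponds to an affine similarity velocity field $x\mapsto Ax+b$ with $A=sI_2+\omega Q_0$ for scalars $s,\omega$. A localized motion vanishes at all $N-1\ge 2$ vertices $w\ne v$; imposing $Ap_w+b=0$ at two distinct such points forces $A(p_w-p_{w'})=0$, and since $\det A=s^2+\omega^2$ this gives $A=0$ and then $b=0$, so the only trivial motion supported on a single vertex is the zero motion. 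As our $\delta$ is nonzero, the localized motion is non-trivial, contradicting infinitesimal angle rigidity. Hence $d(v)\ge 2$ for all $v$, and together with the averaging bound this exhibits a vertex with $2\le d(v)\le 5$, completing the proof.
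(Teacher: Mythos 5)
Your proof is correct, and its two-part skeleton (a counting bound plus a ``degree-one vertices are flexible'' argument) matches the paper's; the upper-bound half is in fact the paper's argument verbatim, phrased contrapositively there (if every vertex appeared in at least $6$ triplets then $|\mathcal{A}|\geq 6N/3=2N$, contradicting Lemma~\ref{AR/lem-minimal}). Where you genuinely depart is the lower bound. The paper disposes of it in one sentence: a vertex involved in only one angle constraint ``is flexible with respect to the rest of the angularity,'' contradicting angle rigidity --- a geometric assertion stated without proof. You instead make this rigorous at the infinitesimal level: you exhibit an explicit element of $\text{Null}(B(p))$ supported on the two columns of the offending vertex, then rule out its triviality by noting that a trivial velocity field $x\mapsto (sI_2+\omega Q_0)x+b$ vanishing at two distinct points forces $s=\omega=0$ and $b=0$. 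This buys two things: it replaces the paper's unproved flexibility claim with a self-contained linear-algebra argument resting only on Lemma~\ref{AR/lem-rankrigiditymatrix}, Theorem~\ref{AR/theo-infinitesimalrigidity}, and the structure of $B(p)$ in (\ref{AR/eq-eq11}); and it proves the stronger fact that \emph{every} vertex of an infinitesimally angle rigid angularity lies in at least two triplets (including the degree-zero case, which the paper's wording skips), whereas the paper argues only about the particular low-degree vertex. The cost is length; the paper's version is shorter but leaves the key step to the reader. One small remark: your aside about the entries $N_{ij}$ being nonzero is unnecessary, since any $M\times 2$ block of rank at most $1$ admits a nonzero right null vector regardless --- but it is harmless.
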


\begin{proof}
	If every vertex is involved in at least 6 angle constraints, then the total number of angle constraints is at least $|\mathcal{A}|\geq \frac{6N}{3}=2N$, which contradicts Lemma \ref{AR/lem-minimal}. Then for that vertex, which has fewer than 6 angle constraints, if it is involved in only one angle constraint, then it is flexible with respect to the rest of the angularity, which contradicts the property of angle rigid. So there must be at least one vertex that is involved in 2, 3, 4 or 5 angle constraints.
\end{proof}

In the following example, we show an infinitesimally minimally angle rigid angularity, whose vertices are all involved in 5 angle constraints Fig. \ref{AR/fig-showdegree5}.

\begin{figure}[H]
	\centering
	\includegraphics[width=6.0cm]{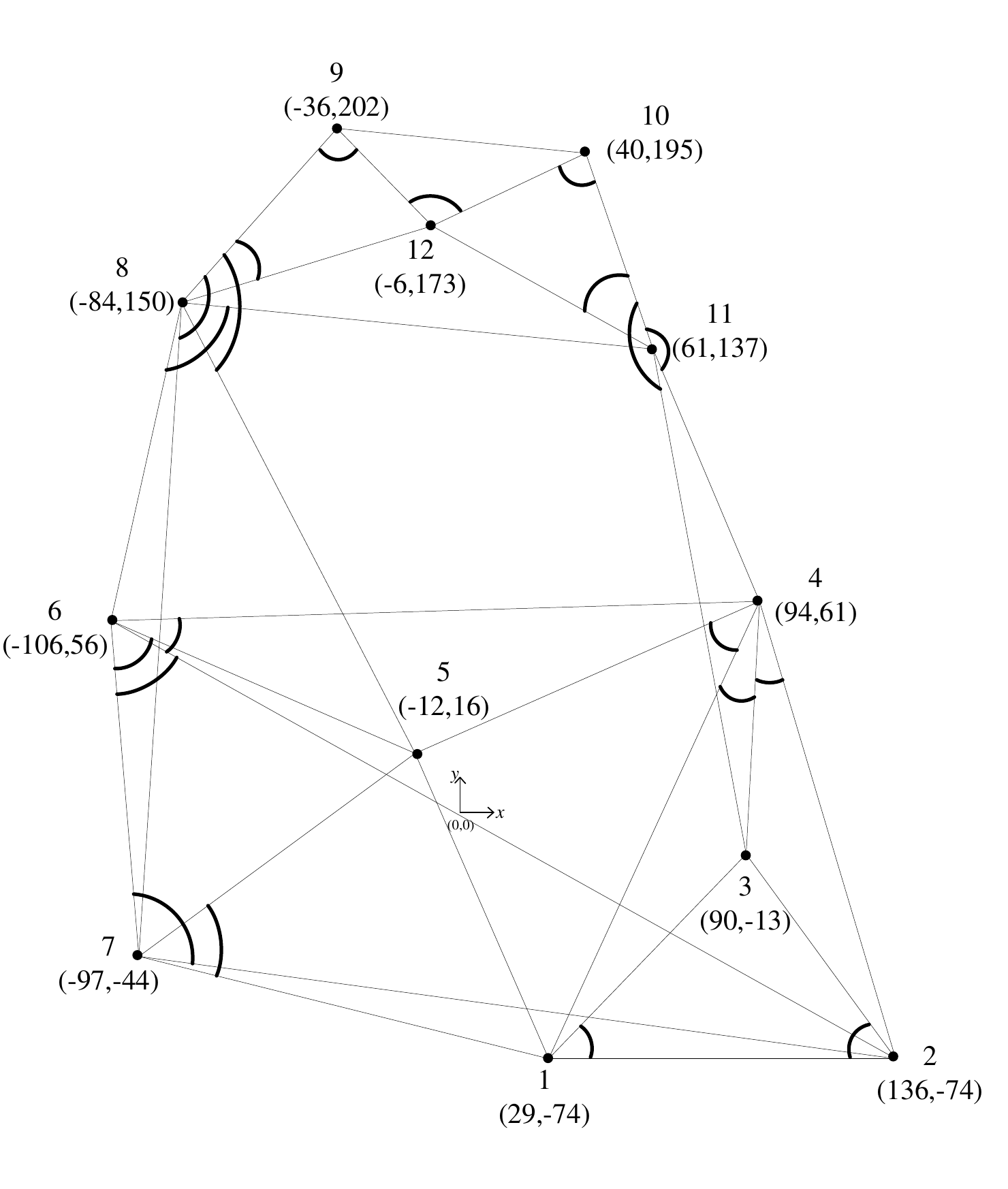}
	\caption{All vertices are involved in 5 angle constraints.}
	\label{AR/fig-showdegree5}
	\centering
\end{figure}

Note that if an angularity $\mathbb{A}(\mathcal{V},\mathcal{A},p)$ with a generic $p$ is infinitesimally minimally angle rigid, then $|\mathcal{A}|=2N-4$, and more importantly, the angles in $\mathcal A$ need to be independent; this also implies that those situations listed after Theorem \ref{AR/theo-infinitesimalrigidity}, namely cyclic angles, angles around a vertex, and overly constrained subsets, cannot show up. In the following section, we show how to apply the angle rigidity theory that we have developed for multi-agent formation control.

\section{Application in multi-agent planar formations}

To  achieve a planar formation by a group of mobile robots, many formation control algorithms have been reported, most of which require the measurement of relative positions\cite{lin2005necessary,anderson2017formation,jing2019angle} or aligned bearings\cite{zhao2016bearing,trinh2019bearing}. In this section we demonstrate how to stabilize a multi-agent planar formation  using only angle measurements with the help of the angle rigidity theory that we have just developed.

For an agent $i$ moving in the plane, we consider its dynamics are governed by
\begin{equation}
\dot{p}_{i}=\begin{bmatrix} \dot x_{i} \\ \dot y_{i} \end{bmatrix}=u_i,i=1,\cdots, N, \label{AR/eq-eq6}
\end{equation}
where $p_i=[x_i,y_i]^{T}\in \R^2$ denotes agent $i$'s position, and $u_i$ is the control input to be designed. Agent $i$ can only measure angles; to be more specific, with respect to another agent $j$, it can only measure the angle $\phi_{ij}\in [0,2\pi)$  with respect to another agent $j$ evaluated counter-clockwise from the   $x$-axis of its own local coordinate system of choice that is fixed to the ground. 

To introduce the control law, we define the bearing $z_{ij}\in \R^2$ to be the unit vector pointing from agent $i$ to $j$ represented in agent $i$'s local coordinate system, i.e., \begin{equation}
z_{ij}=\frac{p_j-p_i}{\|p_j-p_i\|}=\begin{bmatrix} \cos \phi_{ij} \\ \sin \phi_{ij} \end{bmatrix}.
\end{equation}
In the triangle $\triangle ijk$ shown below in Fig. \ref{AR/fig-anglemeasure}, the interior angle $\alpha_{i}$ can be computed by
\begin{equation}
\alpha_{i}=\measuredangle kij=\arccos(z_{ij}^Tz_{ik}),
\end{equation}
using  bearings $z_{ij}$ and $z_{ik}$. Note that the $x$-axes of agents $i$, $j$ and $k$ do not need to align. 

\begin{figure}[H]
	\centering\includegraphics[width=1.7in]{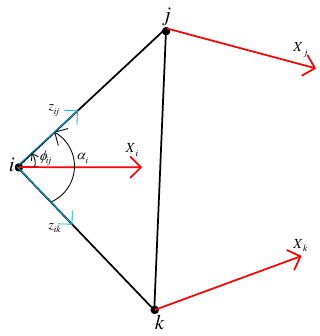} 
	\caption{The angle measurements.}
		\label{AR/fig-anglemeasure}
\end{figure}

We construct the desired planar formation through a sequence of Type-I vertex additions (Case 3) from a generically angle rigid 3-vertex angularity, which is globally angle rigid according to Proposition \ref{AR/prop-globalrigid}. In other words, in an $N$-agent formation, we label the agents by 1 to $N$. Then agents 1, 2, 3 aim at forming the first triangular shape, and each of agents 4 to $N$ aims at achieving two desired angles formed with other three agents, see Fig. \ref{AR/fig-formationproble}. By repeatedly adding new agents through the Type-I vertex addition operation, the aim is to achieve the desired angle rigid formation specified as follows. For agents 1 to 3
\begin{equation}
\lim\nolimits_{t \to \infty}e_1(t)=\lim\nolimits_{t \to \infty}(\alpha_{312}(t)-\alpha_{312}^{*})= 0,
\end{equation}
\begin{equation}
\lim\nolimits_{t \to \infty}e_2(t)=\lim\nolimits_{t \to \infty}(\alpha_{123}(t)-\alpha_{123}^{*})= 0,
\end{equation}
\begin{equation}
\lim\nolimits_{t \to \infty}e_3(t)=\lim\nolimits_{t \to \infty}(\alpha_{231}(t)-\alpha_{231}^{*})= 0,
\end{equation}
where $\alpha_{jik}^{*}\in (0,\pi),i,j,k\in\{1,2,3\}$ denote agent $i$'s desired angle formed with agents $j,k$. For agents 4 to $N$
\begin{equation}
\lim\nolimits_{t \to \infty}e_{i1}(t)=\lim\nolimits_{t \to \infty}(\alpha_{j_1ij_2}(t)-\alpha_{j_1ij_2}^{*})= 0,
\end{equation}
\begin{equation}
\lim\nolimits_{t \to \infty}e_{i2}(t)=\lim\nolimits_{t \to \infty}(\alpha_{j_2ij_3}(t)-\alpha_{j_2ij_3}^{*})= 0,
\end{equation}
where $i=4,\cdots, N$, $j_1<i,j_2<i,j_3<i$, and $\alpha_{j_1ij_2}^{*}\in (0,\pi),\alpha_{j_2ij_3}^{*}\in (0,\pi)$ denote agent $i$'s two desired angles formed with agents $j_1,j_2,j_3\in \{1,2,...,N-1\}$.

\begin{figure}[H]
	\centering\includegraphics[width=2.0in]{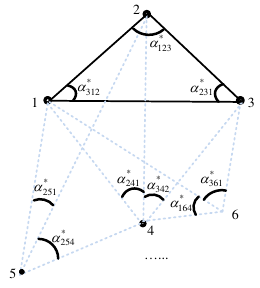} 
	\caption{Problem formulation.}
		\label{AR/fig-formationproble}
\end{figure}


\subsection{Triangular formation control for agents 1 to 3}
To achieve the desired angles for agents 1 to 3, we design their formation control laws
\begin{align}\label{maneuver-4}
u_i=&-{(\alpha_{i}-\alpha_{i}^{*})(z_{i(i+1)}+z_{i(i-1)})},
\end{align}
where $i\in \{1,2,3\}$, $z_{i(i+1)}=z_{31}$ when $i=3$ and  $z_{i(i-1)}=z_{13}$ when $i=1$, and $\alpha_{i}$ represents $\alpha_{(i-1)i(i+1)}$ for conciseness.

	To obtain the convergence of the relative angle errors, we first analyze the dynamics of the relative angle errors $e_i(t),i=1,2,3$.
	Different from \cite{basiri2010distributed}, we use the dot product of two bearings to obtain the angle dynamics. Take agent 1 as an example,
	\begin{align}
	\frac{\text{d}(\cos \alpha_1)}{\text{d}t}&=-\sin(\alpha_1)\dot{\alpha}_1=	\frac{\text{d}(z_{12}^Tz_{13})}{\text{d}t}\notag \\
	&=(\dot z_{12})^Tz_{13}+(z_{12})^T\dot z_{13}.
	\end{align}
	Considering that for $x\in \R^2,x\neq 0, \frac{\text{d}}{\text{d}t}(\frac{x}{\|x\|})=\frac{P_{x/\|x\|}}{\|x\|} \dot x$ where $P_{x/\|x\|}=I_2-\frac{x}{\|x\|}\frac{x^T}{\|x\|}$, one has
	\begin{equation}
	\dot z_{12}=\frac{P_{z_{12}}}{l_{12}}(\dot p_2-\dot p_1).
	\end{equation}
In view of (\ref{maneuver-4}), it follows
	\begin{align}
	\dot z_{12}=&\frac{P_{z_{12}}}{l_{12}}(u_2-u_1) \\ \notag
	=&\frac{P_{z_{12}}}{l_{12}}[-(\alpha_{2}-\alpha_{2}^{*})(z_{23}+z_{21}) +(\alpha_{1}-\alpha_{1}^{*})(z_{13}+z_{12})].
	\end{align}
So
	\begin{align}
	& (\dot z_{12})^Tz_{13} \\ 
	=&[(\alpha_{1}-\alpha_{1}^{*})(z_{13}+z_{12})-(\alpha_{2}-\alpha_{2}^{*})(z_{23}+z_{21})]^T\frac{P_{z_{12}}}{l_{12}}z_{13} \notag \\
	=&\frac{\sin ^2(\alpha_{1})(\alpha_{1}-\alpha_{1}^{*})\notag -(\cos\alpha_{3}+\cos\alpha_{1}\cos\alpha_{2})(\alpha_{2}-\alpha_{2}^{*})}{l_{12}}. \notag 
	\end{align}
Since
	\begin{align}
	&{\cos\alpha_{3}+\cos\alpha_{1}\cos\alpha_{2}}={-\cos(\alpha_{1}+\alpha_{2})+\cos\alpha_{1}\cos\alpha_{2}}\notag \\
	=&\sin\alpha_{2}{\sin\alpha_{1}},
	\end{align}
it follows 
	\begin{align}
	(\dot z_{12})^Tz_{13} =\frac{\sin \alpha_{1}}{l_{12}}[(\alpha_{1}-\alpha_{1}^{*})\sin (\alpha_{1})-(\alpha_{2}-\alpha_{2}^{*})\sin \alpha_{2}]. \notag
	\end{align}
	Similarly, one  gets
	\begin{align}
	&(z_{12})^T\dot z_{13} \notag \\
	=&(z_{12})^T\frac{P_{z_{13}}}{l_{13}}(u_3-u_1)\notag \\
	=&\frac{\sin \alpha_{1}}{l_{13}}[(\alpha_{1}-\alpha_{1}^{*})\sin \alpha_{1} -(\alpha_{3}-\alpha_{3}^{*})\sin \alpha_{3}]. 
	\end{align}
So agent 1's closed-loop angle dynamics are
	\begin{align} \label{taf-1}
	\dot \alpha_{1}=&-	\frac{1}{{\sin\alpha_{1}}}\frac{\text{d}(\cos \alpha_1)}{\text{d}t}=-\frac{(\dot z_{12})^Tz_{13}+(z_{12})^T\dot z_{13}}{\sin\alpha_{1}}\notag  \\
	=&-\sin (\alpha_{1})(\frac{1}{l_{12}}+\frac{1}{l_{13}})(\alpha_{1}-\alpha_{1}^{*})\notag \\
	&+\frac{\sin \alpha_{2}}{l_{12}}(\alpha_{2}-\alpha_{2}^{*})+\frac{\sin \alpha_{3}}{l_{13}}(\alpha_{3}-\alpha_{3}^{*}).
	\end{align}
Similarly,
	\begin{align}
	\dot \alpha_{2}=&-\sin (\alpha_{2})(\frac{1}{l_{21}}+\frac{1}{l_{23}})(\alpha_{2}-\alpha_{2}^{*})\notag \\
	&+\frac{\sin \alpha_{1}}{l_{21}}(\alpha_{1}-\alpha_{1}^*)
	+\frac{\sin \alpha_{3}}{l_{23}}(\alpha_{3}-\alpha_{3}^*),
	\end{align}
	\begin{align} \label{maneuver-23}
	\dot \alpha_{3}=&-\sin (\alpha_{3})(\frac{1}{l_{31}}+\frac{1}{l_{32}})(\alpha_{3}-\alpha_{3}^{*})\notag \\
	&+\frac{\sin \alpha_{1}}{l_{31}}(\alpha_{1}-\alpha_{1}^*)+\frac{\sin \alpha_{2}}{l_{32}}(\alpha_{2}-\alpha_{2}^*).
	\end{align}
	
Writing (\ref{taf-1})-(\ref{maneuver-23}) into a compact form, one has the following closed-loop triangular formation dynamics
	\begin{align}\label{maneuver-9}
	\dot e_f&=[
	\dot \alpha_1 ~~ \dot \alpha_2 ~~ \dot \alpha_3]^T
	=F(e_f)e_f \notag \\
	&=\begin{bmatrix}
	-g_1 & f_{12} & f_{13}\\
	f_{21} & -g_2 & f_{23} \\
	f_{31} & f_{32} & -g_{3}
	\end{bmatrix}\begin{bmatrix}
	\alpha_1 - \alpha_1^* \\  \alpha_2- \alpha_2^*  \\ \alpha_3- \alpha_3^*  \end{bmatrix},
	\end{align} 
	where 
	\begin{equation}
	e_f=\begin{bmatrix}
	\alpha_1 - \alpha_1^* &  \alpha_2- \alpha_2^*  & \alpha_3- \alpha_3^*
	\end{bmatrix}^T,\notag
	\end{equation}
	\begin{equation}
	g_{i}=\sin (\alpha_{i})({1}/{l_{i(i+1)}}+{1}/{l_{i(i-1)}}), \notag
	\end{equation}
	\begin{equation}
	f_{ij}=\sin (\alpha_{j})/{l_{ij}}. \notag
	\end{equation}

	To guarantee that the triangular formation system under the control law (\ref{maneuver-4}) is well defined, we first prove that  no collinearity and collision will take place under (\ref{maneuver-9}) if the formation is not collinear initially.
	\begin{lemma}\label{AR/lemma-nocollinearity}
		For the three-agent formation, if the initial formation is not collinear, it will not become collinear  for $t>0$ under the angle dynamics (\ref{maneuver-9}).
	\end{lemma}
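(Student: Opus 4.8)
The plan is to track the \emph{signed area} of the triangle formed by the three agents and to show that it can never vanish. Writing $e_{ij}:=p_j-p_i$ and using the planar cross product $u\times v:=u_1v_2-u_2v_1$, set
\[
S(t):=\tfrac{1}{2}\,(e_{12}\times e_{13}).
\]
The three agents are collinear precisely when this triangle has zero area, i.e.\ when $S(t)=0$; equivalently, since $S=\tfrac12 l_{12}l_{13}\sin\alpha_1$, collinearity means $\sin\alpha_i=0$ for some $i$. As the initial configuration is not collinear, $S(0)\neq0$, so it suffices to show that $S$ obeys a scalar linear equation $\dot S=\lambda(t)S$ with integrable coefficient $\lambda$; then $S(t)=S(0)\exp\!\left(\int_0^t\lambda\,d\tau\right)$ retains the sign of $S(0)$ and never reaches $0$.

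To derive this equation I would differentiate $S$ and insert the closed-loop velocities $\dot p_i=u_i$ from (\ref{maneuver-4}). Each agent's local frame differs from the global one by a fixed rotation $R_i$, so $R_i z_{ij}^{\mathrm{local}}=z_{ij}$ and the control law retains the same form in global coordinates, namely $u_i=-(\alpha_i-\alpha_i^*)(z_{i(i+1)}+z_{i(i-1)})$ with global bearings $z_{ij}=e_{ij}/l_{ij}$. Then
\[
\dot S=\tfrac12\big[(u_2-u_1)\times e_{13}+e_{12}\times(u_3-u_1)\big].
\]
Using $z_{ji}=-z_{ij}$, $e_{23}=e_{13}-e_{12}$, and the identities $e_{ab}\times e_{ab}=0$ and $e_{12}\times e_{13}=2S$ (whence $z_{12}\times e_{13}=2S/l_{12}$, $z_{23}\times e_{13}=-2S/l_{23}$, and so on), every cross product collapses to a multiple of $S$. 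Writing $k_i:=\alpha_i-\alpha_i^*$ and collecting terms gives
\[
\dot S=\lambda(t)S,\qquad \lambda=\frac{k_1+k_2}{l_{12}}+\frac{k_1+k_3}{l_{13}}+\frac{k_2+k_3}{l_{23}}.
\]
Because each $\alpha_i\in[0,\pi]$ and the $\alpha_i^*$ are fixed, the numerators are bounded, so $\lambda$ is finite whenever the three agents are pairwise distinct.

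From $\dot S=\lambda S$ and $S(0)\neq0$ it follows that $S(t)\neq0$ on every interval on which $\lambda$ is integrable, i.e.\ on which no collision occurs, and hence the formation never becomes collinear. The main obstacle is exactly this proviso: to make the conclusion unconditional one must preclude $l_{ij}(t)\to0$, since a collision is the only mechanism that could send $\int_0^t\lambda$ to $-\infty$ and thereby let $S$ reach $0$. I would handle this by the companion no-collision argument: if $t^\star$ denoted the first collinearity time, then on $[0,t^\star)$ the agents stay distinct, $\lambda$ is continuous there, and by continuity $S(t^\star)=S(0)\exp\!\left(\int_0^{t^\star}\lambda\,d\tau\right)\neq0$, contradicting $S(t^\star)=0$.

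As a geometric check, one verifies directly from (\ref{maneuver-9}) that the columns of $F(e_f)$ sum to zero (the first column, for instance, gives $-g_1+f_{21}+f_{31}=\sin\alpha_1(-\tfrac{1}{l_{12}}-\tfrac{1}{l_{13}}+\tfrac{1}{l_{12}}+\tfrac{1}{l_{13}})=0$), so that $\dot\alpha_1+\dot\alpha_2+\dot\alpha_3=0$ and the angle triple remains on the simplex $\{\alpha_i\ge0,\ \alpha_1+\alpha_2+\alpha_3=\pi\}$. Collinearity corresponds to reaching its boundary, i.e.\ $\sin\alpha_i=0$ for some $i$, which is exactly what the nonvanishing of $S\propto\sin\alpha_1$ precludes.
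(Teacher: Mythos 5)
Your route is genuinely different from the paper's. The paper works entirely on the angle simplex: it shows the open set $\mathcal{M}_a=\{\alpha_1+\alpha_2+\alpha_3=\pi,\ 0<\alpha_i<\pi\}$ is positively invariant by a sign/barrier argument on the closed-loop angle dynamics (\ref{maneuver-9}) near the boundary, using only the positivity of the $l_{ij}$ and the fact that the $\alpha_i^*$ are bounded away from $0$ and $\pi$. You instead track the signed area and derive a scalar linear ODE for it. Your central computation is correct: with $u_i$ from (\ref{maneuver-4}) one indeed gets $\dot S=\lambda S$ with $\lambda=\frac{k_1+k_2}{l_{12}}+\frac{k_1+k_3}{l_{13}}+\frac{k_2+k_3}{l_{23}}$, and (using $\alpha_1+\alpha_2+\alpha_3=\pi$) this even simplifies to $\lambda=-\bigl(\frac{k_3}{l_{12}}+\frac{k_2}{l_{13}}+\frac{k_1}{l_{23}}\bigr)$. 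This is an elegant reformulation: collinearity becomes the zero set of a quantity obeying a linear equation, which is quantitatively sharper than the paper's barrier argument.

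However, the closing step has a genuine gap. Your contradiction at the first collinearity time $t^\star$ asserts that continuity of $\lambda$ on $[0,t^\star)$ gives $S(t^\star)=S(0)\exp\bigl(\int_0^{t^\star}\lambda\,d\tau\bigr)\neq 0$; but continuity on a half-open interval does not make $\int_0^{t^\star}\lambda$ finite. The dangerous scenario is exactly a collision occurring \emph{at} $t^\star$: then the agents are pairwise distinct on $[0,t^\star)$, yet some $l_{ij}(t)\to 0$ as $t\to t^\star$, $\lambda$ is not integrable up to $t^\star$, and the exponential formula cannot exclude $S(t^\star)=0$ (note that a collision \emph{is} a degenerate collinearity, so this scenario is not ruled out by the definition of $t^\star$). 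Deferring to ``the companion no-collision argument'' does not repair this in the context of the paper, because the paper's no-collision result (Lemma \ref{AR/lemma-nocollision}) is proved \emph{using} the no-collinearity lemma you are trying to establish — invoking it here would be circular. To close the gap within your framework you would need a collision exclusion that is independent of collinearity, or a refinement of your own identity: for a two-agent collision ($l_{ij}\to 0$ with the third agent away) one has $\alpha_k\to 0$, hence $k_k\to-\alpha_k^*<0$ and $\lambda\to+\infty$, so $S$ would be increasing while simultaneously $S\le\frac12 l_{ij}l_{ik}\to 0$, a contradiction; but this argument, and the separate treatment of a triple collision, is not in your proposal. As written, the proof is incomplete precisely where the paper's sign-based invariance argument (which needs only $l_{ij}>0$ pointwise, not a lower bound) avoids the issue.
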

	\begin{proof}
		Consider the  manifold $\mathcal{M}_a=\{(\alpha_1,\alpha_2,\alpha_3)| {\alpha_1+\alpha_2+\alpha_3=\pi}, {0<\alpha_1<\pi} , {0<\alpha_2<\pi}, \\ \text{and} ~0<\alpha_3<\pi \}$
		which is an open set.
		To show $\mathcal{M}_a$ is positively invariant, we show that for any $\alpha_i \in \mathcal{M}_a,i=1,2,3$, it is impossible for $\alpha_i$ to escape $\mathcal{M}_a$. Consider the boundary states $\alpha_i(t)=\pi-\varepsilon_1$ with $\varepsilon_1=0^+$, $\alpha_{i+1}(t)=\varepsilon_2=0^+$, $\alpha_{i-1}(t)=\varepsilon_3=0^+$, $\varepsilon_1=\varepsilon_2+\varepsilon_3$.
		
		According to (\ref{maneuver-9}), one has 
		\begin{equation}
		\dot e_i=-g_ie_i+f_{i(i+1)}e_{i+1}+f_{i(i-1)}e_{i-1}.
		\end{equation}
		
		Since $0<\alpha_i^*<\pi$ and $\alpha_i^*$ is bounded away from 0 and $\pi$, one has
		\begin{equation}
		g_ie_i=g_i(\alpha_i-\alpha_i^*)>0,
		\end{equation}
		\begin{equation}
		f_{i(i+1)}e_{i+1}=f_{i(i+1)}(\alpha_{i+1}-\alpha_{i+1}^*)< 0,
		\end{equation}
		\begin{equation}
		f_{i(i-1)}e_{i-1}=f_{i(i-1)}(\alpha_{i-1}-\alpha_{i-1}^*)< 0,
		\end{equation}
		which implies that $\dot e_i(t)<0$. Thus when $\alpha_i(t)$ is close to $\pi$,  $\alpha_i(t)$ will decrease, which implies that $\mathcal{M}_a$ is positively invariant. 
	\end{proof}

	\begin{lemma}\label{AR/lemma-nocollision}
	For the three-agent formation, if the initial angles $\alpha_{i}\neq 0,i=1,2,3$, no collision will take place for $t>0$ under the formation control law (\ref{maneuver-4}).
\end{lemma}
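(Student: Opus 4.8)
The plan is to rule out collisions by controlling the edge lengths $l_{ij}=\|p_i-p_j\|$ together with the triangle's law of sines $l_{12}=2R\sin\alpha_3$ (and its cyclic versions), where $R$ is the circumradius. Since the hypothesis $\alpha_i\neq 0$ means the initial configuration is a genuine triangle, Lemma~\ref{AR/lemma-nocollinearity} applies and the angles remain in the open set $\mathcal M_a$ for $t>0$, so $\alpha_1+\alpha_2+\alpha_3=\pi$ holds along the trajectory. A collision is the event that some $l_{ij}\to 0$; by the law of sines this forces either (i) some angle $\alpha_i\to 0$, or, if no angle tends to $0$ (whence, using $\sum\alpha_i=\pi$, all angles stay bounded away from $\{0,\pi\}$ and all $\sin\alpha_i$ are bounded below), (ii) the circumradius $R\to 0$, a shape-preserving collapse of the whole formation to a point. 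I would dispose of the two mechanisms separately.

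Mechanism (i) is the natural lower-boundary analogue of the computation already carried out for Lemma~\ref{AR/lemma-nocollinearity} at the upper boundary $\alpha_i=\pi$. Suppose $\alpha_3\to 0^+$ while $R$ and the other two angles stay bounded away from the boundary, so that $l_{31}=2R\sin\alpha_2$ and $l_{32}=2R\sin\alpha_1$ are bounded below. Passing to the limit $\alpha_1+\alpha_2\to\pi$ in the closed-loop dynamics (\ref{maneuver-23}), the term carrying $\sin\alpha_3$ vanishes and one is left with $\dot\alpha_3\to \frac{\sin\alpha_1}{l}\,(\pi-\alpha_1^*-\alpha_2^*)=\frac{\sin\alpha_1}{l}\,\alpha_3^{*}>0$, where I used $\alpha_1^*+\alpha_2^*+\alpha_3^*=\pi$ and $\alpha_3^*\in(0,\pi)$. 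Thus $\alpha_3$ is pushed strictly upward and cannot reach $0$; the same sign computation at the other two vertices, together with the $\dot\alpha_i<0$ estimate at $\alpha_i=\pi$ already supplied by Lemma~\ref{AR/lemma-nocollinearity}, makes every face of $\partial\mathcal M_a$ repulsive. Hence, as long as $R$ stays bounded below, all $\sin\alpha_i$ stay bounded below and no edge $l_{ij}=2R\sin\alpha_{\mathrm{opp}}$ can vanish.

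This leaves mechanism (ii), the genuine obstacle: a uniform contraction with $R\to 0$, against which the boundary-repulsion argument is silent because the shape, and hence all angles, may remain perfectly regular. The structural fact I would build on is the scale invariance of the control (\ref{maneuver-4}): since $u_i$ depends only on the signed angles and the unit bearings, the velocities and the edge rates $\dot l_{ij}$ stay bounded and do not shrink with the formation, whereas the gains $g_i,f_{ij}$ in the angle dynamics (\ref{maneuver-9}) scale like $1/l_{ij}\sim 1/R$. I would turn this into a self-limiting estimate for $W=\tfrac12\|e_f\|^2$: establish $\dot W\le -\tfrac{c}{R}\,W$ for some $c>0$ while $|\dot R|\le C\sqrt{W}$, so that a putative finite-time collapse $R\to0$ would force, through $\int\tfrac{c}{R}\,dt=\infty$, such rapid decay of $W$ and hence of the velocities that $R$ in fact stops moving before reaching $0$, a contradiction. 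Intuitively, as the formation shrinks the angle corrections accelerate, the angle errors collapse, the inputs $u_i$ vanish, and the motion freezes at a positive size.

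The hardest part, and where I would concentrate the effort, is mechanism (ii), specifically the decay inequality $\dot W\le -\tfrac{c}{R}\,W$: this requires showing that the symmetric part of $F(e_f)$ in (\ref{maneuver-9}) is negative definite uniformly over the shapes visited, with a constant that survives after the common $1/R$ scaling is factored out, and then running a careful comparison (Gronwall-type) argument against $|\dot R|\le C\sqrt{W}$ to conclude that $R$ is bounded below on $[0,\infty)$. A subtlety that couples the two mechanisms is that the repulsion estimate of step (i) was justified under ``$R$ bounded below''; a fully rigorous proof must therefore interleave the two bounds, using the scale estimate to keep $R$, and thus the incident edge lengths, bounded below so that the boundary-repulsion computation remains valid throughout.
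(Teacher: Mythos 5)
Your route is genuinely different from the paper's. The paper argues by contradiction at the putative collision instant $t_1$: because (\ref{maneuver-4}) moves each agent along the internal bisector of its interior angle, a collision between $i$ and $j$ requires either a head-on approach, which forces $z_{ik}(t_1)=-z_{jk}(t_1)$ and hence a collinear configuration, excluded by Lemma \ref{AR/lemma-nocollinearity}; or both agents moving inward toward the collision point, which forces $\alpha_{i}(t_1^-)$ and $\alpha_{j}(t_1^-)$ to sit just below $\pi/2$ while below their setpoints, contradicting $\alpha_i^*+\alpha_j^*+\alpha_k^*=\pi$ with $\alpha_k^*$ bounded away from zero. Your decomposition via the law of sines --- a collision is either an angle collapse ($\sin\alpha_i\to 0$ with $R$ bounded below) or a scale collapse ($R\to 0$) --- is a sensible alternative, and your mechanism (i) is a legitimate lower-boundary analogue of the barrier computation in Lemma \ref{AR/lemma-nocollinearity} (modulo the corner cases, where two angles degenerate simultaneously, the incident edge lengths $l_{31}=2R\sin\alpha_2$, $l_{32}=2R\sin\alpha_1$ themselves vanish, and the limit $\dot\alpha_3\to\alpha_3^*/(2R)>0$ no longer follows).

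The genuine gap is mechanism (ii): the inequality $\dot W\le -\frac{c}{R}\,W$ is the load-bearing step of your argument and you never establish it --- you only identify it as the hardest part. It is not a routine verification. For $W=\frac12\|e_f\|^2$ it requires the symmetric part of $F(e_f)$ in (\ref{maneuver-9}), restricted to the invariant subspace $e_1+e_2+e_3=0$ (equivalently, of $F_s$ in (\ref{AR/eq-reduceddynamics})), to be negative definite with a constant uniform over the shapes visited once the common $1/R$ scaling is factored out. What can actually be verified pointwise on $\mathcal{M}_a$ --- and what the paper verifies only at $e_s=0$ for its linearization --- is $\mathrm{tr}(F_s)<0$ and $\det(F_s)>0$, i.e., that $F_s$ is pointwise Hurwitz; for a state-dependent matrix this implies neither negative definiteness of its symmetric part nor any decay of $\|e_s\|$, so your inequality does not follow from anything currently on the table. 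Worse, no uniform constant $c$ can exist over all of $\mathcal{M}_a$: near the faces of $\partial\mathcal{M}_a$ the entries $g_i$ and $f_{ji}$, proportional to $\sin\alpha_i$, degenerate, and that boundary region is precisely where your mechanism (i) operates; mechanism (i) in turn presupposes $R$ bounded below. Each half of your proof thus assumes the conclusion of the other, and the promised ``interleaving'' is exactly the missing continuation argument. It cannot be finessed by shrinking the domain, because the lemma carries no smallness hypothesis on the initial angle errors, so you cannot retreat to a neighborhood of $e_f=0$ where a linearization-based decay estimate would hold.
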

\begin{proof}
	Suppose on the contrary that collision may happen between agents $i$ and $j$ at $t=t_1$. Then one of the following two cases will take place.
\begin{figure}[H]
	\centering\includegraphics[width=2in]{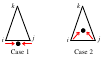} 
	\caption{Collision cases.}
	\label{AR/fig-formationcollision}
\end{figure}	
	
	For the first case, $\dot p_i(t_1)=-\gamma\dot p_j(t_1)$ where $\gamma$ is a positive constant. Note that the moving direction of agent $i$ under the control law (\ref{maneuver-4}) is always the bisector of the interior angle $\alpha_{i}$.  According to Lemma \ref{AR/lemma-nocollinearity}, no collinearity will happen for $t>0$ which implies that $z_{ik}(t)\neq -z_{jk}(t)$ for $t>0$. According to the control law (\ref{maneuver-4}), $\dot p_i(t_1)=-\gamma\dot p_j(t_1)$ requires $z_{ik}(t_1)= -z_{jk}(t_1)$ which is impossible for $t>0$.
	
	For the second case, since agents $i$ and $j$ move towards the inside of the triangle, it follows from  the control law (\ref{maneuver-4}) that $\frac{\pi}{2}-\varepsilon_1=\alpha_{i}(t_1^{-})<\alpha_{i}^*$ and $\frac{\pi}{2}-\varepsilon_2=\alpha_{j}(t_1^{-})<\alpha_{j}^*$, where $\varepsilon_1=0^+$ and $\varepsilon_2=0^+$. Then, $\alpha_{i}^*+\alpha_{j}^*+\alpha_{k}^*=\pi>\pi+\alpha_k^*-\varepsilon_1-\varepsilon_2$, which contradicts the fact that $\alpha_{k}^*$ is bounded away from 0.
\end{proof}
     	
%

	Now, we give the main result for the convergence of the triangular formation.
\begin{theorem}
	For the triangular  formation under the control law (\ref{maneuver-4}),  if $\alpha_{i}(0)\neq 0$ and the initial angle errors $e_i(0),i=1,2,3$ are sufficiently small,
	the angle errors $e_i$ and agents' control input $u_i(t)$   converge exponentially to zero.
\end{theorem}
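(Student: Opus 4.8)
The plan is to analyze the closed-loop angle-error dynamics (\ref{maneuver-9}), $\dot e_f = F(e_f)e_f$, by confining them to an invariant subspace, linearizing about $e_f=0$, and showing the linearization is Hurwitz there. First I would exploit the triangle constraint: since Lemma \ref{AR/lemma-nocollinearity} keeps the three agents non-collinear, the interior angles satisfy $\alpha_1+\alpha_2+\alpha_3=\pi$ for all $t$, and because $\alpha_1^*+\alpha_2^*+\alpha_3^*=\pi$ as well, the error vector obeys $e_1(t)+e_2(t)+e_3(t)\equiv 0$. Equivalently, every column of $F$ sums to zero, so $\mathbf 1^T F = 0$, and the dynamics are confined to the two-dimensional subspace $S=\{e:\mathbf 1^T e=0\}$ that already contains $e_f(0)$.

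Second, I would linearize on $S$. Writing $\alpha_i=\alpha_i^*+e_i$, near $e_f=0$ one has $\dot e_f = F^* e_f + O(\|e_f\|^2)$ with $F^*=F|_{\alpha=\alpha^*}$. The matrix $F^*$ inherits $\mathbf 1$ as a left null vector, so its spectrum is $\{0,\lambda_2,\lambda_3\}$ and stability on $S$ reduces to $\mathrm{Re}\,\lambda_2,\mathrm{Re}\,\lambda_3<0$. I would establish this from the trace and the second elementary symmetric function of the eigenvalues: $\mathrm{tr}\,F^*=-(g_1+g_2+g_3)<0$ gives $\lambda_2+\lambda_3<0$, while the sum of the principal $2\times2$ minors equals $\lambda_2\lambda_3$. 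Substituting the law of sines $l_{ij}=2R\sin\alpha_k$ (with $R$ the circumradius and $k$ the opposite vertex) collapses that minor sum to $\tfrac{1}{4R^2}\big(\sum_i\sin\alpha_i^*\big)\big(\sum_i 1/\sin\alpha_i^*\big)>0$, so $\lambda_2\lambda_3>0$. Hence both nonzero eigenvalues have negative real part and $F^*|_S$ is Hurwitz.

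Third, I would handle the fact that $F$ depends not only on $e_f$ but also on the instantaneous scale through the edge lengths. The same law-of-sines substitution reveals the factorization $F=\tfrac{1}{2R}\tilde F(\alpha)$, where $\tilde F$ depends on the angles alone; the scale therefore enters only as a positive scalar $\rho(t)=1/(2R(t))$ that merely reparametrizes time. Thus on $S$ the reduced system is the time-reparametrization $\mathrm{d}e_f/\mathrm{d}\tau=\tilde F(\alpha^*+e_f)e_f$ of an autonomous flow whose linearization $\tilde F^*$ is Hurwitz on $S$ by the computation above. For sufficiently small $e_i(0)$, Lemmas \ref{AR/lemma-nocollinearity} and \ref{AR/lemma-nocollision}, together with the bound $\|u_i\|\le 2|e_i|$, keep all edge lengths in a compact set bounded away from $0$ and $\infty$, so $\rho(t)$ stays between positive constants, the new time $\tau$ runs to infinity, and exponential decay in $\tau$ transfers to exponential decay in $t$. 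Local exponential stability of $e_f=0$ then follows by Lyapunov's indirect method, giving $e_i(t)\to0$ exponentially. Finally, from (\ref{maneuver-4}), $u_i=-e_i\,(z_{i(i+1)}+z_{i(i-1)})$ with $\|z_{i(i+1)}+z_{i(i-1)}\|\le 2$, whence $\|u_i(t)\|\le 2|e_i(t)|$ and the inputs converge exponentially to zero as well.

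I expect the main obstacle to be the third step: the error dynamics are not autonomous in $e_f$ alone because $F$ carries the time-varying scale, so the delicate point is to certify rigorously that the edge lengths remain in a fixed compact set for small initial errors, closing the loop between ``errors small $\Rightarrow$ lengths bounded'' and ``lengths bounded $\Rightarrow$ errors decay.'' This is precisely what makes the frozen Hurwitz property translate into genuine uniform exponential convergence rather than mere asymptotic stability.
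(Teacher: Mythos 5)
Your proposal is correct, and it reaches the same conclusion by a genuinely different route than the paper. The paper eliminates $e_3=-e_1-e_2$ to obtain the explicit two-dimensional reduced system $\dot e_s=F_s(e_s)e_s$ in (\ref{AR/eq-reduceddynamics}), then shows the $2\times 2$ linearization $L_1(\alpha^*)$ is Hurwitz by computing its trace and determinant directly (using the inequality $g_1g_2>f_{12}f_{21}$), and closes with a Lyapunov equation $-Q_1=P_1L_1+L_1^TP_1$ to get exponential decay of $e_1,e_2$ and hence of $e_3$ and $u_i$. You instead keep the full $3\times 3$ system, observe that $\mathbf{1}^TF=0$ so the flow is confined to the invariant plane $S=\{\mathbf{1}^Te=0\}$, and certify the restriction is Hurwitz spectrally: $\lambda_2+\lambda_3=\mathrm{tr}\,F^*<0$ and $\lambda_2\lambda_3$ equals the sum of principal $2\times 2$ minors, which your law-of-sines substitution collapses to the manifestly positive quantity $\tfrac{1}{4R^2}\bigl(\sum_i\sin\alpha_i^*\bigr)\bigl(\sum_i 1/\sin\alpha_i^*\bigr)$; both computations check out and are equivalent to the paper's trace/determinant conditions. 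The most valuable difference is your third step: the paper writes $L_1(\alpha^*)=F_s(e_s)|_{e_s=0}$ as if it were a constant matrix, silently ignoring that the entries $g_i,f_{ij}$ contain the time-varying lengths $l_{ij}(t)$, which are not functions of the angle errors alone. Your factorization $F=\tfrac{1}{2R(t)}\tilde F(\alpha)$ shows the extra dependence is a single positive scalar that only reparametrizes time, turning the error dynamics into a genuinely autonomous planar system in the new time $\tau$; this is a cleaner and more honest justification of the linearization than the paper offers. The one place where you (like the paper) stop short of a complete argument is the bootstrap keeping $R(t)$ in a compact interval so that $\tau\to\infty$ at a linear rate: you correctly identify it as the crux and sketch the small-initial-error continuation argument, which is the same level of rigor the paper applies only later (for agent 4) and not at all for the three-agent case, so this is not a gap relative to the paper's own standard.
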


\begin{proof}
From Lemmas \ref{AR/lemma-nocollinearity} and \ref{AR/lemma-nocollision}, no collinearity and collision will take place since $ \sin(\alpha_{i})\neq 0, l_{ij}\neq 0, \forall i,j=1,2,3$, which guarantees that the closed-loop system under the control law (\ref{maneuver-4}) is well defined. Since $e_1+e_2+e_3\equiv 0$, the angle dynamics (\ref{maneuver-9}) can be reduced to
	\begin{align}\label{AR/eq-reduceddynamics}
	\dot e_s=\begin{bmatrix}
	\dot e_1 \\ \dot e_2
	\end{bmatrix}=\begin{bmatrix}
	-(g_1+f_{13}) & f_{12}-f_{13} \\
	f_{21}-f_{23} & -(g_2+f_{23})  \\
	\end{bmatrix}\begin{bmatrix}
	e_1 \\  e_2  \end{bmatrix}=F_s(e_s)e_s.
	\end{align}

	
Let $\mathbb{U}\in \R^2$ denote  a neighborhood of the origin $\{e_1=e_2=0\}$, in which we investigate the local stability of (\ref{AR/eq-reduceddynamics}). Linearizing (\ref{AR/eq-reduceddynamics}) around the origin, we obtain
	\begin{equation}
	\dot e_s=L_1(\alpha^*)e_s,
	\end{equation}
	where  $L_1(\alpha^*)=F_s(e_s)|_{e_s=0}$. Then, one has
	\begin{equation}
	\text{tr}(L_1(\alpha^*))=-g_1-f_{13}-g_2-f_{23}<0, \label{AR/eq-tr}
	\end{equation}
	\begin{align}
	\text{det}(L_1(\alpha^*))=&(g_1+f_{13})(g_2+f_{23})-(f_{21}-f_{23})(f_{12}-f_{13})\notag \\
	>&g_1f_{23}+g_2f_{13}+f_{21}f_{13}+f_{12}f_{23}>0, \label{AR/eq-det}
	\end{align}
	where we have used the fact that $g_1g_2>f_{21}f_{12}$, and $\text{tr()}$ and $\text{det()}$ denote the trace and determinant of a square matrix, respectively. According to (\ref{AR/eq-tr}) and (\ref{AR/eq-det}), one has that $L(\alpha^*)$ is Hurwitz. 
	According to the Lyapunov Theorem\cite[Theorem 4.6]{khalil2002nonlinear}, there always exists positive definite matrices $P_1\in \R^{2\times 2}$ and $Q_1\in \R^{2\times 2}$ such that $-Q_1=P_1L_1(\alpha^*)+L^T_1(\alpha^*)P_1$. Design the Lyapunov function candidate as
	\begin{equation}
	V_1=e_s^TP_1e_s.
	\end{equation}
	Taking the time-derivative of $V_1$ yields
	\begin{equation}
	\dot V_1=-e_s^TQ_1e_s\leq -\frac{\lambda_{\min} (Q_1)}{\lambda_{\max}(P_1)}V_1.
	\end{equation}
	Then, one has
	\begin{equation}
	e_1^2+e_2^2=\|e_s\|^2\leq \frac{V_1}{\lambda_{\min}(P_1)}\leq \frac{V _1(0)}{\lambda_{\min}(P_1)}e^{-\frac{\lambda_{\min} (Q_1)}{\lambda_{\max}(P_1)}t}. \label{AR/agent1convergence}
	\end{equation}
	Also, one has
	\begin{equation}
	e_3^2=e_1^2+e_2^2+2e_1e_2\leq 2(e_1^2+e_2^2)\leq \frac{2V _1(0)}{\lambda_{\min}(P_1)}e^{-\frac{\lambda_{\min} (Q_1)}{\lambda_{\max}(P_1)}t},
	\end{equation}
	which implies that $e_i$ under the dynamics (\ref{maneuver-9}) is exponentially stable when the initial states lie in $\mathbb{U}$. According to (\ref{maneuver-4}), $\|u_i\|\leq 2|e_i|$ also converge to zero at an exponential rate.
\end{proof}

After proving the first three agents converge to the desired formation, we now look at the remaining agents.

{ 
\subsection{Adding agents 4 to N in sequence}

In this subsection, we consider that agent $i, i=4,...,N$, are added to the formation through  the Type-I vertex addition operation with two desired angles $\measuredangle j_1ij_2$ and $\measuredangle j_2ij_3$, $j_1<i$, $j_2<i$, and $j_3<i$. 
For agents $i=4,...,N$, the control algorithm is designed to be
\begin{align}
u_i=&-{(\alpha_{j_1ij_2}-\alpha_{j_1ij_2}^{*})(z_{ij_1}+z_{ij_2})}\notag \\
&- {(\alpha_{j_2ij_3}-\alpha_{j_2ij_3}^{*})(z_{ij_2}+z_{ij_3})}, \label{AR/eq-agenti}
\end{align}
where $\alpha_{j_1ij_2}^{*}\in (0,\pi)$ and $\alpha_{j_2ij_3}^{*}\in (0,\pi)$, $j_1<i,j_2<i,j_3<i$ are the two desired angles.

Now, we present the main result.
\begin{theorem}\label{AR/Theo-agents4N}
	Consider a formation of  $N>3$ agents, each of which is governed by (\ref{AR/eq-eq6}). Suppose $\dot p_1,\dot p_2,\dot p_3$ are sufficiently small and the sub-formation  of  1, 2, 3 converges to the desired triangular shape exponentially fast. For   agent $i,4\leq i\leq N$,  if  the initial distances $l_{ij_1}(0)$, $l_{ij_2}(0)$, $l_{ij_3}(0)$ are sufficiently bounded away from zero, the initial angle errors $e_{i1}(0)$ and $e_{i2}(0)$
	are sufficiently small  and $l_{ij_1}^*>l_{ij_2}^*,l_{ij_3}^*>l_{ij_2}^*$, then under (\ref{AR/eq-agenti}), the formation achieves its desired shape exponentially fast.
\end{theorem}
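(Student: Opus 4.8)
The plan is to induct on the agent index, exploiting the cascaded structure of (\ref{AR/eq-agenti}): the anchors $j_1,j_2,j_3$ of agent $i$ all have indices smaller than $i$, so their motion is independent of $p_i$. The base case is the triangular result just established, which gives exponential convergence of agents $1,2,3$ and hence $\dot p_1,\dot p_2,\dot p_3\to 0$ exponentially. For the inductive step I assume agents $1,\dots,i-1$ converge exponentially, so that $\dot p_{j_1},\dot p_{j_2},\dot p_{j_3}\to 0$ exponentially, and show agent $i$ follows suit; chaining this over $i=4,\dots,N$ yields convergence of the whole formation.

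First I would compute the closed-loop dynamics of the two angle errors $e_{i1}=\alpha_{j_1ij_2}-\alpha_{j_1ij_2}^{*}$ and $e_{i2}=\alpha_{j_2ij_3}-\alpha_{j_2ij_3}^{*}$. Repeating the dot-product calculation used for the triangle, starting from $\cos\alpha_{j_1ij_2}=z_{ij_1}^Tz_{ij_2}$ and $\dot z_{ij}=\frac{P_{z_{ij}}}{l_{ij}}(\dot p_j-\dot p_i)$ with $\dot p_i=u_i$, the derivative $\dot\alpha_{j_1ij_2}$ splits into a part driven by $u_i$ and a part driven by the anchor velocities $\dot p_{j_1},\dot p_{j_2}$; inserting (\ref{AR/eq-agenti}) and repeating for $\alpha_{j_2ij_3}$ yields
\begin{equation}
\dot e_i=F_i(e_i)\,e_i+\delta_i(t),\qquad e_i=[e_{i1},\,e_{i2}]^T, \notag
\end{equation}
where $F_i$ collects the self-terms obtained when the anchors are frozen and $\delta_i(t)$ is assembled from $\dot p_{j_1},\dot p_{j_2},\dot p_{j_3}$, hence decays exponentially by the induction hypothesis.

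The crux, and the step I expect to be the main obstacle, is to show that the linearization $L_i:=F_i(0)$ of the frozen-anchor system at the desired configuration is Hurwitz. Mirroring (\ref{AR/eq-tr})--(\ref{AR/eq-det}), I would evaluate the four entries of $L_i$ in terms of the desired angles and distances $l_{ij_1}^{*},l_{ij_2}^{*},l_{ij_3}^{*}$ and check $\text{tr}(L_i)<0$ and $\text{det}(L_i)>0$. The diagonal entries are negative, so the trace condition is immediate, but the determinant is less symmetric than in the triangle, because here agent $i$ alone is actuated while all three anchors are (asymptotically) fixed; it is precisely at this determinant estimate that the hypotheses $l_{ij_1}^{*}>l_{ij_2}^{*}$ and $l_{ij_3}^{*}>l_{ij_2}^{*}$ must be invoked, since $j_2$ is the anchor shared by both angle constraints and these inequalities keep the shared-vertex coupling from overwhelming the diagonal. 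Once $L_i$ is Hurwitz, the Lyapunov theorem \cite[Theorem 4.6]{khalil2002nonlinear} supplies positive definite $P_i,Q_i$ with $P_iL_i+L_i^TP_i=-Q_i$, and $V_i=e_i^TP_ie_i$ certifies local exponential stability of the nominal system.

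Two ingredients then close the argument. For well-definedness I would show, as in Lemmas \ref{AR/lemma-nocollinearity} and \ref{AR/lemma-nocollision}, that under the stated hypotheses on the initial distances, the inequalities $l_{ij_1}^{*}>l_{ij_2}^{*},\ l_{ij_3}^{*}>l_{ij_2}^{*}$, and small initial errors, no collision ($l_{ij}\neq 0$) and no collinearity ($\sin\alpha\neq 0$) occur along the trajectory, so $F_i$ and $\delta_i$ stay smooth and bounded. The exponentially vanishing perturbation $\delta_i(t)$ is then absorbed by a standard cascade estimate: along $\dot e_i=F_i(e_i)e_i+\delta_i(t)$ one obtains $\dot V_i\le -c_1\|e_i\|^2+c_2\|e_i\|\,\|\delta_i(t)\|$ with $\|\delta_i(t)\|\le c_3 e^{-\lambda t}$, which forces $e_i\to 0$ exponentially whenever $e_i(0)$ lies in a sufficiently small neighborhood of the origin. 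Since $\|u_i\|$ is bounded by a constant multiple of $\|e_i\|$, it too decays exponentially, reproducing $\dot p_i\to 0$ and hence the induction hypothesis at index $i$; carrying this through $i=4,\dots,N$ shows the formation reaches its desired shape exponentially fast.
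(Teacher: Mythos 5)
Your proposal follows essentially the same route as the paper's proof: induction over the agent index, error dynamics split into a frozen-anchor part $F_i(e_i)e_i$ plus an exponentially decaying perturbation driven by the lower-indexed agents' inputs, a Hurwitz check of the linearization via trace and determinant in which the hypotheses $l_{ij_1}^*>l_{ij_2}^*$ and $l_{ij_3}^*>l_{ij_2}^*$ enter exactly where you anticipate (at the shared anchor $j_2$), a Lyapunov argument absorbing the vanishing perturbation, and an integral collision-avoidance estimate keeping the dynamics well defined. The only cosmetic difference is that you bound the perturbation term linearly in $\|e_i\|$ (the standard cascade estimate), whereas the paper bounds it quadratically and folds it into the exponential decay rate; both serve the same purpose.
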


To prove this theorem, we use induction. Towards this end, we need to 
first prove that the 4-agent formation of  1 to 4 converges to the desired shape exponentially fast. For the 4-agent formation, the control algorithm (\ref{AR/eq-agenti}) can be written as  
\begin{equation}
u_4=-{(\alpha_{241}-\alpha_{241}^{*})(z_{41}+z_{42})}- {(\alpha_{342}-\alpha_{342}^{*})(z_{42}+z_{43})}. \label{icca-6}
\end{equation}

\begin{lemma}\label{AR/Lemma-agent4}
Suppose $\dot p_1,\dot p_2,\dot p_3$ are sufficiently small and the sub-formation  of  1, 2, 3 converges to the desired triangular shape exponentially fast. Under the control algorithm (\ref{icca-6}) for  agent 4,  if  the initial distances $l_{4i}(0)$ are sufficiently bounded away from zero, the initial angle errors $e_{41}(0)$ and $e_{42}(0)$ 
	are sufficiently small  and $l_{41}^*>l_{42}^*,l_{43}^*>l_{42}^*$, then $e_{41}(t)$ and $e_{42}(t)$ converges to zero exponentially fast.
\end{lemma}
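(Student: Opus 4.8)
The plan is to treat Lemma~\ref{AR/Lemma-agent4} as a cascade: agents~1,2,3 form the exponentially converging ``leader'' subsystem, while agent~4 is a single follower whose two angle errors $e_{41}=\alpha_{241}-\alpha_{241}^{*}$ and $e_{42}=\alpha_{342}-\alpha_{342}^{*}$ obey a $2\times2$ system driven both by agent~4's own motion and, as a vanishing perturbation, by the velocities $\dot p_1,\dot p_2,\dot p_3$. The analysis then parallels the triangular case, with this $2\times2$ system replacing (\ref{maneuver-9}).

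First I would derive the closed-loop angle dynamics. Writing $\cos\alpha_{241}=z_{42}^{T}z_{41}$ and $\cos\alpha_{342}=z_{42}^{T}z_{43}$, differentiating exactly as in the derivation of (\ref{taf-1}), and using $\dot z_{4j}=\frac{P_{z_{4j}}}{l_{4j}}(\dot p_j-\dot p_4)$ together with $\dot p_4=u_4$ from (\ref{icca-6}), I would divide by $-\sin\alpha_{241}$ and $-\sin\alpha_{342}$ respectively and collect terms. Since $u_4$ moves only agent~4 whereas each bearing $z_{4j}$ also depends on $p_j$, the result separates into a self-term plus a disturbance $d(t)$ that is linear in $(\dot p_1,\dot p_2,\dot p_3)$. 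Concretely one obtains $\dot e_4 = F_2(p)\,e_4 + d(t)$ with $e_4=[e_{41},e_{42}]^{T}$, where the diagonal entries of the Jacobian of the self-term at the origin are $-\sin\alpha_{241}\big(\tfrac1{l_{41}}+\tfrac1{l_{42}}\big)$ and $-\sin\alpha_{342}\big(\tfrac1{l_{42}}+\tfrac1{l_{43}}\big)$ (mirroring the $g_i$ of the triangle), and $d(t)$ vanishes once the triangle stops moving.

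Next I would study the nominal system obtained by freezing the triangle at its target shape, so $d\equiv0$ and all distances and angles take their desired values. Linearizing at the origin gives a constant matrix $L_2$; as in (\ref{AR/eq-tr})--(\ref{AR/eq-det}), I would show $\text{tr}(L_2)<0$ (immediate from the negative diagonal above) and $\text{det}(L_2)>0$. The determinant is precisely where the hypotheses $l_{41}^{*}>l_{42}^{*}$ and $l_{43}^{*}>l_{42}^{*}$ enter: the two angle constraints share the bearing $z_{42}$, so the off-diagonal coupling of $L_2$ is governed by terms scaled by $1/l_{41}$ and $1/l_{43}$ against $1/l_{42}$, and the stated ordering of the distances keeps this coupling subdominant to the diagonal, giving $\text{det}(L_2)>0$ and hence $L_2$ Hurwitz. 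A Lyapunov function $V_2=e_4^{T}P_2e_4$ from $-Q_2=P_2L_2+L_2^{T}P_2$ then certifies local exponential stability of the nominal origin, exactly as in the triangular theorem via \cite[Theorem 4.6]{khalil2002nonlinear}.

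Finally I would restore the perturbation and close the argument. For $e_{41}(0),e_{42}(0)$ sufficiently small and $l_{4j}(0)$ bounded away from zero, trajectories remain in a neighborhood where $\sin\alpha_{241},\sin\alpha_{342}\neq0$ and $l_{4j}\neq0$, so the follower system is well defined (no collision or degeneracy), in the spirit of Lemmas~\ref{AR/lemma-nocollinearity}--\ref{AR/lemma-nocollision}. Because the triangle converges exponentially, $\dot p_1,\dot p_2,\dot p_3=u_1,u_2,u_3\to0$ exponentially, so $\|d(t)\|$ decays exponentially; evaluating $\dot V_2$ along the perturbed dynamics gives $\dot V_2\le -c\,V_2+\kappa\|e_4\|\,\|d(t)\|$ with exponentially decaying $\|d(t)\|$, and the standard result on exponential stability under exponentially vanishing perturbations \cite{khalil2002nonlinear} yields exponential convergence of $e_{41},e_{42}$ to zero. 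I expect the main obstacle to be the determinant computation, namely verifying that $l_{41}^{*}>l_{42}^{*}$ and $l_{43}^{*}>l_{42}^{*}$ are exactly what is needed to tame the coupling induced by the shared ray $z_{42}$; a secondary difficulty is ensuring the trajectory never leaves the well-defined region, so that the cascade estimate holds for all $t>0$.
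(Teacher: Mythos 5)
Your proposal follows essentially the same route as the paper's proof: derive the coupled error dynamics $\dot e_4 = F_4(e_4)e_4 + W(e_4)U(u_1,u_2,u_3)$ with the leaders' velocities entering as an exponentially vanishing input, linearize the nominal (frozen-triangle) system, verify $\mathrm{tr}(L_2)<0$ and $\det(L_2)>0$ using exactly the hypotheses $l_{41}^*>l_{42}^*$ and $l_{43}^*>l_{42}^*$, and conclude with the quadratic Lyapunov function $V_2=e_4^TP_2e_4$, plus a separate argument that the distances $l_{4j}$ stay bounded away from zero so the dynamics remain well defined. The only technical difference is the final perturbation estimate: you keep the honest bound $\dot V_2\le -cV_2+\kappa\|e_4\|\,\|d(t)\|$ and invoke Khalil's vanishing-perturbation result, whereas the paper folds the perturbation into the decay rate via $\dot V_2 \le -\bigl(\lambda_{\min}(Q_2)/\lambda_{\max}(P_2) - 2\|P_2\|U_{\max}/\lambda_{\max}(P_2)\bigr)V_2$; both yield the claimed exponential convergence, and your linear-in-$\|e_4\|$ bound is arguably the cleaner justification of that step.
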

 
	\begin{proof}
	To analyze the stability of the relative angle errors $e_{41}$ and $e_{42}$ under the control algorithm (\ref{icca-6}), we first calculate the error dynamics of $e_{41}$ and $e_{42}$. Since
	\begin{align}
\frac{\text{d}(\cos \alpha_{241})}{\text{d}t}	&=-\sin(\alpha_{241})\dot{\alpha}_{241}=\frac{\text{d}(z_{41}^Tz_{42})}{\text{d}t}\notag \\
	&=(\dot z_{41})^Tz_{42}+(z_{41})^T\dot z_{42} ,\label{icca-9}
	\end{align}
and similarly
	\begin{align}
	\dot z_{41}=\frac{P_{z_{41}}}{l_{41}}(\dot p_1-\dot p_4)=\frac{P_{z_{41}}}{l_{41}}u_1-\frac{P_{z_{41}}}{l_{41}}u_4,
	\end{align}
	we have
	\begin{align}
	&(\dot z_{41})^Tz_{42}\notag \\
	=&u_1^T\frac{P_{z_{41}}}{l_{41}}z_{42}-\frac{u_4^T}{l_{41}} (I_2-z_{41}z_{41}^T)z_{42}\notag \\
	=&u_1^T\frac{P_{z_{41}}}{l_{41}}z_{42}+\frac{u_4^Tz_{41}\cos \alpha_{241}-u_4^Tz_{42}}{l_{41}}\notag \\
	=&u_1^T\frac{P_{z_{41}}}{l_{41}}z_{42}-\frac{[(\alpha_{241}-\alpha_{241}^{*})(\cos \alpha_{241}+\cos^2 \alpha_{241})}{l_{41}}] \notag \\
	&-\frac{[(\alpha_{342}-\alpha_{342}^{*})(\cos^2 \alpha_{241}+\cos \alpha_{241}\cos \alpha_{341})]}{l_{41}} \notag \\
	&+\frac{[(\alpha_{241}-\alpha_{241}^{*})(\cos \alpha_{241}+1)}{l_{41}}] \notag \\
	&+\frac{[(\alpha_{342}-\alpha_{342}^{*})(1+\cos \alpha_{342})]}{l_{41}} \notag \\
	=&u_1^T\frac{P_{z_{41}}}{l_{41}}z_{42}+\frac{(\alpha_{241}-\alpha_{241}^{*})\sin^2\alpha_{241}}{l_{41}}\notag \\
	&+\frac{(\alpha_{342}-\alpha_{342}^{*})(\sin^2\alpha_{241}+\sin^2\alpha_{241}\cos \alpha_{342})}{l_{41}}\notag \\
	&+\frac{(\alpha_{342}-\alpha_{342}^{*})\cos \alpha_{241}\sin \alpha_{241}\sin \alpha_{342}}{l_{41}},
	\end{align}
and
	\begin{align}
	&z_{41}^T\dot z_{42}\notag \\
	=&u_2^T\frac{P_{z_{42}}}{l_{42}}z_{41}-z_{41}^T\frac{I_2-z_{42}z_{42}^T}{l_{42}}u_{4}\notag \\
	=&u_2^T\frac{P_{z_{42}}}{l_{42}}z_{41}+\frac{(\alpha_{241}-\alpha_{241}^{*})\sin^2\alpha_{241}}{l_{42}}\notag \\
	&+\frac{(\alpha_{342}-\alpha_{342}^{*})(-\sin \alpha_{241}\sin \alpha_{342}) }{l_{42}}.
	\end{align}

	Then from  (\ref{icca-9}), it follows
	\begin{align}
	\dot \alpha_{241}=&-\frac{1}{\sin\alpha_{241}}\frac{\text{d}(\cos \alpha_{241})}{\text{d}t}=-\frac{\dot z_{41}^Tz_{42}+z_{41}^T\dot z_{42}}{\sin\alpha_{241}} \notag \\
	=&-\sin (\alpha_{241})(\frac{1}{l_{41}}+\frac{1}{l_{42}})(\alpha_{241}-\alpha_{241}^{*})\notag \\
	&-\frac{(\alpha_{342}-\alpha_{342}^{*})(\sin \alpha_{241}+\sin \alpha_{341}) }{l_{41}}+\frac{u_1^TP_{z_{41}}z_{42}}{l_{41}}\notag \\
	&+\frac{(\alpha_{342}-\alpha_{342}^{*}) \sin\alpha_{342}}{l_{42}}+\frac{u_2^TP_{z_{42}}z_{41}}{l_{42}}.\label{icca-37}
	\end{align}

	Analogously, 
	\begin{align} \label{icca-36}
	\dot \alpha_{342}=&-\frac{1}{\sin\alpha_{342}}\frac{\text{d}(\cos \alpha_{342})}{\text{d}t}=-\frac{\dot z_{42}^Tz_{43}+z_{42}^T\dot z_{43}}{\sin\alpha_{342}} \\
	=&-\sin(\alpha_{342})(\frac{1}{l_{43}}+\frac{1}{l_{42}})(\alpha_{342}-\alpha_{342}^{*})\notag \\
	&-\frac{(\alpha_{241}-\alpha_{241}^{*})(\sin \alpha_{342}+\sin \alpha_{341}) }{l_{43}}\notag \\
	&+\frac{(\alpha_{241}-\alpha_{241}^{*})\sin\alpha_{241} }{l_{42}}+u_2^T\frac{P_{z_{42}}}{l_{42}}z_{43}+u_3^T\frac{P_{z_{43}}}{l_{43}}z_{42}. \notag
	\end{align}
	
	By combining (\ref{icca-37}) and (\ref{icca-36}), one has the compact form
	\begin{align}\label{icca-10}
	\dot e_4&=[
	\dot \alpha_{241} ~~ \dot \alpha_{342} ]^T\notag \\
	&=F_4(e_4)
	e_{4}
	+W(e_4)
	U(u_1,u_2,u_3)
	\notag \\
	&=\begin{bmatrix}
	j_{11} & j_{12} \\
	j_{21} & j_{22}  \\
	\end{bmatrix}\begin{bmatrix}
	e_{41}\\  e_{42}  \end{bmatrix}+\begin{bmatrix}
	w_{11} & w_{12} & w_{13} \\
	w_{21} &w_{22} & w_{23}
	\end{bmatrix}\begin{bmatrix}
	u_1 \\ u_2 \\u_3
	\end{bmatrix},
	\end{align} 
	where 
	$j_{11}=-\frac{\sin\alpha_{241}}{l_{41}}-\frac{\sin\alpha_{241}}{l_{42}}$, $j_{22}=-\frac{\sin\alpha_{342}}{l_{43}}-\frac{\sin\alpha_{342}}{l_{42}}$,
	$j_{12}=-\frac{\sin(\alpha_{241})+\sin (\alpha_{341}) }{l_{41}}+\frac{\sin\alpha_{342} }{l_{42}}$,
	$j_{21}=-\frac{\sin (\alpha_{342})+\sin( \alpha_{341}) }{l_{43}}+\frac{\sin\alpha_{241}}{l_{42}}$,
	$w_{11}=z_{42}^T\frac{P_{z_{41}}}{l_{41}}$, $w_{12}=z_{41}^T\frac{P_{z_{42}}}{l_{42}}$, $w_{13}=0$, $w_{21}=0$, $w_{22}=z_{43}^T\frac{P_{z_{42}}}{l_{42}}$,  $w_{23}=z^T_{42}\frac{P_{z_{43}}}{l_{43}}$, and $U(u_1,u_2,u_3)=[u_1^T, u_2^T, u_3^T]^T$.

	For $U(u_1,u_2,u_3)$, one has 
	\begin{align} \label{AR/agentsinput123}
	\|U(u_1,u_2,u_3)\|^2&=\|u_1\|^2+\|u_2\|^2+\|u_3\|^2\notag \\
	&\leq 4(e_1^2+e_2^2+e_3^2)\notag \\
	&\leq \frac{12V _1(0)}{\lambda_{\min}(P_1)}e^{-\frac{\lambda_{\min} (Q_1)}{\lambda_{\max}(P_1)}t},
	\end{align}
	which implies that $U(u_1,u_2,u_3)$ exponentially converges to zero. Since $e_i(0),i=1,2,3$ is sufficiently small, $V_1(0)$ is sufficiently small. Therefore, $\|U(u_1,u_2,u_3)\|$  is always  sufficiently small  and there exists a finite time $T$ such that 
	\begin{equation}
	\|U(u_1(T),u_2(T),u_3(T))\|^2\leq \frac{12V _1(0)}{\lambda_{\min}(P_1)}e^{-\frac{\lambda_{\min} (Q_1)}{\lambda_{\max}(P_1)}T}=\varepsilon_4, \notag
	\end{equation}
	where $\varepsilon_4=0^{+}$.
	
	When $\|W(e_4)\|$ is bounded and $\|U(u_1,u_2,u_3)\|$ is sufficiently small and exponentially converges to zero, one can first consider the stability of the following system
	\begin{equation}
	\dot e_4=F_4(e_4)e_4. \label{AR/eq-nonminal}
	\end{equation}
Since the initial angle errors $e_{41}(0)$ and $e_{42}(0)$  are sufficiently small, it can be easily verified that in a small neighborhood of the origin $\{e_{41}=0,e_{42}=0\}$, (\ref{AR/eq-nonminal}) can be linearized by 
	\begin{equation}
	\dot e_{4}=L_{2}(\alpha^*)e_{4},
	\end{equation}
	where 
	$L_2(\alpha^*)=F_4(e_4)|_{e_4=0}$. Then, one has
	\begin{equation}
	\text{tr}(L_2(\alpha^*))=j_{11}(\alpha^*)+j_{22}(\alpha^*)<0,
	\end{equation}
	\begin{align}
	&\text{det}(L_2(\alpha^*))\\
	=&j_{11}j_{22}-j_{12}j_{21} \notag\\
	=&\frac{l_{41}^*(\sin \alpha_{241} ^* \sin \alpha_{342}^*+\sin^2 \alpha_{342}^*+\sin \alpha_{342}^*\sin\alpha_{341}^*)}{l_{41}^*l_{42}^*l_{43}^*}\notag \\
	&+\frac{l_{43}^*(\sin \alpha_{241} ^* \sin \alpha_{342}^*+\sin^2 \alpha_{241}^*+\sin \alpha_{241}^*\sin\alpha_{341}^*)}{l_{42}^*l_{41}^*l_{43}^*}\notag \\
	&-\frac{l_{42}^*(\sin \alpha_{241}^*\sin\alpha_{341}^*+\sin \alpha_{341}^*\sin\alpha_{342}^*+\sin^2 \alpha_{341}^*)}{l_{41}^*l_{42}^*l_{43}^*}. \notag
	\end{align}
	
	Then, if $\text{det}(L_2(\alpha^*))>0$, one has that   $L_2(\alpha^*)$ is Hurwitz. One can check that $\text{det}(F(\alpha^*))>0$ if $l_{41}^*>l_{42}^*$ and $l_{43}^*>l_{42}^*$ hold because 
	\begin{equation}
	l_{43}^*\sin \alpha_{241}^*\sin \alpha_{341}^*>l_{42}^*\sin \alpha_{241}^*\sin \alpha_{341}^*,
	\end{equation}
	\begin{equation}
	l_{41}^*\sin \alpha_{341}^*\sin \alpha_{342}^*>l_{42}^*\sin \alpha_{341}^*\sin \alpha_{342}^*,
	\end{equation}
	and 
	\begin{align}
	\sin^2 \alpha_{341}^*=&[\sin \alpha_{241}^*\cos \alpha_{342}^*+\cos \alpha_{241}^*\sin \alpha_{342}^*]^2\notag \\
	=&\sin^2 \alpha_{241}^*\cos^2 \alpha_{342}^*+\cos^2 \alpha_{241}^*\sin^2 \alpha_{342}^*\notag \\
	&+2\sin \alpha_{241}^*\cos \alpha_{342}^*\cos \alpha_{241}^*\sin \alpha_{342}^*,
	\end{align}
	and 
	\begin{equation}
	l_{41}^*\sin^2 \alpha_{342}^*>l_{42}^*\sin^2 \alpha_{342}^*\cos^2 \alpha_{241}^*,
	\end{equation}
	\begin{equation}
	l_{43}^*\sin^2 \alpha_{241}^*>l_{42}^*\sin^2 \alpha_{241}^*\cos^2 \alpha_{342}^*,
	\end{equation}
	\begin{align}
	&l_{41}^*\sin \alpha_{241} ^* \sin \alpha_{342}^*+l_{43}^*\sin \alpha_{241} ^* \sin \alpha_{342}^*\notag \\
	&>2l_{42}^*\sin \alpha_{241} ^* \sin \alpha_{342}^*\notag \\
	&>2l_{42}^*\sin \alpha_{241}^*\cos \alpha_{342}^*\cos \alpha_{241}^*\sin \alpha_{342}^*.
	\end{align}
	
	When $L_2(\alpha^*)$ is Hurwitz, there always exists positive definite matrices $P_2\in \R^{2\times 2}$ and $Q_2\in \R^{2\times 2}$ such that $-Q_2=P_2L_2(\alpha^*)+L^T_2(\alpha^*)P_2$. Design the Lyapunov function candidate as 
	\begin{equation}
	V_2=e_4^TP_2e_4.
	\end{equation}
	Taking the time-derivative of $V_2$ along (\ref{icca-10}) yields
	\begin{align} \label{AR/agent4lyapunov}
	\dot V_2&=-e_4^TQ_2e_4+2e_4^TP_2W(e_4)
	U(u_1,u_2,u_3)\notag \\
	&\leq -\frac{\lambda_{\min} (Q_2)}{\lambda_{\max}(P_2)}V_2+2\|P_2\|\|U(u_1,u_2,u_3)\|\|e_4\|^2 \notag \\
	&	\leq -(\frac{\lambda_{\min} (Q_2)}{\lambda_{\max}(P_2)}-\frac{2\|P_2\|U_{\max}}{\lambda_{\max}(P_2)})V_2,
	\end{align}
	where $U_{\max}=\|U(u_1(0),u_2(0),u_3(0))\|$.
	Then, one has
	\begin{equation}
	\|e_4\|^2\leq \frac{V _2(0)}{\lambda_{\min}(P_2)}e^{-(\frac{\lambda_{\min} (Q_2)}{\lambda_{\max}(P_2)}-\frac{2\|P_2\|U_{\max}}{\lambda_{\max}(P_2)})t}. \label{AR/agent4convergence}
	\end{equation}
	Since $U_{\max}$ is sufficiently small, one has that $e_4$ converges exponentially to zero when  agent 4 stays around its desired location initially. According to (\ref{icca-6}), $u_4$ also converges exponentially to zero.	
Since  $e_i$ and $\|u_i\|, i=1,...,4$ always are   sufficiently small  and exponentially converge to zero, there always exists a finite time $T$ such that $e_i\leq \varepsilon_5(T)$ and $\|u_i\|\leq \varepsilon_6(T)$ with $\varepsilon_5(T)=0^+$ and $\varepsilon_6(T)=0^+$. 

	To guarantee that $\|W(e_4)\|$ is bounded and control law (\ref{icca-6}) is well defined, the collision between agent 4 and agents 1 to 3 should be avoided. Take agent 1 as an example, one has
\begin{align}
&\|p_4(t)-p_1(t)\|\notag \\
=&\|p_4(0)+\int_{0}^{t}u_4(s)\text{d}s-p_1(0)-\int_{0}^{t}u_1(s)\text{d}s\| \notag \\
\geq & \|p_4(0)-p_1(0)\|-\int_{0}^{t}\|u_1(s)-u_4(s)\|\text{d}s\notag \\
\geq& l_{14}(0)-2\int_{0}^{t}(|e_1(s)|+|e_{41}(s)|+|e_{42}(s)|)\text{d}s.\notag 
\end{align}
Since $l_{4i}$ are sufficiently bounded away from zero, there always exists a finite time $T$ such that in the time interval $[0,T)$ there is no collision between agent 4 and agents 1 to 3. Then, according to (\ref{AR/agent1convergence}) and (\ref{AR/agent4convergence}), one has
\begin{align}
&\|p_4(T)-p_1(T)\| \notag\\
\geq& l_{14}(0)-2\int_{0}^{T}(|e_1(s)|+|e_{41}(s)|+|e_{42}(s)|)\text{d}s\notag \\
\geq & l_{14}(0)-2[\sqrt{\frac{V _1(0)}{\lambda_{\min}(P_1)}}(1-e^{-\frac{\lambda_{\min} (Q_1)}{2\lambda_{\max}(P_1)}T}) \notag\\
&+\sqrt{ \frac{2V _2(0)}{\lambda_{\min}(P_2)}}(1-e^{-(\frac{\lambda_{\min} (Q_2)}{2\lambda_{\max}(P_2)}-\frac{\|P_2\|U_{\max}}{\lambda_{\max}(P_2)})T})].
\end{align}
where we have used the fact that $|e_{41}|+|e_{42}|\leq \sqrt{2(e_{41}^2+e_{42}^2)}$.
Since $V_1(0)$ and $V_2(0)$ are sufficiently small and $l_{14}(0)$ is sufficiently bounded away from zero, one has $\|p_4(T)-p_1(T)\|>0$. Then, we extend $T$ to $T'=T+\varepsilon_7>T$ with small positive $\varepsilon_7$. For the time period $[T,T')$, one also has that $\int_{T}^{T'}(|e_1(s)|+|e_{41}(s)|+|e_{42}(s)|)\text{d}s\ll \int_{0}^{T}(|e_1(s)|+|e_{41}(s)|+|e_{42}(s)|)\text{d}s$ is sufficiently small and $\|p_4(T')-p_1(T')\|>0$. Since $e_1(t)$, $e_{41}(t)$ and $e_{42}(t)$ converge at an exponential speed, one can extend $T'$ to infinity according to \cite[Theorem 2.1]{hale1980ordinary}. So,  $l_{41}(t)=\|p_4(t)-p_1(t)\|>0$ for $t>0$, which implies that $\|W(e_4)\|$ is bounded and (\ref{icca-10}) is well defined.
The proof for 4-agent formation is completed.

	\textbf{Proof of Theorem 8}
From Lemma \ref{AR/Lemma-agent4}, 4-agent formation achieves the desired  shape exponentially fast.

	Suppose for a $4<k<N$, the $k$-agent formation converges to the desired  shape exponentially fast. We need to prove that for $(k+1)$-agent formation, the relative angle errors $e_{(k+1)1}=\alpha_{j_1(k+1)j_2}-\alpha_{j_1(k+1)j_2}^{*}$ and  $e_{(k+1)2}=\alpha_{j_2(k+1)j_3}-\alpha_{j_2(k+1)j_3}^{*}$ converges to zero exponentially fast. Similar to the proof  from (\ref{icca-6}) to (\ref{AR/agent4lyapunov}), one has that the angle errors $e_{(k+1)1}$ and $e_{(k+1)2}$  exponentially converge to zero. Therefore, the control algorithm (\ref{AR/eq-agenti}) can locally stabilize agent $k+1$, i.e., the $(k+1)$-agent formation converge to the desired shape exponentially fast. So, from induction,  $N$-agent formation converges to the desired formation shape exponentially fast.	The proof for Theorem 	\ref{AR/Theo-agents4N} is completed.
\end{proof}

%
%
%
%
%

\begin{remark}
	Note that the control laws (\ref{maneuver-4}) and (\ref{AR/eq-agenti}) can be described by a unified form
	\begin{equation}
	u_i=-\sum\nolimits_{(j,i,k)\in \mathcal{A}}{(\alpha_{jik}-\alpha_{jik}^*)(z_{ij}+z_{ik})}. \label{AR/eq-uni1}
	\end{equation}
 Therefore, the unified control algorithm (\ref{AR/eq-uni1}) can locally stabilize the global angle rigid formation shape constructed through a sequence of Type-I vertex additions (Case 3) from a generically angle rigid 3-vertex angularity. {Because we aim at obtaining  local stability for multi-agent  formations in Section IV, it is reasonable that we only consider the range of the desired angles belonging to $(0,\pi)$.}
	\end{remark}
	
	\begin{remark}
		Although each agent's position in (\ref{AR/eq-eq6}) is described in the global coordinate system, it is not used in the control algorithm (\ref{AR/eq-uni1}). The control algorithm (\ref{AR/eq-uni1}) can be realized in each agent's local coordinate system since (\ref{AR/eq-uni1}) can be equivalently written as
		\begin{equation}
	R_iu_i^b=-\sum\nolimits_{(j,i,k)\in \mathcal{A}}{(\alpha_{jik}-\alpha_{jik}^*)R_i(z_{ij}^b+z_{ik}^b)},\label{AR/eq-uni2}
	\end{equation}
	where $R_i\in SO(2)$ is the rotation matrix from agent $i$'s local
	coordinate system to the global coordinate system, $u_i^b$
	is the
	controller input applied in agent $i$'s local coordinate system,
	and $z_{ij}^b, z_{ik}^b$ are the local bearings measured in agent $i$'s local
	coordinate system. Since ${(\alpha_{jik}-\alpha_{jik}^*)}$ is a scalar, (\ref{AR/eq-uni2}) and (\ref{AR/eq-uni1}) are equivalent.
\end{remark}
}

\section{Conclusion}
{ 
In this study, we have proposed the angle rigidity theory for the stabilization of planar formations. The notion of angularity has been first defined to describe the multi-point framework with   angle constraints. The established angle rigidity has shown to be a local property because of the existence of flex ambiguity. To check whether an angularity is globally rigid, some sufficient conditions have been proposed. The infinitesimal angle rigidity has been developed based on the trivial motions of the angularity. A sufficient and necessary condition for infinitesimal angle rigidity has been investigated by checking the rank of the angle rigidity matrix. Based on the developed angle rigidity theory, we have also demonstrated how to stabilize a multi-agent planar formation using only angle measurements, which can be realized in each agent's local coordinate system. The exponential convergent rate of angle errors and the collision avoidance between specified agents have  also been proved. Future work will focus on the sufficient and necessary conditions for global angle rigidity and the combinatorial conditions for minimal and infinitesimal angle rigidity.
}

\ifCLASSOPTIONcaptionsoff
\newpage \fi


\bibliographystyle{IEEEtran}
\bibliography{IEEEabrv,Ref_AR}
\end{document}